\newtheorem{thm}{Theorem}
\newtheorem{lem}{Lemma}
\newtheorem{cor}{Corollary}
\theoremstyle{definition}
\newtheorem{definition}{Definition}
\gdef\dash---{\thinspace---\hskip.16667em\relax} 
\gdef\op|{\,|\;}
\newcommand{\figref}[1]{Fig.~\ref{#1}}
\newcommand{\bfno}[1]{\noindent{\bf #1}}
\newcommand{\Time}{\mathcal{T}}
\newcommand{\Bool}{\mathbb{B}}
\newcommand{\li}[1]{\left({#1}\right)}
\newcommand{\IR}{\mathds{R}}
\newcommand{\bool}{\Bool}
\newcommand{\myvspace}[1]{}
\tikzstyle{binary place}=[place,circle, double]
\tikzstyle{node}=[circle,draw=black,thick,minimum size=9mm]
\tikzstyle{dest}=[circle,draw=black!50,fill=black!20,thick,minimum
\tikzstyle{post}=[->,thick]
\tikzstyle{pre}=[<-,thick]
\tikzstyle{every transition}=[fill,minimum width=1cm,minimum height=2mm]
\tikzstyle{Atransition}=[transition,fill,minimum width=1cm,minimum height=2mm]
\tikzstyle{Otransition}=[transition,fill=white,minimum width=1cm,minimum height=2mm]
\tikzstyle{THtransition}=[transition,fill=white,minimum width=4mm,minimum height=1cm]
\newcommand{\emdot}[1]{\fill[black] #1 circle (0.5mm) node[left] {};%
  \fill[white] #1 circle (0.2mm) node[left] {};%
}
\newcommand{\fidot}[1]{\fill[black] #1 circle (0.5mm) node[left] {};}
\newcommand{\feline}[3]{
  \path [draw,line width=0.5mm,-] (#1,#3) -- (#2,#3);%
  \fidot{(#1,#3)}%
  \emdot{(#2,#3)}%
}
\newcommand{\neline}[3]{
  \path [draw,line width=0.5mm,-] (#1,#3) -- (#2,#3);%
  \emdot{(#2,#3)}%
}
\newcommand{\fnline}[3]{
  \path [draw,line width=0.5mm,-] (#1,#3) -- (#2,#3);%
  \fidot{(#1,#3)}%
}
\newcommand{\tcross}[2]{%
  \draw plot[mark=x, mark size=2mm] coordinates{#1} node[above=0mm]{#2};%
}
\newcommand{\tcrossAR}[2]{%
  \draw plot[mark=x, mark size=2mm] coordinates{#1} node[above right=0mm]{#2};%
}
\tikzstyle{Tdelay} = [draw, rectangle, rounded corners,
\tikzstyle{Tfunction} = [draw, rectangle,
\tikzstyle{Tsignal} = [draw,fill=black,circle, size=1mm]
\tikzstyle{ra} = [draw,thick,double,double distance=1.0pt,->]
\tikzstyle{r} = [draw,->,line width=0.5pt]
\title{Unfaithful Glitch Propagation in Existing Binary Circuit Models}
\author{Matthias Függer\textsuperscript{1} \and Thomas Nowak\textsuperscript{2}
\and Ulrich Schmid\textsuperscript{1}}
\date{\textsuperscript{1} ECS Group, TU Wien, Austria\\\textsuperscript{2} LIX,
\'Ecole
polytechnique, France}
\begin{document}
\maketitle

\begin{abstract}
We show that no existing continuous-time, binary value-domain  model
     for digital circuits is able to correctly capture glitch
     propagation.
Prominent examples of such models are based on pure delay channels
     (P), inertial delay channels (I), or the elaborate PID channels
     proposed  by Bellido-D\'{i}az et al.
We accomplish our goal by considering the solvability/non-solvability
     border of a simple problem called Short-Pulse Filtration (SPF),
     which is closely related to arbitration and synchronization.
On one hand, we prove that SPF is solvable in bounded time in any such
     model that provides channels with non-constant delay, like I and
     PID.
This is in opposition to the impossibility of solving bounded SPF in
     real (physical) circuit models.
On the other hand, for binary circuit models with constant-delay
     channels, we prove that SPF cannot be solved even in unbounded
     time; again in opposition to physical circuit models.
Consequently, indeed none of the binary value-domain models proposed
     so far (and that we are aware of) faithfully captures glitch
     propagation of real circuits.
We finally show that these modeling mismatches do not hold for the
     weaker eventual SPF problem.
\end{abstract}

\section{Introduction}

Binary value-domain models that allow to model glitch
propagation have always been of interest, especially in
asynchronous design \cite{unger71}: Pure delay channels and inertial delay
channels, which propagate input pulses with some constant delay
only when they exceed some minimal duration, are still the
basis of most digital timing analysis approaches and tools. The
tremendous advances in digital circuit technology, in 
particular increased speeds and reduced voltage swings,
raised concerns about the accuracy of these models~\cite{BDJCAVH00}.
For example, neither pure nor inertial delay models can express 
the well-known phenomenon of propagating glitches that decay
from stage to stage, which is particularly important for
analyzing high-frequency pulse trains or oscillatory metastability~\cite{marino81}.

At the same time, the steadily increasing complexity of contemporary
digital circuits fuels the need for fast digital timing analysis 
techniques: Although accurate Spice models, which facilitate very
precise analog-level simulations, are usually available
for those circuits, the achievable simulation times are 
prohibitive. Refined digital timing analysis models like the PID model proposed by
Bellido-D\'{\i}az et~al.\ \cite{BDJCAVH00}, which is both fast and more accurate,
are hence very important from a practical perspective \cite{BJV06}.

The interest in binary models that faithfully model
glitch propagation and even metastability has also been
stimulated recently by the increasing importance of incorporating fault-tolerance
in circuit design \cite{Con03}: Reduced
voltage swings and smaller critical charges make circuits more susceptible to
particle hits, crosstalk, and electromagnetic
interference~\cite{GEBC06,MA01}. Since single-event transients, caused
by an ionized particle hitting a reverse-biased transistor, just manifest
themselves as short glitches, accurate propagation models are important
for assessing soft error rates, in particular, for asynchronous circuits.
After all, if
system-level fault-tolerance techniques like triple modular redundancy are
used for transparently masking value failures, the only remaining issue
are timing failures, among which glitches are the most problematic ones.

For example, the DARTS Byzantine fault-tolerant distributed clock 
generation approach~\cite{FS12:DC} employs standard 
asynchronous circuit components, like micropipelines \cite{Suth89},
which store clock ticks received from other nodes; a new clock tick is
generated when sufficiently many micropipelines are non-empty. Clearly,
since any ``wait-for-all'' mechanism may deadlock in the presence of faulty 
components, handshaking had to be replaced by threshold logic in conjunction with some bounded delay 
assumptions. This way, DARTS can tolerate arbitrary behavior of Byzantine faulty nodes, except
for the generation of pulses with a duration that drive the Muller C-elements of a
pipeline into metastability. Analyzing the propagation of such pulses
along a pipeline is thus important in order to assess the achievable resilience
against such threats \cite{FFS09:ASYNC09}.
The situation is even worse in case of self-stabilizing algorithms 
\cite{Dol00}, which must be able to recover from an arbitrary initial/error state: Neither
handshaking nor any bounded delay condition can be resorted to during
stabilization in an algorithm like the one presented by Dolev et al.~\cite{DFLS11:sss}. Consequently,
glitches and the possibility of metastability cannot be avoided.

As a consequence, discrete-value circuit models, analysis techniques and 
supporting tools for a fast but nevertheless accurate glitch and metastability 
propagation analysis will be a key issue in the design of future VLSI circuits. 
In this paper, we rigorously prove that none of the existing binary-value
candidate models proposed in the past captures glitch propagation adequately.
Searching for alternative models is hence an important challenge for future
research on asynchronous circuits.

\medskip

\bfno{Detailed contributions.}
In Section~\ref{sec:physical}, we define the Short-Pulse Filtration (SPF)
problem in the physical
circuit model of Marino and recall the behavior of physical circuits with
respect to SPF.
That is, we show that unbounded SPF is solvable with physical circuits while
bounded SPF is not.
The SPF problem is closely related to glitch  propagation, as it is
     essentially the problem of building a one-shot  inertial channel.

In Section~\ref{sec:model}, we present a generic binary value-domain
     model for digital clocked and clockless circuits, and
     introduce the SPF problem.
Our generic model comprises zero-time logical gates interconnected by
     channels that encapsulate model-specific propagation delays and
     related decay effects.
Non-zero time logical gates can be expressed by appending channels
     with delay at the gate's inputs and outputs.
The simplest channel is a pure delay channel, which propagates its
     input signal with a fixed delay and without any decay, i.e., a
     pulse has the same duration at the channel's input and output.

In Section~\ref{sec:P}, we prove that even unbounded SPF is
     unsolvable when only pure, i.e., constant-delay channels are
     available.
This is in contrast with the solvability result with physical circuits of
Section~\ref{sec:physical}.

In Section~\ref{sec:single}, we turn our attention to a generalization of
     constant-delay channels, termed \emph{bounded single-history channels},
     which are FIFO channels with a generalized delay function that
     also takes into consideration the last output transition.
We distinguish between {\em forgetful\/} and {\em non-forgetful\/}
     single-history channels, depending on their behavior when a pulse
     disappears at the output due to decay effects.
All existing binary models we are aware of can be expressed as
     single-history channels with specific delay functions: A pure
     delay channel (P) as either a forgetful or non-forgetful
     single-history channel, a classical inertial delay channel (I) as
     a forgetful single-history channel, and the channel model
     proposed by Bellido-D\'{i}az et al.~\cite{BDJCAVH00} (PID), 
     which additionally has a decay component, as a non-forgetful
     single-history channel.

In Section~\ref{sec:PD}, we prove that bounded SPF is solvable if just
     a single forgetful or non-forgetful single-history channel with
     non-constant delay is available.
However, this is again in contradiction with the result of Section~\ref{sec:physical}
showing impossibility of bounded SPF with physical circuits.

In Section~\ref{sec:eSPF}, we prove that weakening SPF to eventual SPF
     fails to witness the above modeling mismatch: Eventual SPF can by
     solved both with single-history and physical channels.

\figref{fig:results} summarizes our (im)possibility results.

\def\mysep{0.6} 
\begin{figure}[bht]
\centering
\begin{tikzpicture}

\draw (0,{3*\mysep}) -- ++(6,0);
\draw (0,{2*\mysep}) -- ++(6,0);
\draw (0,{\mysep}) -- ++(6,0);
\draw (0,0) -- ++(6,0);

\draw (6.0,0) -- ++(0,{3*\mysep});
\draw (4.5,0) -- ++(0,{3*\mysep});
\draw (3.0,0) -- ++(0,{3*\mysep});
\draw (1.5,0) -- ++(0,{3*\mysep});
\draw (0.0,0) -- ++(0,{3*\mysep});

\node[left] at (-0.3,{2.5*\mysep}) {bounded SPF};
\node[left] at (-0.3,{1.5*\mysep}) {SPF};
\node[left] at (-0.3,{0.5*\mysep}) {eventual SPF};

\node at ({0+0.75},{-0.4}) {\small constant};
\node at ({1.5+0.75},{-0.44}) {\small forgetful};
\node at ({2*1.5+0.75},{-0.4}) {\small non-};
   \node at ({2*1.5+0.75},{-0.7}) {\small forgetful};
\node at ({3*1.5+0.75},{-0.4}) {\small physical};

\node at (0.75,{2.5*\mysep}) (a) {X};
\node at (0.75,{1.5*\mysep}) (b) {X};
\node at (0.75,{0.5*\mysep}) {\Large $\checkmark$};

\draw[->] ($(b.center)+(0,0.2)$) -- ($(a.center)+(0,-0.2)$);

\node at ({1.5+0.75},{2.5*\mysep}) (a2) {\Large $\checkmark$};
\node at ({1.5+0.75},{1.5*\mysep}) (b2) {\Large $\checkmark$};
\node at ({1.5+0.75},{0.5*\mysep}) (c2) {\Large $\checkmark$};

\draw[->] ($(a2.center)+(0,-0.2)$) -- ($(b2.center)+(0,0.2)$);
\draw[->] ($(b2.center)+(0,-0.2)$) -- ($(c2.center)+(0,0.2)$);

\node at ({2*1.5+0.75},{2.5*\mysep}) (a3) {\Large $\checkmark$};
\node at ({2*1.5+0.75},{1.5*\mysep}) (b3) {\Large $\checkmark$};
\node at ({2*1.5+0.75},{0.5*\mysep}) (c3) {\Large $\checkmark$};

\draw[->] ($(a3.center)+(0,-0.2)$) -- ($(b3.center)+(0,0.2)$);
\draw[->] ($(b3.center)+(0,-0.2)$) -- ($(c3.center)+(0,0.2)$);

\node at ({3*1.5+0.75},{2.5*\mysep}) (a4) {X};
\node at ({3*1.5+0.75},{1.5*\mysep}) (b4) {\Large $\checkmark$};
\node at ({3*1.5+0.75},{0.5*\mysep}) (c4) {\Large $\checkmark$};

\draw[->] ($(b4.center)+(0,-0.2)$) -- ($(c4.center)+(0,0.2)$);

\end{tikzpicture}
\caption{Possibility~($\checkmark$) and Impossibility~(X)
     Results for constant, non-constant forgetful, non-const.\
     non-forgetful, and physical physical channels.
Arrows mark implications.} 
\label{fig:results}
\end{figure}

\medskip

\bfno{Related Work.}
Unger~\cite{unger71} proposed a general technique for deriving
     asynchronous sequential switching circuits that can cope with
     unrelated input signals. It assumes signals to be binary valued, 
and requires the availability of combinational circuit elements, as well as
     pure and inertial delay channels.

Bellido-D{\'\i}az et~al.\ \cite{BDJCAVH00} proposed the PID model, and justified
its appropriateness both analytically and by comparing the model predictions
against Spice simulation results. The results confirm very good accuracy 
even for such challenging scenarios as long chains of gates and ring 
oscillators.

Marino~\cite{marino77} showed that the problem of building a
     synchronizer can be reduced to the problem of building an
     inertial delay channel.
The reduction circuit only makes use of combinational gates and pure
     delay channels in addition to inertial delay channels.
Marino further shows, in a continuous value signal model, that for a
     set of standard designs of inertial delay channels, input pulses
     exist that produce outputs violating the requirements of inertial
     delay channels.
Barros and Johnson~\cite{BJ83} extended this work, by showing the
     equivalence of arbiter, synchronizer, latch, and inertial delay
     channels.

Marino~\cite{marino81} developed a general theory of metastable
     operation, and provided impossibility proofs for
     metastability-free synchronizers  and arbiter circuits for
     several continuous-value circuit models.
Branicky~\cite{Bra95} proved the impossibility of  time-unbounded
     deterministic and time-invariant arbiters modeled as ordinary
     differential equations.
Mendler and Stroup~\cite{ms93}  considered the same problem in the
     context of continuous~automata.

Brzozowski and Ebergen~\cite{BE92} formally proved that, in a model
that uses only binary values, it is
impossible to implement Muller C-Elements (among other basic
     state-holding components used in (quasi) delay-insensitive
     designs) using only zero-time logical gates interconnected
     by wires without timing restrictions.


\section{Short-Pulse Filtration in Physical Systems}\label{sec:physical}

In this section, we will introduce the SPF problem in the model of 
Marino~\cite{marino81} and use the classic results obtained
for bistable elements to determine the 
solvability/impossibility border of the SPF problem for real (physical) 
circuits.

The model of Marino considers circuits which process
signals with both continuous value domain and continuous time domain.
Accordingly, we
assume (normalized) signal voltages to be within $[0,1]$, and denote
by $L_0  = [0,l_0]$ resp.\ $L_1 = [l_1,1]$, with $0 < l_0 < l_1 <
     1$, the signal ranges that are interpreted as logical~$0$
     resp.\  logical~$1$ by a circuit.

A physical circuit with a single input and a single output
{\em solves Short-Pulse Filtration (SPF)\/}, if it
fulfills the
     following requirements:
\begin{enumerate}
\item[(i)] If the input signal is constantly logical~$0$, then so is the
     output signal.
\item[(ii)]
There exists an input signal such that the
     output signal attains logical~$1$ at some point in time.
\item[(iii)] There exists some fixed~$\varepsilon>0$ such that, 
if the output signal
     is not interpreted as logical~$1$ at two points in time~$t$ and $t'$ with
     $t'-t < \varepsilon$, then it is not logical~$1$ at any time in
     between~$t$ and~$t'$. Informally, this condition prohibits output signals
that may be interpreted as pulses (see Section~\ref{sec:SPF})
with a duration less than~$\varepsilon$.
\end{enumerate}
A physical circuit {\em solves bounded SPF\/} if additionally:
\begin{enumerate}
\item[(iv)] There exists a time~$T$ such that, if the input signal
     switches to logical~$1$ by time~$t$, then the
     output signal value is either logical~$0$ or logical~$1$ at
     time~$t+T$ and remains logical~$0$ respectively logical~$1$ thereafter.
\end{enumerate}

We will next argue why there is no physical circuit that solves
     bounded SPF, but that there are physical circuits solving unbounded
     SPF.

\subsection{Impossibility of Bounded SPF}

The proof is by reduction to the non-existence of a physical
bistable storage element that stabilizes within bounded
time in the model of Marino.
A {\em single-input bistable element\/} is a physical circuit
     with a single input and a single output that
     fulfills properties~(i) and~(ii) of SPF as well as:  
\begin{enumerate}
\item[(iii')] If the output is logical~$1$ at some time~$t$, it also
     remains logical~$1$ at all times larger than~$t$.
\end{enumerate}
For a {\em single-input bistable element stabilizing within bounded
     time}, additionally~(iv) has to hold. 

The following Corollary~\ref{cor:bistable}, which proves
the non-existence of a single-input bistable element that
stabilizes within bounded time,
follows immediately from Theorem~3 in~\cite{marino81}.


\begin{cor}\label{cor:bistable}
There is no single-input bistable element stabilizing within bounded time.
\end{cor}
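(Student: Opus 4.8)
The plan is to obtain the statement as an immediate specialization of Marino's Theorem~3~\cite{marino81}, which, in his continuous-value/continuous-time dynamical model, shows that every non-degenerate bistable storage element admits, for every candidate stabilization bound, an input signal and initial state whose output trajectory is still trapped in the forbidden band strictly between $L_0$ and $L_1$ at that time---i.e., no bistable element settles to a clean logical value within any fixed time bound. So the whole proof is really a matter of recognizing a single-input bistable element stabilizing within bounded time as an instance of Marino's abstract object.

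Concretely, I would argue by contradiction: suppose a single-input bistable element $B$ stabilizing within bounded time $T$ exists, and check that $B$ satisfies the hypotheses of Theorem~3. Property~(iii') provides the latching behaviour (the logical-$1$ region is absorbing once entered), property~(i) makes the logical-$0$ region absorbing under the constant-$0$ input, and property~(ii) rules out the degenerate case in which one of the two stable regions is never reached; together these realize Marino's bistable storage element. Property~(iv), with its bound $T$, is exactly the ``stabilizes within bounded time'' hypothesis that Theorem~3 refutes. Applying Theorem~3 with this $T$ then yields an input under which the output of $B$ lies strictly between $L_0$ and $L_1$ at the relevant time $t+T$, directly contradicting~(iv); hence no such $B$ exists.

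The one point that needs care---and where I expect whatever genuine work there is to sit---is the interface mismatch: Marino typically phrases bistable storage elements with set/reset-style control, whereas here we have a single input. I would handle this by observing that a single-input device is simply the special case where the stimulus space is one-dimensional, or, if Theorem~3 is stated only for a fixed multi-input topology, by noting that the ``push toward~$1$'', ``push toward~$0$'', and ``hold'' stimuli needed by the theorem's construction can each be produced by an appropriate single input signal (rising, falling, or held at an intermediate level). Once this correspondence is pinned down---together with the identification of Marino's forbidden region with the gap $(l_0,l_1)$ and of his notion of stabilization with property~(iv)---the corollary indeed follows immediately, and the remaining steps are routine bookkeeping rather than new argument.
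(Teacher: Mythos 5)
Your proposal takes essentially the same route as the paper: the paper offers no argument of its own for this corollary, stating only that it follows immediately from Theorem~3 of Marino~\cite{marino81}, which is exactly the reduction you carry out (your additional care about matching the single-input interface and the stabilization bound to Marino's hypotheses is more detail than the paper provides, but not a different approach).
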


Now assume, for the sake of a contradiction, that there existed 
a physical circuit solving bounded SPF and consider the circuit
shown in \figref{fig:reduction}, with the NOR's initial
     output equal to~$1$ and the inverter's initial output equal
     to~$0$ at time $t=0$.

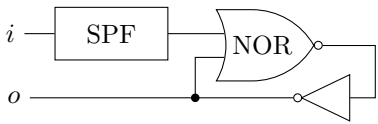
\begin{figure}[hbt]
\centering
\begin{tikzpicture}[circuit logic US, scale=1]
\matrix[column sep=0mm,row sep=-1mm,every node/.style={transform shape}]
{
              & & \node [nor gate,small circuit symbols] (nor) {\,\,NOR}; & \\
\node (k) {}; & & \node [not gate,small circuit symbols,rotate=180,xshift=-10mm] (not) {}; & \\
};

\draw (nor.input 1)
        -- ++(left:1.5) node[draw,rectangle,minimum width=1.5cm,minimum height=0.7cm,fill=white] (spf) {SPF};
\draw (spf.west)
        -- ++(left:0.4) node[left] (i) {$i$};
\draw (nor.output) 
	-- ++(right:0.7) 
	-- ++(down:0.1)
        |- (not.input);
\draw (not.output)
        -- ++(left:1.3) node[circle,inner sep=1pt,fill=black,draw] (huhu) {}
	-- ++(up:0.1)
	|- (nor.input 2)
	;
\draw (huhu)
        -- ++(left:2.2)
	node[left] {$o$}
	;
\end{tikzpicture}
\myvspace{-0.2cm}
\caption{Building a 
bistable storage element
  from a circuit solving SPF}
\label{fig:reduction}\myvspace{-0.2cm}
\end{figure}

It is not difficult to prove that this circuit implements 
a single-input bistable element stabilizing within bounded time:
In case the input signal~$i$ is always logical~$0$, the SPF's output signal
     will always be logical~$0$ due to property~(i) of the SPF.
Thus the circuit shown in \figref{fig:reduction} 
will always drive a logical~$0$ at its
     output, which confirms property~(i) for the bistable element.

Now let~$u$ be an input pulse that makes the SPF circuit produce 
a logical~$1$ at its output. 
Letting~$t'$ be the first time the SPF circuit drives a logical~$1$
at its output, its output must remain logical~$1$ within 
$[t',t'+\varepsilon]$ for some $\varepsilon > 0$ due to 
property~(iii) of the SPF. Assuming that the signal propagation
delay of the NOR gate and the inverter is short enough
     for the inverter's output to reach a logical~$1$ before
     time~$t'+\varepsilon$, the NOR gate will subsequently
drive a logical~$0$ on its output forever, irrespective 
of the output of the SPF circuit. The circuit's output signal~$o$
will hence continuously remain logical~$1$ once it
     switched to logical~$1$, which also confirms
properties~(ii) and~(iii') of the bistable element. 

Due to the use of a circuit solving bounded SPF in the compound
     circuit, we further obtain that there exists some $T>0$ such that,
     for any input pulse~$u'$ that switches to logical~$1$ by
     time~$t$, the circuit shown in \figref{fig:reduction} 
produces a logical~$1$ by
     time~$t+T$, a contradiction to the non-existence of a 
single-input  bistable element stabilizing in bounded time.
We hence obtain:

\begin{thm}\label{thm:bSPFphysimposs}
No physical circuit solves bounded SPF.
\end{thm}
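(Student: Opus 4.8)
The plan is to prove the theorem by reduction to Corollary~\ref{cor:bistable}: from any physical circuit solving bounded SPF I would build a single-input bistable element that stabilizes within bounded time, which cannot exist. The reduction circuit is the one depicted in \figref{fig:reduction}: the external input~$i$ feeds the SPF circuit, whose output drives one input of a NOR gate; the NOR output is inverted, and the inverter output~$o$ is both the external output and the feedback signal wired to the second NOR input. I would fix the initial state so that the NOR drives~$1$ and the inverter drives~$0$ at time~$t=0$, which is a consistent steady state when the SPF output is~$0$.

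First I would check property~(i) of the bistable element. If~$i$ is constantly logical~$0$, then by property~(i) of SPF the SPF output is constantly logical~$0$; together with the chosen initial state the feedback loop stays in its steady state (NOR~$=1$, $o=0$) forever, so the compound output is constantly logical~$0$. Next, for properties~(ii) and~(iii'), I would invoke property~(ii) of SPF to obtain an input pulse~$u$ for which the SPF output reaches logical~$1$, and let~$t'$ be the first such time. By property~(iii) of SPF the SPF output remains logical~$1$ throughout~$[t',t'+\varepsilon]$ for the fixed~$\varepsilon>0$ guaranteed by SPF. Choosing the NOR and inverter to have propagation delay small enough relative to this fixed~$\varepsilon$, the inverter output~$o$ reaches logical~$1$ before time~$t'+\varepsilon$; once~$o$ is logical~$1$ it keeps the NOR output at logical~$0$ regardless of the SPF output, which in turn keeps~$o$ logical~$1$ thereafter. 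Hence~$o$ attains logical~$1$ and, once it does, stays logical~$1$ forever, which is exactly properties~(ii) and~(iii') of a bistable element.

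Finally, for bounded stabilization~(iv), I would use that the circuit solves \emph{bounded} SPF: property~(iv) gives a time~$T$ such that the SPF output is settled (to logical~$0$ or logical~$1$) and stays settled by time~$t+T$ whenever the input has switched to logical~$1$ by time~$t$; adding the bounded NOR and inverter delays yields a fixed bound after which~$o$ is settled. Thus the compound circuit is a single-input bistable element stabilizing within bounded time, contradicting Corollary~\ref{cor:bistable}, and so no physical circuit solves bounded SPF.

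The main obstacle I expect is not any single step but making the composition rigorous inside Marino's continuous-value, continuous-time model: one has to argue that wiring the SPF circuit together with a NOR gate and an inverter, including the feedback loop, again yields an admissible physical circuit with well-defined behavior, that the stipulated initial output values are realizable, and that ``NOR/inverter delay small enough'' can legitimately be assumed --- the last point being fine precisely because~$\varepsilon$ in SPF property~(iii) is a fixed positive constant, so the gates may be picked afterwards.
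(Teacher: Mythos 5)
Your proposal is correct and follows essentially the same route as the paper: the same NOR--inverter feedback reduction of \figref{fig:reduction} with the same initial state, the same use of SPF properties (i)--(iv) to verify the bistable-element properties (including the ``gate delays small relative to the fixed $\varepsilon$'' step), and the same appeal to Corollary~\ref{cor:bistable} for the contradiction. The caveats you raise about composability and realizable initial values are exactly the points the paper also leaves informal.
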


\subsection{Possibility of Unbounded SPF}

To show the existence of a circuit solving unbounded SPF, we make use
     of a circuit known as a metastability filter (see, e.g.,
     \cite[p.~40]{Kin07}). According to Marino~\cite{marino81},
pulses of arbitrary length may drive the internal state of 
every storage loop (including the one shown in \figref{fig:reduction})
into a metastable region for an unbounded time. A circuit may hence
produce an output signal within some region of metastable output
values $[v_M^-,v_M^+] \subset [0,1]$ during an unbounded time,
where the values $v_M^-,$ and $v_M^+$ depend on technology 
parameters. However, since it is possible to compute safe bounds
$V_M^-,$ and $V_M^+$ such that $[v_M^-,v_M^+] \subset [V_M^-,V_M^+]
\subset [0,1]$, a continuously valid
output signal can be produced by means of a subsequent high-threshold buffer:
By connecting the output~$o$ of \figref{fig:reduction}, ignoring the SPF block,
to the input of a (high-threshold) buffer, which maps 
input signal values within
     $[0,B_M^-]$ to output signal values that are logical~$0$, and
     input values within $[B_M^+,1]$ to output
     values that are logical~$1$, where $V_M^+<B_M^-$,
we obtain a physical circuit that solves (unbounded)~SPF.
Hence:

\begin{thm}\label{thm:SPFphysposs}
There is a physical circuit that solves SPF.
\end{thm}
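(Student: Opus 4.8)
\emph{Proof plan.} The plan is to make the metastability-filter argument sketched above precise by exhibiting one circuit and verifying the three SPF requirements for it. The circuit is the storage loop of \figref{fig:reduction} with its SPF block deleted --- so the external input $i$ drives the NOR's first input directly, the NOR and the inverter form the feedback loop, started as before with the NOR output at $1$ and the inverter output at $0$ --- followed by a high-threshold buffer that reads the loop output $o$ and whose output is the output of the compound circuit. Before checking anything I fix the buffer thresholds. By Marino's theory of metastable operation \cite{marino81}, when the loop is forced into its metastable region its internal state stays within a technology-dependent window $[v_M^-,v_M^+]$, and one can compute safe constants with $0\le V_M^-\le v_M^-\le v_M^+\le V_M^+<1$; I then choose the buffer so that it maps inputs in $[0,B_M^-]$ to logical~$0$ and inputs in $[B_M^+,1]$ to logical~$1$ with $V_M^+<B_M^-<B_M^+<1$ (this is exactly the metastability filter of \cite[p.~40]{Kin07}).

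Requirements (i) and (ii) are then quick. For (i): with $i$ constantly logical~$0$, the configuration with clean logical~$0$ on $o$ and clean logical~$1$ on the NOR output is a stable equilibrium of the loop and is the prescribed initial state, so $o$ stays at (or near) this low equilibrium, which lies below the metastable window and hence below $B_M^-$; therefore the buffer output is constantly logical~$0$. For (ii): a sufficiently long full-swing input pulse latches the loop into its other stable equilibrium, where $o$ is a clean logical~$1$ of value at least $B_M^+$, so the buffer output attains logical~$1$; this merely uses that the loop is a genuine bistable storage element, as in the discussion preceding \figref{fig:reduction}.

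The crux is (iii). The key observation is that the set of times at which the buffer output is logical~$1$ is an up-set. Whatever $i$ does, at each time $o$ is either at/near the low equilibrium, inside the metastable window $[v_M^-,v_M^+]$, or converging to/at the high equilibrium; as long as $o\le V_M^+$ --- in particular during any metastable episode --- we have $o<B_M^-$, so the buffer output is logical~$0$, and the buffer output can be logical~$1$ only after $o$ has resolved to the high equilibrium $o^\star\ge B_M^+$. Once the loop is there it stays there: with $o$ high the NOR's fed-back input is $1$, pinning the NOR output to $0$ irrespective of $i$ (this is property (iii') for the loop). Hence there is a time $t^\star\in(0,\infty]$ before which the buffer output is logical~$0$ and from which it is logical~$1$, so (iii) holds for \emph{any} fixed $\varepsilon>0$. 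Combining (i)--(iii) yields a physical circuit solving (unbounded) SPF. The only step that is not bookkeeping --- the hard part --- is to argue that, while resolving upward, $o$ crosses the band $(B_M^-,B_M^+)$ only once and monotonically and that no transient overshoot carries $o$ above $B_M^+$ before the loop has truly latched, so that the buffer performs a single $0\to1$ transition rather than a short spike; this is precisely what Marino's characterization of post-metastable trajectories \cite{marino81}, underlying the standard metastability-filter analysis, delivers.
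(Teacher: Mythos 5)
Your construction is exactly the paper's: the storage loop of \figref{fig:reduction} with the SPF block removed, followed by a high-threshold buffer whose low threshold $B_M^-$ sits above the safe metastability bound $V_M^+$, so that the buffer output can only go to logical~$1$ after the loop has irreversibly latched high. The paper leaves the verification of (i)--(iii) implicit, whereas you spell it out (in particular the up-set argument for (iii)) and correctly flag the one point that genuinely rests on Marino's characterization of metastable resolution; this is the same proof, carried out in slightly more detail.
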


\section{Binary System Model}\label{sec:model}

\subsection{Signals, Events}

We consider a binary valued signal model with continuous time, i.e.,
     signal values are from~$\Bool=\{0,1\}$ and they evolve over
     time~$\Time=[0,\infty)$.

A {\em signal\/} is a function~$\Time\to\bool$ that
does not change an infinite number of times
     during a finite time interval and that already has its new value at a time
instant of a value transition.\footnote{The requirement that a signal
already has its new value when changing values is merely a convention. On the other
hand, the requirement that it only changes a finite number of times during a
finite time interval is fundamental to our model and, thus, our results.
}

A signal transition is modeled by an {\em event}.
Formally an event is a pair~$e=(t,x)$ in~$(\Time \cup \{-\infty\})\times\bool$.
We call~$t$ the event's {\em time\/} and~$x$ the event's {\em value}.
We use ``virtual
events'' at~$t=-\infty$ to simplify notation when specifying initial values of
channels.
An {\em event list\/} is a (finite or infinite) sequence of events.

To every signal, there corresponds an
     event list $(e_n)=(t_n,x_n)$ with the following properties:  
\begin{enumerate}[S1)]
\item There is always an initial event at time~$-\infty$.
\item The sequence~$(t_n)$ of event times is strictly increasing and discrete.
\item Values are alternating: $x_n\neq x_{n+1}$
\end{enumerate}
Conversely, every such event list corresponds to a unique signal.

\subsection{Channels, Constant-Delay Channels}
\label{sec:channels}

A {\em channel\/}~$c$ is a function mapping an input signal~$s$ to
     an output signal~$c(s)$.

The simplest class of channels is the class of (positive) constant-delay channels.
A {\em constant-delay\/} channel~$c$ with delay parameter~$\delta > 0$ and initial
value~$x\in\Bool$ produces at its output the input signal delayed by~$\delta$, i.e.,
\begin{equation}\label{eq:p:channel}
	c(s)(t) = \begin{cases} x & \text{if }t<\delta\\ s(t-\delta) & \text{if }t\geq \delta \enspace. \end{cases}
\end{equation}

Note that a physical realization of a constant-delay channel with initial
value~$x$ 
requires a multiplexer, which supplies the constant-delay channel's input
with the initial value $x$ during $(-\infty,0)$ and switches to
the actual input $s$ at reset time~$0$.

\subsection{Circuits}

Circuits are obtained by interconnecting a set of input ports and a set
     of output ports, forming the external interface of a circuit, 
and a set of combinational gates via channels. 
We constrain the way components are interconnected in a natural
way, by requiring
     that input ports are attached to one or 
more channel inputs only (C4), and that both output
     ports and gate inputs are attached to just one channel's output (C5, C6);
the latter prevents channel outputs driving against each other.

Formally, a {\em circuit\/} is a tuple~$C=(G,I,O,c,n)$, where  
\begin{enumerate}[C1)]
\item $G$ is a directed graph whose vertex set can be partitioned as $I\cup
O\cup B$.
\item Every vertex~$b$ in~$B$ ({\em (Boolean) gate\/}) is assigned a
     Boolean function $\bool^{d_b}\to\bool$, where~$d_b$ is the in-degree,
     i.e., the number of incoming neighbors, of~$b$.
By a slight abuse of notation, $b$ also denotes the Boolean function
     assigned to~$b$.

\item $c$ is a function that maps every edge $(u,v)$ in $G$ to its
corresponding channel $c_{u,v}$.
\item Every vertex $v\in I$ ({\em input ports\/}) has in-degree~$d_v=0$.
\item Every vertex $v\in O$ ({\em output ports\/}) has in-degree~$d_v=1$.
\item $n$ is a function that maps every vertex $v$ in $G$ to
a linearly ordered subset $n_v=\{v_1,\dots,v_{d_v}\}$ of its in-neighbor
vertices in $G$, i.e., where edge $(v_i,v)$ for $i=1$ up to $v$'s
in-degree $d_v$ is in $G$.
\end{enumerate}
Note that there are also zero-input Boolean gates~$0$ and~$1$ that
represent constant signal values~0 and~1.

\subsection{Executions}\label{subsec:exec}
An execution of circuit~$C$ is an assignment of signals to vertices that
respects the channel functions and Boolean gate functions.

Formally, an {\em execution\/} of circuit~$C$ is a collection of signals~$s_v$
for all vertices~$v$ of~$C$ such that the following properties hold:
If~$i$ is an input port, then there are no restrictions on~$s_i$.
If~$o$ is an output port, then~$s_o = c_{v,o}(s_v)$ where~$v$ is the unique
     incoming neighbor of~$o$ and $c_{v,o}$ the channel representing edge
$(v,o)$.
Let now~$b$ be a Boolean gate with~$d$ incoming
neighbors~$v_1,v_2,\dots,v_d$, ordered according to~$n_b$.
We then apply, for each incoming edge~$(v_k,b)$, the channel~$c_{v_k,b}$
to signal~$s_{v_k}$ and check that
the signal value~$s_b(t)$ is the gate's Boolean combination of
these incoming signals at time~$t$.
That is,
$s_b(t) = b\big(
     c_{v_1,b}(s_{v_1})(t)\,,\,\dots\,,\,c_{v_d,b}(s_{v_d})(t)
     \big)$ for all~$t\in\Time$.

Not all circuits necessarily do have executions.
For example, the circuit comprising a single inverter gate whose output is
     fed back to its input via the ``mirror channel'' $c$ with $c(s)=s$
     for all
     signals~$s$ does not have an execution.
Whenever we introduce a circuit for a possibility result, we will thus make sure that it allows
     for a unique execution once the input signals are fixed.
In case of constant-delay channels, this is always
     the case (see Lemma~\ref{lem:executions}).

\subsection{Short-Pulse Filtration}\label{sec:SPF}

A {\em pulse\/}~$p$ of length~$\Delta>0$ at time~$T$ is a signal of the form
\begin{align}
  p(t) = \begin{cases}
		0 & \text{if } t < T \text{ or } t \geq T+\Delta\\
		1 & \text{if } T\leq t < T+\Delta \enspace.
		\end{cases} 
\end{align}

A signal {\em contains a pulse\/} of length~$\Delta>0$ at time~$T$ if its
event list contains the two consecutive events $(T,1)$ and $(T+\Delta,0)$.

A circuit {\em solves Short-Pulse Filtration (SPF)\/} if it fulfills the following conditions:
\begin{enumerate}[F1)]
\item It has exactly one input port~$i$ and exactly one output port~$o$.
\item For every pulse~$p$, there exists an execution that has~$p$ as the input
	signal (i.e., $s_i=p$). {\em (Well-formedness)}
\item In all executions, if the input signal is constant zero, then so is the output
	signal. {\em (No generation)}
\item There exist a pulse~$p$ such that, in all executions with~$p$ as the
	input signal, the output signal is not the constant zero signal. {\em (Nontriviality)}
\item There exists an~$\varepsilon>0$ such that, in all executions, the output
	signal does not contain a pulse of length less than~$\varepsilon$. {\em (No short pulses)}
\end{enumerate}

A circuit {\em solves bounded SPF\/} if additionally the following condition holds:
\begin{enumerate}
\item[F6)] There exists a $K>0$ such that, in all executions with a
  pulse as the input signal
whose last event is at time~$T$, the output
	signal does not change anymore after time~$T+K$.
	{\em (Bounded stabilization time)}
\end{enumerate}

A circuit {\em solves eventual SPF\/} if conditions (F1)--(F4) and the following condition hold:
\begin{enumerate}
\item[F5e)] There exists an~$\varepsilon>0$ and a $K>0$ such that, in all
executions with a pulse at time~$T$ as the input signal, the output
	signal does not contain a pulse of length less than~$\varepsilon$ after
	time~$T+K$. {\em (Eventually no short pulses)}
\end{enumerate}





\section{Unsolvability of Short-Pulse Filtration with Constant-Delay Channels}\label{sec:P}

In this section, we show that no circuit whose channels are all positive constant-delay
channels solves SPF.
The idea of the proof is to exploit the fact that the value of the
     output signal of the circuit at each time~$t$ only depends on a
     {\em finite\/} number of values of the input signal at times~$t'$
     between~$0$ and~$t$.

Calling each such time~$t'$ a {\em measure point\/} for time~$t$, we
     show that indeed only a finite number of measure points exists
     for time~$t$, i.e., the circuit cannot distinguish two
     different input signals that do not differ in the input signal
     values at the measure points for time~$t$: For both such input
     signals, the output signal must have the same value at time~$t$.
Combining that indistinguishability result with a shifting
     argument of the input signal allows us to construct an
     arbitrary short pulse at the output of the circuit, a
     contradiction to property (F5) of Short-Pulse Filtration.

\subsection{Dependence Graphs}
For each constant-delay circuit with a single input port and a single output
port, we introduce its {\em dependence graph}, which describes the way the
output signals may depend on the input signals.

Let~$C=(G,I,O,c,m)$ be a circuit with constant-delay channels, a single input
port~$i$, and a single output port~$o$.
For every channel~$c_{u,v}$ of~$C$, denote by~$\delta(u,v)$ its delay
parameter~$\delta$ and by~$x(u,v)$ its initial value.
The {\em dependence graph\/}~$DG(t)$ of~$C$ at time~$t$ is a directed
graph with vertices~$(v,\tau)$, where $v$ is a vertex in~$G$ and $\tau$ a time.
It is defined as follows:
\begin{itemize}
\item The pair~$(o,0)$ is a vertex of~$DG(t)$.
\item If~$(v,\tau)$ is a vertex of~$DG(t)$ and~$(u,v)$ is an edge in~$G$
such that $\tau+\delta(u,v)\leq t$, then the pair $\big(u,\tau+\delta(u,v)\big)$
is also a vertex of~$DG(t)$ and there is an edge in~$DG(t)$
from $\big(u,\tau+\delta(u,v)\big)$ to~$(v,\tau)$.
\item If~$(v,\tau)$ is a vertex of~$DG(t)$ and~$(u,v)$ is an edge in~$G$ such
that~$\tau+\delta(u,v)>t$, then $c_{u,v}$'s initial value~$x(u,v)$ is a vertex of~$DG(t)$ and there
is an edge in~$DG(t)$ from~$x(u,v)$ to~$(v,\tau)$.
\end{itemize}

Because all~$\delta(u,v)$ are strictly positive, the dependence graphs are finite and acyclic.
A vertex of~$DG(t)$ without incoming neighbors is a {\em leaf}, all others {\em intermediate vertices}.
A vertex of the form~$(i,\tau)$, with $i\in I$, is an {\em input leaf} and we call the time~$t-\tau$ 
the corresponding {\em measure point\/} for time~$t$.
If~$DG(t)=DG(\tilde{t})$, then the measure points for~$t$ are exactly the
measure points for~$\tilde{t}$ shifted by the difference $t-\tilde{t}$.
All leaves of~$DG(t)$ are either input leaves or elements of\/~$\Bool$
(initial values of channels).

\begin{figure}[hbt]
\centering
\begin{minipage}[b]{0.45\columnwidth}
\begin{tikzpicture}[circuit logic US, scale=1.4, large circuit symbols]
\matrix[column sep=15mm]
{
\node [or gate] (or) {OR}; &\\
};

\draw (or.input 1) -- ++(-0.7,0) node [pos=0.7,above] {$\delta=1$} node [pos=0.7,below] {$x=0$} node[left] (i) {$i$};
\draw (or.output) 
	-- ++(right:0.1)node[circle,inner sep=1pt,fill=black,draw] {}
	-- ++(down:0.6)
	-- ++(left:0.85) node[midway, above] {$\delta=2$} node[midway,below] {$x=0$}
	-- ++(up:0.4)
	|- (or.input 2)
	;
\draw (or.output)
	-- ++(0.7,0) node [pos=0.7,above] {$\delta=1$} node[pos=0.7,below] {$x=0$} 
	node[right] {$o$}
	;
\end{tikzpicture}
\caption{Example circuit}
\label{fig:Cex}
\end{minipage}%
%
\begin{minipage}[b]{0.5\columnwidth}
\begin{tikzpicture}[->,>=latex',shorten >=1pt,auto,node distance=0.5cm,semithick,scale=0.8]
  %
  %
  

  \path [draw] (0,0) circle (1mm) node [fill=white,xshift=-2mm,yshift=4mm] {$\li{o,0}$};
  \node (o0) at (0,0) {};

  \path [draw] (-1,0) circle (1mm) node [fill=white,below=1.5mm,xshift=1mm] {$\li{\mathrm{OR},1}$};
  \node (u1) at (-1,0) {};


  \path [draw] (-2,0.5) circle (1mm) node [fill=white,above=1.5mm,xshift=4mm] {$\li{\mathrm{OR},3}$};
  \node (u1e) at (-2,0.5) {};
  
  \path [fill] (-2,-0.5) circle (1mm) node [fill=white,left=1.5mm] {$\li{i,2}$};
  \node (i2) at (-2,-0.5) {};


  \path [draw] (-3,1) circle (1mm) node [fill=white,above=1.5mm]
{$\li{\mathrm{OR},5}$};
  \node (u12e) at (-3,1) {};

  \path [fill] (-3,0) circle (1mm) node [fill=white,left=1.5mm] {$\li{i,4}$};
  \node (i2e) at (-3,0) {};


  \path [fill] (-4,1.2) circle (1mm) node [fill=white,left=1.5mm] {$0$};
  \node (zero) at (-4,1.2) {};

  \path [fill] (-4,0.5) circle (1mm) node [fill=white,left=1.5mm] {$\li{i,6}$};
  \node (i22e) at (-4,0.5) {};

  %
  %

  \path [draw,->] (u1)   edge             (o0);
  \path [draw,->] (u1e)  edge             (u1);
  \path [draw,->] (i2)   edge             (u1);
  \path [draw,->] (u12e) edge             (u1e);
  \path [draw,->] (i2e)  edge             (u1e);
  \path [draw,->] (zero) edge             (u12e);
  \path [draw,->] (i22e) edge             (u12e);
  
  

\end{tikzpicture}
\caption{Example dependence graph $DG(6)$}\label{fig:depgr}
\end{minipage}
\end{figure}

As an example, consider the circuit shown in \figref{fig:Cex}.
The dependence graph $DG(6)$ is shown in \figref{fig:depgr}.
Leaves are depicted as filled nodes, while intermediate nodes are
     empty.
From the construction of the graph, we immediately see that in each execution
     the output signal value $s_o(6)$ only depends on the (input)
     signal values $s_i(4)$, $s_i(2)$, and
     $s_i(0)$.
Thus, in particular, $s_o(6)$ is the same for both input signals depicted in
     \figref{fig:input}.

\def\highV{0.5}
\def\offV{-2.8}
\begin{figure}[htb]
\centering
\begin{tikzpicture}[scale=1,>=latex']
  \path [draw,->] (0,0)--(0,{\highV+0.05}) node (xaxis) [above] {$s_i(t)$};
  \path [draw,->] (0,0)--(6.5,0)   node (yaxis) [above] {$t$};

  \path [draw,thin,-] (0,0.1) -- (0,-0.1) node [below] {$0$};
  \path [draw,thin,-] (1,0.1) -- (1,-0.1) node [below] {$1$};
  \path [draw,thin,-] (2,0.1) -- (2,-0.1) node [below] {$2$};
  \path [draw,thin,-] (3,0.1) -- (3,-0.1) node [below] {$3$};
  \path [draw,thin,-] (4,0.1) -- (4,-0.1) node [below] {$4$};
  \path [draw,thin,-] (5,0.1) -- (5,-0.1) node [below] {$5$};
  \path [draw,thin,-] (6,0.1) -- (6,-0.1) node [below] {$6$};

  \neline{0}{1.5}{0}
  \feline{1.5}{3.5}{\highV}
  \fnline{3.5}{6.1}{0}

  \tcrossAR{(0,0)}{$(i,6)$}
  \tcross{(2,\highV)}{$(i,4)$}
  \tcross{(4,0)}{$(i,2)$}
\end{tikzpicture}
\begin{tikzpicture}[scale=1,>=latex']
  \path [draw,->] (0,0)--(0,{\highV+0.05}) node (xaxis) [above] {$s_i(t)$};
  \path [draw,->] (0,0)--(6.5,0)   node (yaxis) [above] {$t$};

  \path [draw,thin,-] (0,0.1) -- (0,-0.1) node [below] {$0$};
  \path [draw,thin,-] (1,0.1) -- (1,-0.1) node [below] {$1$};
  \path [draw,thin,-] (2,0.1) -- (2,-0.1) node [below] {$2$};
  \path [draw,thin,-] (3,0.1) -- (3,-0.1) node [below] {$3$};
  \path [draw,thin,-] (4,0.1) -- (4,-0.1) node [below] {$4$};
  \path [draw,thin,-] (5,0.1) -- (5,-0.1) node [below] {$5$};
  \path [draw,thin,-] (6,0.1) -- (6,-0.1) node [below] {$6$};

  \neline{0}{0.5}{0}
  \feline{0.5}{1.1}{\highV}
  \feline{1.1}{1.5}{0}
  \feline{1.5}{2.5}{\highV}
  \feline{2.5}{4.5}{0}
  \feline{4.5}{5.5}{\highV}
  \fnline{5.5}{6.0}{0}

  \tcrossAR{(0,0)}{$(i,6)$}
  \tcross{(2,\highV)}{$(i,4)$}
  \tcross{(4,0)}{$(i,2)$}
\end{tikzpicture}
\caption{Input pulses with measure points ($\times$), labeled with
  the corresponding input leaf names.}\label{fig:input}
\end{figure}
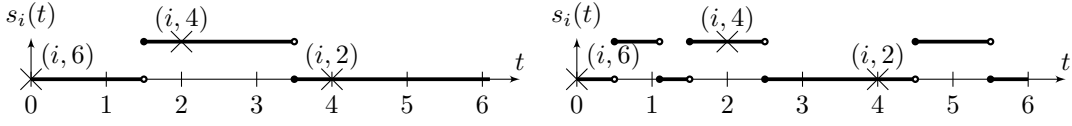

Generalizing the observations from the example, we thus observe:

\begin{lem}\label{lem:children}
The value of the output signal at time~$t$ only depends on the values of the
input signal at the measure points for time~$t$, according to $DG(t)$.

Furthermore, if~$DG(t)=DG(\tilde{t})$ and the values of input signals~$s_i$
and~$\tilde{s}_i$ coincide at the respective measure points for~$t$
and~$\tilde{t}$, then the respective output signals
fulfill~$s_o(t)=\tilde{s}_o(\tilde{t})$.
\end{lem}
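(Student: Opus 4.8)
The plan is to prove the second (stronger) statement by structural induction on the dependence graph $DG(t)$; the first statement is the special case $t = \tilde t$. First I would make precise what "the output signal value is determined by the DG" should mean by proving a more general claim about all vertices: for every vertex $(v,\tau)$ of $DG(t)$, the signal value $s_v(t-\tau)$ is determined by the values of the input signal at the measure points reachable from $(v,\tau)$, and moreover it equals the value obtained by evaluating the sub-DG rooted at $(v,\tau)$ with leaves set according to $s_i$ (input leaves) or their fixed initial channel values (Boolean leaves). Applying this to the root $(o,0)$ gives the lemma, since $s_o(0-0)=s_o(t)$ wait — here one must be slightly careful: the root is $(o,0)$ and $t-\tau = t-0 = t$, so the claim at the root is exactly $s_o(t)$ is determined, as desired.

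The induction is on the (finite, acyclic) structure of $DG(t)$, processing vertices from leaves upward. For the base case, a leaf $(v,\tau)$ is either an input leaf $(i,\tau)$, in which case $s_i(t-\tau)$ is by definition the value at the measure point $t-\tau$, or a Boolean constant leaf $x(u,v)\in\Bool$ recording a channel's initial value, which is fixed independently of the execution. For the inductive step there are two vertex types to handle, corresponding to the two non-leaf construction rules of the DG. If $(v,\tau)$ with $v=o$ an output port: its unique incoming edge $(u,o)$ in $G$ contributes, and by the channel equation~\eqref{eq:p:channel}, either $\tau+\delta(u,o)\le t$ and $s_o(t-\tau)=s_u\big(t-\tau-\delta(u,o)\big)=s_u\big(t-(\tau+\delta(u,o))\big)$, which is the value at the child vertex $(u,\tau+\delta(u,o))$ — handled by the induction hypothesis — or $\tau+\delta(u,o)>t$ and $s_o(t-\tau)=x(u,o)$, the constant leaf. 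If $(v,\tau)$ with $v=b$ a Boolean gate: by the execution definition $s_b(t-\tau)=b\big(c_{v_1,b}(s_{v_1})(t-\tau),\dots,c_{v_d,b}(s_{v_d})(t-\tau)\big)$, and each argument $c_{v_k,b}(s_{v_k})(t-\tau)$ equals, again via~\eqref{eq:p:channel}, either the value at the child $(v_k,\tau+\delta(v_k,b))$ or the constant $x(v_k,b)$, exactly matching the DG edges out of $(b,\tau)$; since $b$ is a fixed Boolean function, $s_b(t-\tau)$ is determined by those arguments.

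For the "furthermore" part, suppose $DG(t)=DG(\tilde t)$. The two dependence graphs are then the same abstract labelled graph, so the same induction runs in parallel for the executions with inputs $s_i$ and $\tilde s_i$: at corresponding input leaves $(i,\tau)$ the measure points for $t$ and for $\tilde t$ differ by the shift $t-\tilde t$ (as already noted in the text right before the lemma), so the hypothesis that $s_i$ and $\tilde s_i$ agree at the respective measure points gives equal leaf values; at Boolean-constant leaves the values are identical by construction; and the inductive step propagates equality upward because it only uses $\delta(u,v)$, $x(u,v)$, and the fixed gate functions $b$, all of which are shared. Hence $s_v(t-\tau)=\tilde s_v(\tilde t-\tau)$ for every vertex, and specializing to the root $(o,0)$ yields $s_o(t)=\tilde s_o(\tilde t)$.

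The only real subtlety — and the step I would be most careful about — is the bookkeeping in the inductive step: one must check that the arguments fed into the Boolean function $b$ in the execution semantics line up edge-for-edge with the children of $(b,\tau)$ in $DG(t)$, using the correct case split in~\eqref{eq:p:channel} according to whether $\tau+\delta(v_k,b)\le t$ or $>t$, and that no DG vertex is reachable from the root whose associated time $t-\tau$ is negative (this is guaranteed because every construction rule requires $\tau+\delta\le t$, and $\delta>0$, so times only increase as $\tau$ grows and the first component of the time argument $t-\tau$ stays in $[0,t]$; this is also where finiteness and acyclicity of $DG(t)$, already established in the text, are used to justify that the induction terminates). Everything else is a routine unfolding of the definitions of execution, channel, and dependence graph.
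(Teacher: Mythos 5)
Your proof is correct and takes essentially the same approach as the paper: the paper argues via maximal paths $\pi$ in $G$ with $\delta(\pi)\leq t$ and "iterating Eq.~\eqref{eq:p:channel}," which is exactly your induction over the finite acyclic $DG(t)$, since the vertices $(v,\tau)$ of $DG(t)$ encode precisely those paths with accumulated delay $\tau$. Your version merely makes the paper's one-line unfolding step explicit (and handles the "furthermore" part by running the same induction in parallel, as the paper implicitly does).
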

\begin{proof}
For a path~$\pi$ in~$G$, denote by~$\delta(\pi)$ the sum of
delays~$\delta(u,v)$ over all edges~$(u,v)$ of~$\pi$.
For every vertex~$v$ of~$G$ and every time~$t\in\Time$, let~$\mathcal{P}(\to y, t)$ be the set of maximum length paths~$\pi$ ending in~$v$ such that~$\delta(\pi)\leq t$.

It is clear, by iterating Eq.~\eqref{eq:p:channel}, that the value of~$s_v(t)$ is uniquely
determined by the collection of values~$s_u\big(t - \delta(\pi) \big)$
where~$u$ is the start vertex of~$\pi\in\mathcal{P}(\to v, t)$.
Moreover, by maximality of~$\pi$, if~$u\neq i$, then $s_u\big(t - \delta(\pi) \big)$ only depends on the initial values of channels of incoming edges to~$u$.
Hence~$s_v(t)$ is uniquely determined by the collection of values~$s_i\big(t -
\delta(\pi) \big)$ where $\pi\in\mathcal{P}(\to y, t)$ starts at~$i$.
This holds in particular for~$v=o$.
\end{proof}

This lemma has as an immediate conseqeuence our remark at the end of
Section~\ref{subsec:exec}:

\begin{lem}\label{lem:executions}
If~$C$ is a circuit with only constant-delay channels, then for all
     assignments of input signals~$(s_i)_{i\in I}$ there exists a
     unique execution of~$C$ extending this assignment.
\end{lem}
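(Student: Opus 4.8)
The plan is to prove existence and uniqueness of the execution by a single well-founded induction that simultaneously constructs all the signals. First I would fix an arbitrary assignment of input signals $(s_i)_{i\in I}$ and observe that, by Lemma~\ref{lem:children} (or really its proof via Eq.~\eqref{eq:p:channel}), for each vertex $v$ and each time $t$ the required value $s_v(t)$ is already a \emph{forced} function of the input values at finitely many earlier measure points, together with channel initial values --- so there is at most one candidate execution, giving uniqueness. The remaining task is to check that the function thus defined is actually a legal execution, i.e.\ that each $s_v$ is a genuine signal (finitely many transitions in any bounded interval, S1--S3) and that the channel and gate equations hold.

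The key steps, in order: (1)~Define, for every vertex $v$, the candidate signal $s_v$ by setting $s_v(t)$ to the value dictated by $DG(t)$ --- concretely, push back along maximal delay-bounded paths to input leaves and channel initial values, apply the gate Boolean functions along the way, exactly as in the proof of Lemma~\ref{lem:children}. (2)~Verify the defining equations. For an output port $o$ with incoming neighbor $v$, the relation $s_o = c_{v,o}(s_v)$ holds by construction since the path set for $(o,0)$ is obtained by prepending the edge $(v,o)$; similarly for a gate $b$, the value $s_b(t)$ is by definition $b\big(c_{v_1,b}(s_{v_1})(t),\dots,c_{v_d,b}(s_{v_d})(t)\big)$ because the maximal paths through $b$ decompose through its in-neighbours after a delay $\delta(v_k,b)$. (3)~Check that each $s_v$ is a signal: on any bounded interval $[0,t]$ the value of $s_v$ on $[0,t]$ depends only on the input signals on the finitely many measure points lying in $[0,t]$ and the (fixed) channel initial values; since each constant-delay channel merely time-shifts its input, $s_v$ restricted to $[0,t]$ is a finite Boolean combination of finitely many shifted restrictions of the $s_i$, each of which changes finitely often, so $s_v$ changes finitely often on $[0,t]$. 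Properties S1--S3 then follow by normalising the resulting event list.

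A clean way to organise steps (1) and (2) is induction on the (finite) depth of the dependence graph, or equivalently to note that for a fixed $t$ the graph $DG(t)$ is finite and acyclic, so one can evaluate it bottom-up from the leaves; the input leaves get the prescribed input values, the Boolean leaves their initial values, and each intermediate vertex its gate's Boolean combination. Doing this for all $t$ yields the signals $s_v$, and the edge-local checks in step (2) are then immediate from how an edge $(u,v)$ enters the recursive construction of $DG(t)$ (the two bullets in the definition of $DG$ correspond exactly to the two cases of Eq.~\eqref{eq:p:channel}).

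I expect the main obstacle to be purely bookkeeping rather than conceptual: making the ``depends only on finitely many measure points'' claim precise uniformly in $t$ over a bounded interval, so that one concludes $s_v$ has finitely many transitions on every bounded interval (and hence is a signal in the sense of S1--S3). The subtlety is that the \emph{set} of measure points and the combinatorial shape of $DG(t)$ can vary with $t$; one needs that over $t\in[0,T]$ only finitely many distinct measure points occur (bounded by the number of delay-bounded paths in the finite graph $G$ with total delay $\le T$, which is finite since every $\delta(u,v)>0$), and that $s_v$ is a fixed finite Boolean-and-shift expression in the $s_i$ on that interval. Once this finiteness is nailed down, the rest --- uniqueness, the channel/gate equations, and S1--S3 --- is routine.
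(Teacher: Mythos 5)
Your proposal is correct and follows essentially the same route as the paper, which derives this lemma as an immediate consequence of Lemma~\ref{lem:children}: the value at every vertex and time is forced by unrolling the strictly positive channel delays back to input leaves and initial values via the (finite, acyclic) dependence graph. You simply make explicit the bookkeeping the paper leaves implicit, namely that the forced assignment satisfies the channel/gate equations and yields genuine signals with finitely many transitions on bounded intervals.
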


Due to the fact that there are only finitely many measure points for a given
time~$t$, they are discrete and hence there is always a small margin until a
new measure point appears:

\begin{lem}\label{lem:shift}
For every~$t\in\Time$, there exists an~$\varepsilon>0$ such that
$DG(t)=DG(t+\varepsilon')$ for all $0\leq\varepsilon'\leq\varepsilon$.
\end{lem}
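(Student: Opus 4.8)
The plan is to observe that, as $t$ grows, the only way $DG(t)$ can change is that some Boolean leaf $x(u,v)$ attached to a vertex $(v,\tau)$ of $DG(t)$ gets ``expanded'' into the subtree rooted at $(u,\tau+\delta(u,v))$, and this happens precisely when $t$ reaches the threshold $\tau+\delta(u,v)$. Since $DG(t)$ is finite --- which is exactly the consequence of all delays being positive noted just before the lemma --- there are only finitely many such thresholds lying strictly above $t$, so one can take $\varepsilon$ to stop short of all of them.

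Concretely, I would first set $\Theta := \{\,\tau+\delta(u,v) : (v,\tau)\text{ a vertex of }DG(t),\ (u,v)\text{ an edge of }G,\ \tau+\delta(u,v)>t\,\}$. This is a finite subset of $(t,\infty)$ because $DG(t)$ has finitely many vertices and $G$ has finitely many edges. Put $\varepsilon := 1$ if $\Theta=\emptyset$, and $\varepsilon := \tfrac12(\min\Theta - t)$ otherwise; then $\varepsilon>0$ and every element of $\Theta$ is $\ge t+2\varepsilon$.

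Next, fixing $\varepsilon'\in[0,\varepsilon]$ and writing $t'=t+\varepsilon'$, I would prove $DG(t)=DG(t')$ by induction along the inductive definition of the dependence graph. The root $(o,0)$ belongs to both. Suppose $(v,\tau)$ belongs to both and let $(u,v)$ be an edge of $G$. If $\tau+\delta(u,v)\le t$ then also $\tau+\delta(u,v)\le t'$, so both graphs acquire the vertex $(u,\tau+\delta(u,v))$ together with the edge from it to $(v,\tau)$. If instead $\tau+\delta(u,v)>t$ then, since $(v,\tau)$ is a vertex of $DG(t)$, we have $\tau+\delta(u,v)\in\Theta$, hence $\tau+\delta(u,v)\ge t+2\varepsilon>t'$, so both graphs instead acquire the Boolean leaf $x(u,v)$ with its edge to $(v,\tau)$. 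In either case the in-neighbours introduced at $(v,\tau)$ agree for $t$ and $t'$; since each of $DG(t)$ and $DG(t')$ is built by exhaustively applying exactly these rules starting from $(o,0)$, their vertex sets and edge sets coincide. As $\varepsilon'\in[0,\varepsilon]$ was arbitrary (the case $\varepsilon'=0$ being trivial), this yields the claim.

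I expect the only real care to be needed in the inductive bookkeeping: one must argue both that no vertex of $DG(t')$ fails to be a vertex of $DG(t)$ and conversely. This follows because every vertex of either graph arises from $(o,0)$ through a finite chain of in-edges, and the step above shows that at each link of such a chain the newly introduced neighbour is identical in the two graphs; a routine induction on the length of this chain then closes the gap. Everything ultimately rests on the finiteness of $DG(t)$, which is why the hypothesis that all $\delta(u,v)$ are strictly positive is essential.
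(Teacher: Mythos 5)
Your proof is correct and follows essentially the same route as the paper's: both arguments isolate the finitely many thresholds $\tau+\delta(u,v)$ lying strictly above $t$ (finiteness coming from the positivity of all delays) and choose $\varepsilon$ small enough that no threshold is crossed, so that the membership conditions $\tau+\delta(u,v)\leq t$ and $\tau+\delta(u,v)\leq t+\varepsilon'$ coincide. Your explicit induction along the inductive construction of $DG(\cdot)$ merely spells out what the paper leaves implicit in its final sentence.
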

\begin{proof}
Let~$\varepsilon>0$ be strictly smaller than all positive values of the form $\delta(u,v)+\tau-t$
where~$(v,\tau)$ is an intermediate vertex of~$DG(t)$ and~$(u,v)$ is an edge in~$G$.
If no such intermediate vertex or edge exists, choose~$\varepsilon>0$ arbitrarily.

Let~$(v,\tau)$ be an intermediate vertex of~$DG(t)$ and $(u,v)$ be an edge in~$G$.
If $t + \varepsilon - \tau < \delta(u,v)$, then clearly $t - \tau < \delta(u,v)$, because~$\varepsilon>0$.
On the other hand, if~$t - \tau < \delta(u,v)$, then $\delta(u,v)+\tau-t$ is
positive and hence
$ \delta(u,v) > t + \varepsilon - \tau  $
by choice of~$\varepsilon$.
Thus, the conditions~$t - \tau < \delta(u,v)$
and~$t + \varepsilon - \tau < \delta(u,v)$ are equivalent.
This shows that the two dependence
graphs~$DG(t)$ and~$DG(t+\varepsilon)$ and hence all dependence graphs in between are equal.
\end{proof}

\subsection{Unsolvability Proof}

Assume by contradiction that~$C$ solves SPF.
By the nontriviality property (F4), there exists an input pulse such that the
corresponding output signal is non-zero, i.e., there exists an input pulse 
of some length and a time~$t$
such that the corresponding output signal's value at time~$t$ is~$1$.

By Lemma~\ref{lem:shift}, there exists an~$\varepsilon>0$ such that $DG(t) = DG(t+\varepsilon)$.
We may choose~$\varepsilon$ arbitrarily small, in particular
strictly smaller than all differences of distinct measure points for time~$t$.

Clearly, $DG(\tilde{t})=DG(t)$ for all times~$\tilde{t}$ between~$t$ and~$t+\varepsilon$, in particular, for $\tilde{t} = t + \varepsilon/2$.
Denote by~$\Delta$ the infimum of input pulse lengths (where all pulses 
start at the same time) such that the
corresponding output signal's value at time~$\tilde{t}$ is~$1$.
This infimum is finite by the choice of~$t$ and~$\tilde{t}$.
There hence exists an input pulse~$p$ with the above
property of length at most $\Delta+\varepsilon/4$.
We show that its corresponding output signal~$s_p$ contains a pulse of length
strictly less than~$\varepsilon$, in contradiction to the no short pulses property (F5).

Denote by~$S$ the time of~$p$'s upwards and by~$T$ the time of~$p$'s
downwards transition.
Now let~$p_+$ be the pulse whose upwards transition is at
time~$S$ and whose downwards transition is at time~$T-\varepsilon/2$.
If $S \geq T-\varepsilon/2$, then let~$p_+$ be the zero signal instead.
The length of~$p_+$ is either strictly less than~$\Delta$ or it is the zero
signal.
Hence, by the definition of the no-generation property~(F3),
its corresponding output signal's value at time~$\tilde{t}$ is~$0$.
This implies that there exists a measure point for time~$\tilde{t}$ 
within $[T-\varepsilon/2,T)$, because~$p$ and~$p_+$ coincide everywhere else
(see marked measure point on the right in \figref{fig:proof:p}).

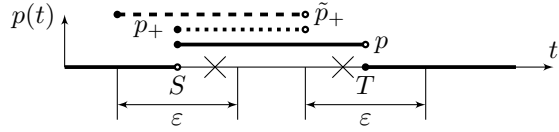
\begin{figure}
\centering
\begin{tikzpicture}[scale=1,>=latex']
  \path [draw,->] (0,0)--(0,{\highV+0.2}) node (xaxis) [left, yshift=-0.8] {$p(t)$};
  \path [draw,->] (0,0)--(6.5,0)   node (yaxis) [above] {$t$};

  \neline{0}{1.5}{0}
  \feline{1.5}{4.0}{\highV-0.2}
  \fnline{4.0}{6.0}{0}
  \node[below] at (1.5,0) {$S$};
  \node[below] at (4.0,0) {$T$};

  \path [draw,line width=0.5mm,-,dashed] (0.7,{\highV+0.2}) -- (3.2,{\highV+0.2}) node [midway,
above] {};%
  \fidot{(0.7,{\highV+0.2})}%
  \emdot{(3.2,{\highV+0.2})}%

  \path [draw,line width=0.5mm,-,dotted] (1.5,{\highV+0.0}) -- (3.2,{\highV+0.0}) node [midway,
above] {};%
  \fidot{(1.5,{\highV+0})}%
  \emdot{(3.2,{\highV+0})}%

  \node [right] at (4.0,{\highV-0.2}) {$p$};
  \node [right] at (0.8,{\highV+0}) {$p_+$};
  \node [right] at (3.2,{\highV+0.2}) {$\tilde{p}_+$};

  \tcross{(2.0,0)}{}
  \tcross{(3.7,0)}{}

  \path [draw,thin,-] (2.3,0) -- (2.3,-0.7) node [below] {};
  \path [draw,thin,-] (0.7,0) -- (0.7,-0.7) node [below] {};
  \path [draw,thin,-] (3.2,0) -- (3.2,-0.7) node [below] {};
  \path [draw,thin,-] (4.8,0) -- (4.8,-0.7) node [below] {};
  \path [draw,<->] (0.7,-0.5)-- node[below] {$\varepsilon$} (2.3,-0.5);
  \path [draw,<->] (3.2,-0.5)-- node[below] {$\varepsilon$} (4.8,-0.5);
\end{tikzpicture}
\caption{Input pulse~$p$, together with its derived pulses~$p_+$
and~$\tilde{p}_+$, and measure points for time~$\tilde{t}$.}
\label{fig:proof:p}
\end{figure}

Because we chose~$\varepsilon$ to be smaller than all differences of distinct
measure points for time~$t$ (and hence also for time~$\tilde{t}$), we see that there
is no measure point for~$\tilde{t}$ in the interval $[T,T+\varepsilon/2)$.

Likewise, by defining~$p_-$ as the pulse with upwards transition at
time~$S+\varepsilon/2$ and downwards transition at time~$T$, we infer that
there is one measure point for time~$\tilde{t}$ in the interval
$[S,S+\varepsilon/2)$ and there is no measure point for~$\tilde{t}$
in the interval $[S-\varepsilon/2,S)$ (see \figref{fig:proof:p}).

Now consider the pulse~$\tilde{p}_+$ generated by shifting pulse~$p$ into the
past by~$\varepsilon/2$, i.e., $\tilde{p}_+$'s upwards transition is at
time~$S-\varepsilon/2$ and its downwards transition is at~$T-\varepsilon/2$.
Because~$\tilde{p}_+$ coincides with~$p_+$ at all measure points
for~$\tilde{t}$, the output signal $s^{\tilde{p}_+}$
corresponding to~$\tilde{p}_+$ has value~$0$
at time~$\tilde{t}$.
Because $DG(\tilde{t}) = DG(\tilde{t}+\varepsilon/2)$, the second part of Lemma~\ref{lem:children} shows that
$s^{\tilde{p}_+}(\tilde{t}+\varepsilon/2)=0$.

Likewise, by considering~$p$ shifted into the future by~$\varepsilon/2$, we see
that also $s^{\tilde{p}_+}(\tilde{t}-\varepsilon/2)=0$.
But because $s^p(\tilde{t})=1$, this shows that the output signal~$s^p$
contains a pulse of length strictly less than~$\varepsilon$.
Since~$\varepsilon$ can be chosen arbitrarily small, this concludes the
proof.

\section{Bounded Single-History Channels}\label{sec:single}

This section formally introduces the notion of bounded single-history channels
in the
binary circuit model.
They are a generalization of constant-delay channels that cover all
existing channel models for binary circuit models we are aware of.

Intuitively, a bounded single-history channel propagates each event,
occurring at time~$t$, of the
input signal to an event at the output happening after some bounded
\emph{output-to-input}
delay $\delta(T)$, which depends on the \emph{input-to-previous-output}
delay $T=t-t'$. Note that $T$ is positive if the channel delay is short
compared to the input signal transition times, and negative otherwise.
\figref{fig:shc} illustrates this relation and the involved
delays. In case FIFO order would be invalidated, i.e., 
$t+\delta(T)\leq t'$, such that the next output event would not occur
after the previous one, both events annihilate.

There exist two variants of bounded single-history channels in the literature,
depending on whether the time of an annihilated event is remembered or not.
We dub these two variants {\em forgetful\/} and {\em non-forgetful\/}
bounded single-history channels, which we both formally define below.
At the end of this section, we give a list of channel models that are special
cases of our definition of bounded single-history channels.

\def\highV{0.5}
\def\offV{0.5}
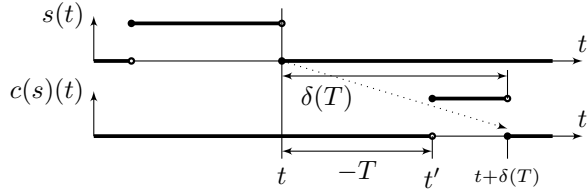
\begin{figure}[tbh]
\centering
\begin{tikzpicture}[scale=1,>=latex']
  \path [draw,->] (0,{\offV+\highV})--(0,{\offV+\highV+\highV+0.1}) node (xaxis) [left] {$s(t)$};
  \path [draw,->] (0,{\offV+\highV})--(6.5,{\offV+\highV})   node (yaxis) [above] {$t$};

  \neline{0}{0.5}{{\offV+\highV}}
  \feline{0.5}{2.5}{{\offV+\highV+\highV}}
  \fnline{2.5}{6.1}{{\offV+\highV}}
 
  \path [draw,thin,-] (2.5,1.7) -- (2.5,-0.3) node [below] {$t$};

  \path [draw,->] (0,0)--(0,{\highV+0.1}) node (xaxis) [left] {$c(s)(t)$};
  \path [draw,->] (0,0)--(6.5,0)   node (yaxis) [above] {$t$};

  \neline{0}{4.5}{0}
  \feline{4.5}{5.5}{\highV}
  \fnline{5.5}{6.1}{0}

  \path [draw,thin,-] (4.5,0) -- (4.5,-0.3) node [below] {$t'$};
  \path [draw,thin,-] (5.5,\highV) -- (5.5,{\highV+0.38});
  \path [draw,thin,-] (5.5,0) -- (5.5,-0.3) node [below] {$\scriptstyle t+\delta(T)$};
 
  \path [draw,<->] (2.5,{\highV+0.3})-- node[below,pos=0.2] {$\delta(T)$} (5.5,{\highV+0.3});
  \path [draw,<->] (2.5,-0.2)-- node[below] {$-T$} (4.5,-0.2);

  \path [draw,dotted,->] (2.5,{1.1-0.1}) -- (5.5,{0.0+0.1}) node {};

\end{tikzpicture}
\myvspace{-0.2cm}
\caption{Input/output signal of a bounded single-history channel, involving the input-to-previous-output delay $T$ and
the resulting output-to-input delay $\delta(T)$.}\label{fig:shc}\myvspace{-0.3cm}
\end{figure}

Formally, a {\em bounded single-history channel\/}~$c$ is characterized by an {\em initial
     value\/}~$x\in\Bool$,
     a nondecreasing {\em delay function\/}~$\delta:\IR\to\IR$ such
     that~$\delta(\infty)=\lim_{T\to\infty}\delta(T)$ is finite and positive,
and 
     the fact whether it is forgetful or not.
In the rest of the paper, we will drop the qualifier ``bounded'' when referring
to bounded single-history channels.
We detail the channel behavior in the next two subsections.

\subsection{Forgetful Single-History Channels}

This class of channels includes the classical inertial delay channels as used,
for example, in VHDL simulators~\cite{Ash08}.

Their behavior is defined by the following algorithm:
Let~$s$ be a signal. In case the channel's initial value~$x$ is equal
to the initial value of~$s$, or there is an event at time~$0$ in the
event list of~$s$, let the channel's {\em input list}
$\big((t_n,x_n)\big)_n$ be the event list of~$s$. Otherwise, let the
channel's input list be the event list of~$s$ with an additional event
at time~$0$ and value equal to the initial value of~$s$.
The algorithm iterates the input list and updates the {\em output list}, which
will define the channel's output signal~$c(s)$.

Initially, let $(-\infty,x)$ be the sole element of the output list.
In its $n$th iteration the algorithm considers input event~$(t_n,x_n)$ and
modifies the output list accordingly:
\begin{enumerate}
\item 
Denote by~$(t_n',x_n')$ the last event
in the output list.
If~$x_n = x_n'$, then input event $(t_n,x_n)$ has no effect:
Proceed to the $(n+1)$th iteration.

\item Otherwise, let
$ T_n =  t_n - t_n'$ be the difference of input and previous-output event times.\footnote{Note that $T_n=\infty$ is possible.
In this case $\delta(T_n) = \delta(\infty) =
\lim_{T\to\infty}\delta(T)$, which is finite by assumption.}

  If $t_n+\delta(T_n)>t_n'$, then add the event 
  $\big(t_n+\delta(T_n),x_n\big)$ to the output list.

  If $t_n+\delta(T_n)\leq t_n'$, then delete the event $(t_n',x_n')$ from the
output list.
\end{enumerate}

Note that the output sequence's first event is
     always~$(-\infty,x)$, all other events have positive times (since $\delta(\infty)>0$),
     its sequence of event times is strictly increasing, and its sequence of 
     values is alternating.

If the input list is finite, the algorithm halts.
If not, the output sequence nonetheless stabilizes in the sense that, for every
time~$t$, there exists some~$N$ such that all iterations with $n\geq
N$ make no changes to the output sequence at times $\leq t$.
The next lemma (Lemma~\ref{lem:def:chan:forget}) proves this property and makes the limit output list as $n$ tends
to infinity well-defined.
So, even if the input list is infinite, there exists a well-defined (infinite)
output list~$S$ that is the result of the described algorithm.
The channel's output signal~$c(s)$ is then defined by event
list~$S$:

\begin{definition}\label{def:shchannel:forget}
For input signal $s$, the output signal~$c(s)$ of the forgetful single-history
channel~$c$
is the signal whose event list is the list~$S$ as defined by the above
algorithm.
\end{definition}


\begin{lem}\label{lem:def:chan:forget}
Denote by~$S_n$ the output list after the $n$th iteration of the forgetful
channel algorithm, and by $S_n|t$ its
restriction to the
events at times at most~$t$.
For all~$t$ there exists an~$N$ such that~$S_n|t$ is constant 
for all $n\geq N$.
\end{lem}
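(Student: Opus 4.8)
The plan is to fix a time~$t$ and show that, for all sufficiently large~$n$, the $n$th iteration of the forgetful channel algorithm does not change~$S_n|t$; the index~$N$ beyond which this holds then witnesses the claim. If the input list is finite the statement is immediate, so assume it is infinite; then the event times~$t_n$ are strictly increasing and discrete, hence $t_n\to\infty$.

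The heart of the argument is to control where iteration~$n$ can touch the output list. Writing $(t_n',x_n')$ for the last output event at the start of iteration~$n$ and $T_n=t_n-t_n'$, the iteration either does nothing (if $x_n=x_n'$), appends an event at time $t_n+\delta(T_n)$ (if $t_n+\delta(T_n)>t_n'$), or deletes $(t_n',x_n')$ (if $t_n+\delta(T_n)\leq t_n'$). First I would observe that the append/delete alternative is governed by the strictly increasing function $g(T)=T+\delta(T)$: a deletion occurs exactly when $g(T_n)\leq 0$, and since $\delta(\infty)$ is finite one has $g(T)\to+\infty$, so $T^{*}:=\sup\{T:g(T)\leq 0\}$ is finite (taking $T^{*}=-\infty$ when there is no such~$T$, i.e.\ when no deletion ever occurs). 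Consequently a deletion in iteration~$n$ removes an event at time $t_n'=t_n-T_n\geq t_n-T^{*}$, so once $t_n>t+T^{*}$ --- which holds for all large~$n$ --- every deleted event lies strictly above~$t$, and $S_n|t$ is untouched.

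It remains to rule out appends at a time~$\leq t$. Suppose iteration~$n$ appends an event at time $t_n+\delta(T_n)\leq t$. Then $\delta(T_n)\leq t-t_n$, while the append condition $t_n+\delta(T_n)>t_n'$ forces $t_n'<t$, hence $T_n=t_n-t_n'>t_n-t$. Since $\delta(\infty)>0$ and $\delta$ is nondecreasing, fix $T_0$ with $\delta(T_0)>0$; then for all~$n$ large enough that $t_n>t$ and $t_n-t\geq T_0$, monotonicity gives $\delta(T_n)\geq\delta(T_0)>0>t-t_n$, contradicting $\delta(T_n)\leq t-t_n$. Hence, for~$n$ beyond a threshold depending only on~$t$ (through~$T^{*}$ and~$T_0$), iteration~$n$ neither appends nor deletes at a time~$\leq t$, so $S_n|t=S_{n-1}|t$; this~$N$ works.

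The step I expect to be the main obstacle is the append case, which amounts to showing that the algorithm cannot insert events arbitrarily far in the past. The two properties of the delay function that make this go through are that $\delta$ is bounded above by the finite limit $\delta(\infty)$ and that this limit is \emph{positive}: the append condition pins $t_n'$ below~$t$, which forces $T_n$ to be large, which in turn forces $\delta(T_n)$ close to its positive limit --- incompatible with the appended event lying below~$t$. This is exactly where the standing hypotheses on the delay function are used.
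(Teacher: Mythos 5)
Your proof is correct and follows essentially the same strategy as the paper's: use $t_n\to\infty$ to find a threshold beyond which neither branch of the iteration (append or delete) can touch the output list at times $\leq t$, exploiting monotonicity of $\delta$ and the finiteness and positivity of $\delta(\infty)$. The only difference is cosmetic\dash---the paper bounds both cases via the single constant $\delta(-\delta(\infty))$, whereas you use $T^{*}=\sup\{T: T+\delta(T)\leq 0\}$ for deletions and a point $T_0$ with $\delta(T_0)>0$ for appends\dash---but the underlying estimates are equivalent.
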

\begin{proof}
The lemma is trivial if the input list is finite, so we assume it to be
infinite.

Because the sequence of input event times~$(t_n)$ tends to infinity,
there exists an~$N$ such that
\begin{equation}\label{eq:def:chan:n0}
 t_{N} \geq \max \big( t \,,\, t-\delta(-\delta(\infty)) \big) \enspace.
\end{equation}
We show by induction that $S_n|t = S_{N}|t$ for all $n\geq N$.
This is trivial for $n=N$, so let $n>N$.
Then $t_n > t_{N}$.

Let~$(t_n',x_n')$ be the last element in~$S_{n-1}$, and $T_n = t_n - t_n'$.
The case $x_n = x_n'$ is trivial, so let $x_n \neq x_n'$.
We distinguish two cases, depending on whether $\delta(T_n)>-T_n$ or not:

\smallskip

Case 1: $\delta(T_n)>-T_n$.
Because~$\delta$ is nondecreasing, $\delta(T_n)\leq \delta(\infty)$, and
hence $T_n> -\delta(\infty)$ and also $\delta(T_n) \geq
\delta(-\delta(\infty))$.
This implies $t_n + \delta(T_n) > t_{N} + \delta(-\delta(\infty))
\geq t$ by using~\eqref{eq:def:chan:n0}.
Hence $S_n | t = S_{n-1}|t = S_{N}|t$ by
the induction hypothesis.

\smallskip

Case 2: $\delta(T_n)\leq -T_n$.
We show that $t_n' > t$ by contradiction:
Let $t_n'\leq t$.
Then $T_n = t_n - t_n' > t_{N} - t \geq 0$, by
using~\eqref{eq:def:chan:n0}. From $\delta(\infty) > 0$, we
thus obtain $T_n > -\delta(\infty)$.
Hence $\delta(T_n) \geq \delta(-\delta(\infty))$ by monotonicity of~$\delta$.
By assumption, $\delta(-\delta(\infty))\leq\delta(T_n) \leq -T_n = t_n' - t_n$, which
implies $t_n \leq t_n' - \delta(-\delta(\infty))$, i.e., $t_{N} < t - \delta(-\delta(\infty))$.
This is a contradiction to~\eqref{eq:def:chan:n0}, which shows that $t_n'>t$.
Hence
$ S_n | t = S_{n-1} | t = S_{N} | t$ 
by the induction hypothesis.
\end{proof}

\subsection{Non-Forgetful Single-History Channels}

The PID channel introduced by Bellido-D\'{i}az et al.~\cite{BDJCAVH00}
is not covered by the above forgetful single-history channels, since
it has been designed to reasonably match analog RC waveforms: 
Analog signals like exponential functions do not ``forget'' 
sub-threshold pulses. Hence, they cannot be modeled via delay functions $\delta(T)$ 
that depend on the input-to-previous output delay~$T$. To also
cover the PID model, we hence introduce non-forgetful 
single-history channels, the delay function of which may also
depend on the last annihilated event.

The output-eventlist generation algorithm for non-forgetful channels thus maintains an additional
variable~$r$, which, in each iteration, contains the time of the \emph{potential
output event} considered in the last iteration. Note that this approach 
was already
used in the PID-channel-model by Bellido-D\'\i az et al.~\cite[Fig.~13]{BDJCAVH00}. Similar to
the forgetful case, it determines the output 
signal~$c(s)$ of a non-forgetful
single-history channel~$c$, given input signal~$s$ with input
event list~$\big( (t_n,x_n) \big)_n$
as follows:

Initially, the output list contains the sole element~$(-\infty,x)$
and~$r=r_{-1}=-\infty$.
In its $n$th iteration, the algorithm considers input event~$(t_n,x_n)$ and
modifies the output list accordingly:
\begin{enumerate}
\item 
Denote by~$(t_n',x_n')$ the last event
in the output list.
If~$x_n = x_n'$, then input event $(t_n,x_n)$ has no effect:
Proceed to the $(n+1)$th iteration.
\item 
Otherwise, let
$ T_n =  t_n - r_{n-1}$ be the difference of input and most recent potential output event times
and set $r_n = t_n + \delta(T_n)$.

  If $t_n+\delta(T_n)>r_{n-1}$, then add the event 
  $\big(t_n+\delta(T_n),x_n\big)$ to the output list.

  If $t_n+\delta(T_n)\leq r_{n-1}$, then delete the event $(t_n',x_n')$ from the
output list.
\end{enumerate}

We first show that if event $(t_n',x_n')$ is deleted in the~$n$th
     iteration, then $r_{n-1} = t_n'$.
\begin{proof}
Assume by contradiction that this is not the case, and let~$n$ be the
     first iteration where the statement is violated.
Then it must hold that $n\ge 2$, as in iteration~$n-2$ some event
     $(\tau,x_{n-2})$ must have been added to the output list that was deleted in
     iteration~$n-1$, due to $\tau' = t_{n-1}+\delta(T_{n-1})\leq
     r_{n-2} = \tau$.
Furthermore, in iteration~$n$, our assumption of deleting some event with a time
different from~$r_{n-1}=\tau'$ implies $\tau'' = t_{n}+\delta(T_{n})\leq \tau'$.
However, from $t_{n-1} < t_n$, $\tau \ge \tau'$ and monotonicity
     of~$\delta$, $t_{n-1}+\delta(t_{n-1}-\tau) < t+\delta(t-\tau')$,
     i.e., $\tau' < \tau''$, which provides the required contradiction.
\end{proof}

Thus, an event is either deleted in the next iteration, or never
deleted. The output sequence's first event~$(-\infty,x)$ is obviously
never deleted.

By analogous arguments, one can show that the sequence of event times
is strictly increasing, with an alternating sequence of values.
Unlike in the case of forgetful channels, however, the eventlist generation
algorithm may produce events with \emph{finite} negative times that
will be removed from the final output.
In case the input list is finite, the algorithm clearly halts.
If not, we again have the same stabilization property as for forgetful
single-history channels, which we will provide in
Lemma~\ref{lem:def:chan:nonforget} below.
Thus the algorithm's final output list~$S$ is again well-defined and we can define:

\begin{definition}\label{def:shchannel:nonforget}
For input signal $s$, the output signal~$c(s)$ of the forgetful single-history
channel~$c$
is the signal whose event list is the list~$S$ as defined by the above
algorithm.
after deleting all events with finite negative times and 
the first non-negative
time event
if its value is equal to the channel's initial value~$x$.
\end{definition}

\begin{lem}\label{lem:def:chan:nonforget}
Denote by~$S_n$ the output list after the $n$-th iteration of the forgetful
channel algorithm, and by $S_n|t$ its
restriction to the
events at times at most~$t$.
For all~$t$, there exists an~$N$ such that~$S_n|t$ is constant 
for all $n\geq N$.
\end{lem}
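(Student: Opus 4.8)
The statement to prove, Lemma~\ref{lem:def:chan:nonforget}, is the stabilization property for the non-forgetful channel algorithm, and it is the exact analogue of Lemma~\ref{lem:def:chan:forget} for the forgetful case. The plan is to mimic the structure of the proof of Lemma~\ref{lem:def:chan:forget} as closely as possible: assume the input list is infinite (the finite case being trivial), use that $(t_n)\to\infty$ together with the boundedness $\delta(T)\le\delta(\infty)<\infty$ and monotonicity of $\delta$ to pick an index $N$ past which no iteration can touch the output list below time $t$, and then prove $S_n|t=S_N|t$ for all $n\ge N$ by induction on $n$. The only structural difference is that the relevant ``reference time'' is now $r_{n-1}$ (the time of the last potential output event) rather than the time $t_n'$ of the last surviving output event; but the already-established fact that a deleted event satisfies $r_{n-1}=t_n'$, and that every event is either deleted in the very next iteration or never deleted, lets us translate between the two.

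First I would fix $t$ and, exactly as in the forgetful proof, choose $N$ with $t_N\ge\max\bigl(t,\,t-\delta(-\delta(\infty))\bigr)$, so that in particular $t_n>t_N$ for $n>N$. For the inductive step with $n>N$, the case $x_n=x_n'$ is immediate, so assume $x_n\ne x_n'$ and split on whether $\delta(T_n)>-T_n$ or $\delta(T_n)\le -T_n$, where now $T_n=t_n-r_{n-1}$. In Case~1 the algorithm adds the event $(t_n+\delta(T_n),x_n)$; here I bound $T_n>-\delta(\infty)$ from $\delta(T_n)\le\delta(\infty)$, hence $\delta(T_n)\ge\delta(-\delta(\infty))$ by monotonicity, hence $t_n+\delta(T_n)\ge t_N+\delta(-\delta(\infty))\ge t$, so the added event lies strictly above $t$ and $S_n|t=S_{n-1}|t=S_N|t$ by induction. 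In Case~2 the algorithm deletes $(t_n',x_n')$; by the auxiliary fact proved just before the lemma, $t_n'=r_{n-1}$, so $T_n=t_n-t_n'$ and the very same contradiction argument as in the forgetful Case~2 shows $t_n'>t$, whence the deletion happens above $t$ and again $S_n|t=S_{n-1}|t=S_N|t$.

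The main obstacle — really the only place where care beyond copying the forgetful argument is needed — is making sure the bookkeeping variable $r_{n-1}$ behaves the way $t_n'$ did. Concretely, one must be certain that in the deletion case $r_{n-1}$ equals $t_n'$ (so that ``$\delta(T_n)\le -T_n$'' really is the condition ``$t_n+\delta(T_n)\le t_n'$'' that drives the deletion), and that $r_{n-1}$ is itself a genuine previously-considered potential-output time lying in the appropriate range; both are supplied by the auxiliary claim immediately preceding the lemma (``if $(t_n',x_n')$ is deleted in the $n$th iteration then $r_{n-1}=t_n'$'') and by the remark that each event is either deleted next iteration or never. Once that translation is in hand, the estimates are verbatim the forgetful ones. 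I would therefore write the proof as: ``The proof is identical to that of Lemma~\ref{lem:def:chan:forget}, with $r_{n-1}$ in place of $t_n'$ in the definition of $T_n$; in the deletion case one uses the fact shown above that the deleted event satisfies $t_n'=r_{n-1}$, so that the estimate $t_n'>t$ goes through unchanged.'' and then, if the referee wants it spelled out, include the two-case induction as above.
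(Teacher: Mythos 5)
Your proof is correct and follows essentially the same route as the paper, whose own proof is only a two-line sketch of the same two ingredients: the lower bound on $t_n+\delta(T_n)$ coming from $T_n>-\delta(\infty)$ (so that, since $t_n\to\infty$, all newly added output events eventually lie above $t$), and the fact that a deleted event is always the most recently scheduled one, i.e.\ $t_n'=r_{n-1}$. Your fleshed-out two-case induction, reusing the forgetful argument with $r_{n-1}$ in place of $t_n'$ and invoking the auxiliary claim in the deletion case, is a valid and more detailed rendering of that sketch.
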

\begin{proof}
The lemma follows from the fact that an event can only be deleted one
     iteration after it was added to the output list, and the fact
     that in each iteration~$n$, $T_n > -\delta(\infty)$ and thus
     $t_n+\delta(T_n)$ is lower bounded by
     $t_n+\lim_{t\to 0^+}\delta(-\delta(\infty)+t)$.
\end{proof}

\subsection{Examples of Single-History Channels}

Below, we summarize how the existing binary-value models
are mapped to our single-history channels:

\begin{enumerate}[1)]
\item A classic {\em pure-delay channel\/} is a single-history channel whose
     delay function~$\delta$ is constant and positive.
The behavior of a pure-delay channel does not depend on
     whether it is forgetful or not.

\item An inertial channel is a forgetful single-history channel whose delay function~$\delta$ is of the form 
\[ \delta(T) = \begin{cases} \delta_0 & \text{ if } T>T_0 \\ -T_0 & \text{ if } T\leq T_0 \end{cases} \]
for parameters~$\delta_0>0$ and~$T_0>-\delta_0$.
An inertial channel filters an incoming pulse if and only if its pulse length is less or equal to~$T_0+\delta_0$; otherwise, it is forwarded with delay~$\delta_0$.

\item The PID-channels of Bellido-D\'\i az et al.~\cite{BDJCAVH00} 
are non-forgetful with
delay function
\begin{equation}
\delta(T) = t_{p0} \cdot \left( 1 - e^{-(T - T_0)/\tau} \right)
\end{equation}
for certain (measured) positive parameters~$t_{p0}$,~$\tau$, and~$T_0$.
Note that $\delta(T_0)=0$, $\lim_{t\to\infty}\delta(T)=t_{p0}$, and
$\frac{d\delta(T)}{dT}|_{T=0}=t_{p0}/\tau$ here.
\end{enumerate}

\section{Bounded Short-Pulse Filtration with One Non-Constant Delay Channel}\label{sec:PD}

In this section we prove that bounded SPF is solvable as soon as there is a
single non-constant-delay single-history channel available.
More specifically, we show that, given a single-history channel with
non-constant delay, there exists a circuit that uses only constant-delay
channels apart from the given non-constant channel that solves bounded SPF.
Different circuits, and hence proofs, are used in for different types of 
channels.

For a single-history channel with delay function~$\delta$, let
     $\delta_\infty = \delta(\infty) = \lim_{t\to\infty}\delta(t)$ 
with $0<\delta_{\infty}<\infty$. The right limit of~$\delta$ 
at~$-\delta_{\infty}$ is denoted by
$\delta_{\inf} = \lim_{t\to 0^+}\delta(-\delta_\infty+t)$; note
that $\delta_{\inf}=-\infty$ is allowed here.

In the rest of this section, let~$c^*$ be a single-history channel
     that is not a constant-delay channel as defined in Section~\ref{sec:channels}.
This is equivalent to saying that its delay function~$\delta$ is
     non-constant for $T>-\delta_\infty$, because $T_n > -
     \delta_\infty$ in every step of the channel algorithm:   

\begin{lem}\label{lem:char:const}
A single-history channel with delay function~$\delta$
is a constant-delay channel if and only if $\delta$ is constant in the
open interval $( -\delta_\infty , \infty)$.
\end{lem}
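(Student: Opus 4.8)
The plan is to prove both directions of the biconditional, working from the two channel-generation algorithms (forgetful and non-forgetful) and from the definition of a constant-delay channel in Eq.~\eqref{eq:p:channel}. The key observation to establish first is the one already flagged in the surrounding text: in every iteration of either channel algorithm, the argument $T_n$ fed to $\delta$ satisfies $T_n > -\delta_\infty$. For the forgetful algorithm this is because if $x_n \neq x_n'$ and the new event $(t_n + \delta(T_n), x_n)$ is actually added (rather than annihilating the last output event), then $t_n + \delta(T_n) > t_n'$, i.e.\ $\delta(T_n) > -T_n$; combined with $\delta(T_n) \le \delta_\infty$ (monotonicity) this gives $T_n > -\delta_\infty$. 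When annihilation occurs, no call to $\delta$ matters for the output. For the non-forgetful algorithm the same bound follows from the auxiliary claim already proved in the excerpt (that a deleted event has time $r_{n-1}$) together with monotonicity. So only the values $\delta(T)$ for $T \in (-\delta_\infty, \infty)$ ever influence the output signal.

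For the ``if'' direction, suppose $\delta$ is constant, say $\delta \equiv d$, on $(-\delta_\infty,\infty)$; note $d = \delta_\infty > 0$. I would run the channel algorithm on an arbitrary input signal $s$ with event list $((t_n,x_n))$ and show by induction on $n$ that the output list after processing $(t_n, x_n)$ is exactly the prefix of $((t_k + d, x_k))_k$ consisting of events at times $\le t_n + d$, with no annihilations ever occurring. Annihilation would require $t_n + \delta(T_n) \le t_n'$ where $t_n'$ is the previous output time; but by induction $t_n' = t_{n-1} + d$ and $\delta(T_n) = d$ (since $T_n > -\delta_\infty$), so $t_n + d > t_{n-1} + d = t_n'$ because $t_n > t_{n-1}$. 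Hence every event is shifted by exactly $d$, which is precisely the constant-delay behaviour of Eq.~\eqref{eq:p:channel} with delay $d$ and the same initial value $x$. The forgetful/non-forgetful distinction is immaterial here since no annihilation occurs.

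For the ``only if'' direction, I argue the contrapositive: if $\delta$ is non-constant on $(-\delta_\infty,\infty)$, the channel is not a constant-delay channel. Pick $T_1 < T_2$ in $(-\delta_\infty,\infty)$ with $\delta(T_1) < \delta(T_2)$ (strict, by monotonicity plus non-constancy; one can take them to be continuity-or-jump points as needed). I would then exhibit a single input signal (a single rising edge, or a pair of well-separated edges chosen so that the relevant $T_n$ lands at $T_1$ in one execution and at $T_2$ in another scaled/shifted execution) whose output delay differs from any fixed $\delta$, contradicting Eq.~\eqref{eq:p:channel}. Concretely, for a lone upward input event at time $t$ with channel initial value $x \neq 1$, the previous output event is the virtual one at $-\infty$, so $T = \infty$ and the output edge is at $t + \delta_\infty$; and one can realize any finite $T > -\delta_\infty$ by placing a first edge appropriately, producing an output edge at time $t_2 + \delta(T)$. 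Since $\delta(T_1)\neq\delta(T_2)$, two inputs that would be shifted by the same amount under a constant-delay channel are shifted by different amounts here, so no single $\delta$ in Eq.~\eqref{eq:p:channel} reproduces the behaviour.

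The main obstacle is the ``only if'' direction: one must carefully construct input signals that actually exercise the two distinct values $\delta(T_1), \delta(T_2)$ while staying within the constraint $T_n > -\delta_\infty$ and while avoiding unwanted annihilations, and handle the non-forgetful case where the bookkeeping variable $r$ rather than the last output time governs $T_n$. This is a routine but slightly fiddly case analysis; the ``if'' direction and the $T_n > -\delta_\infty$ bound are straightforward.
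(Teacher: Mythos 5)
Your plan is correct and rests on exactly the observation the paper itself uses to justify this lemma (which it states without a separate proof): $T_n > -\delta_\infty$ in every step of the channel algorithm, so only the values of $\delta$ on $(-\delta_\infty,\infty]$ can affect the output, and a single rising edge pins the would-be constant delay to $\delta_\infty$ while a suitably chosen pulse exposes any smaller value of $\delta$. One small simplification: the bound $T_n>-\delta_\infty$ holds unconditionally (not just when an event is added), since the last output event was emitted at some time at most $t_m+\delta_\infty$ with $t_m<t_n$, which spares you the case split on whether annihilation occurs.
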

%
%

Note that $\delta_{\inf}<\delta_{\infty}$ in case of a non-constant
delay channel.
From the fact that $-\delta_\infty < T_n \leq \infty$ in every step of the
     channel algorithm, we also obtain:

\begin{lem}\label{lem:char:delaybounds}
All events 
in the event list of a single-history
     channel's input signal are delayed by times within
     $[\delta_{\inf},\delta_\infty]$.
\end{lem}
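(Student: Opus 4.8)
The plan is to trace through the event-list generation algorithm (both the forgetful and non-forgetful variant) and show that every output event that survives into the final list is of the form $\big(t_n+\delta(T_n),x_n\big)$ with $T_n$ ranging over $(-\delta_\infty,\infty]$, and then use monotonicity of $\delta$ to bound the displacement $\delta(T_n)$. The key observation, already recorded in the text preceding this lemma, is that $-\delta_\infty < T_n \leq \infty$ holds in every step; I would first make this precise. For the forgetful channel, $T_n = t_n - t_n'$ where $(t_n',x_n')$ is the last event currently on the output list; if $T_n \leq -\delta_\infty$ then $\delta(T_n) \leq \delta(-\delta_\infty)$ — but one must be a little careful, since $\delta$ need only be nondecreasing and $\delta_{\inf}$ is only the \emph{right} limit at $-\delta_\infty$. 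The cleaner route is: when $t_n + \delta(T_n) \leq t_n'$ the event $(t_n',x_n')$ is deleted and no new event is created, so no output event ever carries a delay violating the bound; and when $t_n + \delta(T_n) > t_n'$, i.e. $\delta(T_n) > -T_n$, combined with $\delta(T_n) \leq \delta_\infty$ (monotonicity and $\delta(\infty)=\delta_\infty$) we get $-T_n < \delta_\infty$, hence $T_n > -\delta_\infty$, hence $\delta(T_n) \geq \delta_{\inf}$ by definition of the right limit. This gives $\delta_{\inf} \leq \delta(T_n) \leq \delta_\infty$, which is exactly the claim for that event. For the non-forgetful channel the same argument applies verbatim with $t_n'$ replaced by $r_{n-1}$: an event is created only when $t_n + \delta(T_n) > r_{n-1}$, i.e. $\delta(T_n) > -(t_n - r_{n-1}) = -T_n$, and the identical chain of inequalities yields $T_n > -\delta_\infty$ and thus $\delta(T_n) \in [\delta_{\inf},\delta_\infty]$.

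Next I would address the few bookkeeping points that the two algorithms add on top of this core argument. In the non-forgetful case, Definition~\ref{def:shchannel:nonforget} discards events with finite negative times and possibly the first non-negative event; since the surviving events are a \emph{subset} of those produced by the algorithm, and each of those already satisfies the delay bound, the conclusion is inherited. Similarly, the virtual initial event $(-\infty,x)$ is exempt — the lemma speaks of events \emph{delayed by} times within $[\delta_{\inf},\delta_\infty]$, and the initial event is not the delayed image of any input event, so it does not fall under the claim (or one simply reads ``delay'' as applying to the non-initial events, which is how the algorithms are phrased). I would also remark that $T_n = \infty$ is permitted, in which case $\delta(T_n) = \delta_\infty$, comfortably inside the interval.

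The main obstacle, such as it is, is purely a matter of being careful with the one-sided limit $\delta_{\inf} = \lim_{t\to 0^+}\delta(-\delta_\infty + t)$: one cannot simply write $\delta(T_n) \geq \delta(-\delta_\infty)$ and be done, because $\delta$ at the single point $-\delta_\infty$ could in principle be much smaller than its right limit, and worse, the lemma's lower bound is stated in terms of $\delta_{\inf}$, not $\delta(-\delta_\infty)$. The resolution is the strict inequality $T_n > -\delta_\infty$ established above: for every $\eta > 0$ we have $T_n \geq -\delta_\infty + \eta$ for $\eta$ small enough relative to $T_n + \delta_\infty > 0$, so $\delta(T_n) \geq \delta(-\delta_\infty + \eta) \to \delta_{\inf}$, giving $\delta(T_n) \geq \delta_{\inf}$. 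Once that subtlety is handled, the rest is a routine induction over iterations of the algorithm, entirely parallel to the stabilization proofs in Lemmas~\ref{lem:def:chan:forget} and~\ref{lem:def:chan:nonforget}.
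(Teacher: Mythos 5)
Your proposal is correct and matches the paper's (essentially unstated) argument: the paper derives the lemma directly from the observation that $-\delta_\infty < T_n \leq \infty$ in every step of the channel algorithm, together with monotonicity of $\delta$ and the definition of $\delta_{\inf}$ as a right limit. Your write-up simply fills in the details the paper leaves implicit, including the correct handling of the one-sided limit at $-\delta_\infty$.
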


\subsection{Forgetful Channels}

In this subsection, assume that~$c^*$ is forgetful.
Consider circuit~$C_\mathrm{ff}$ depicted in \figref{fig:Cinv},
     which contains channel~$c^*$ as well as two constant-delay channels.
For the moment assume that the initial value of~$c^*$ is~$0$.
We will show at the end of this subsection that bounded SPF is also
solvable with~$c^*$ if its initial value is~$1$.

\begin{figure}[bt]
\centering
\begin{tikzpicture}[circuit logic US, scale=1.1, large circuit symbols]
\matrix[column sep=15mm]
{
& \node [or gate] (or) {OR}; &  \\
};

\draw (or.input 1) -- ++(-2.5,0) node [pos=0.7,above] {$\delta=1$} node [pos=0.7,below] {$x=0$} node[left] (i) {$i$};
\draw (or.output) 
	-- ++(right:0.1)node[circle,inner sep=1pt,fill=black,draw] {}
	-- ++(down:0.6)
	-- ++(left:2.0) node[pos=0.7, above] {$\delta=\varepsilon$} node[pos=0.7,below] {$x=0$}
	-- ++(up:0.4)
	|- (or.input 2)
	;
\draw (or.output)
	-- ++(1.3,0) node [midway,above] {$c^*$} node[midway,below] {} 
	node[right] {$o$}
	;

\end{tikzpicture}
\myvspace{-0.2cm}
\caption{Circuit $C_\mathrm{ff}$.}\label{fig:Cinv}\myvspace{-0.2cm}
\end{figure}

It remains to describe how to choose delay parameter~$\varepsilon>0$.
We will show in the following that for each non-constant-delay forgetful single-history
channel~$c$ there exists a~$\gamma(c)>0$ such that~$c(s)$ is the zero signal
whenever~$s$ is a pulse of length less than~$\gamma(c)$.
More generally we will show that, if signal~$s$ does not contain pulses of
length greater or equal to~$\gamma(c)$, then~$c(s)$ is the zero signal.
We then choose 
$0 < \varepsilon < \gamma(c^*)$
for the delay parameter~$\varepsilon$ in circuit~$C_\mathrm{ff}$.

If the input signal of circuit~$C_\mathrm{ff}$ is a pulse of length at
least~$\varepsilon$, then the signal~$s_{OR}$ at the OR gate is eventually
stable~$1$ because of the $\varepsilon$-delay feedback loop, and hence the
circuit's output signal is eventually stable~$1$.
If the circuit's input signal is a pulse of length~$\Delta<\varepsilon$,
then~$s_{OR}$ only contains pulses of length~$\Delta<\gamma(c^*)$, from which
it follows that the circuit's output signal is zero.

Let~$\delta$ be the delay function of a single-history channel~$c$.
We define:
\begin{equation}
\gamma(c) = \inf \left\{ \Delta>0 \mid \Delta - \delta_\infty + \delta\big(
\Delta - \delta_\infty \big) > 0 \right\}
\end{equation}
We will prove~$\gamma(c^*)>0$ in Lemma~\ref{lem:constant:equiv}.
Before characterizing the non-constant-delay channels as those~$c$ with~$\gamma(c)>0$, we need a preliminary lemma on pulse-filtration
properties of non-constant-delay channels.

\begin{lem}\label{lem:gamma:immediate}
Let~$c$ be a non-constant-delay single-history channel with initial value~$0$.
If~$s$ is a pulse of length less than~$\gamma(c)$, then~$c(s)$ is zero.
\end{lem}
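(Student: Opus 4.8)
The plan is to feed the pulse $p$ of length $\Delta < \gamma(c)$ into the channel algorithm and track what the output list does over its (at most two) relevant iterations. First I would observe that, since the input signal is a single pulse, the channel's input list consists of exactly the virtual event $(-\infty,0)$, the upward event $(T,1)$, and the downward event $(T+\Delta,0)$ (the initial-value normalization in the forgetful-channel definition does nothing here, since the channel's initial value $0$ matches the signal's initial value). So there are just two non-trivial iterations to analyze.

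\smallskip

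In the first non-trivial iteration, the last output event is $(-\infty,0)$, so $T_1 = T - (-\infty) = \infty$, giving $\delta(T_1) = \delta_\infty$, and since $T + \delta_\infty > -\infty$ the event $(T+\delta_\infty, 1)$ is appended. In the second iteration the input event is $(T+\Delta, 0)$ and the last output event is $(T+\delta_\infty, 1)$, so $T_2 = (T+\Delta) - (T+\delta_\infty) = \Delta - \delta_\infty$. The condition for the pending $1$-event to be deleted (annihilated) is $t_2 + \delta(T_2) \le t_2'$, i.e.
\[
(T+\Delta) + \delta(\Delta - \delta_\infty) \le T + \delta_\infty,
\]
which rearranges exactly to $\Delta - \delta_\infty + \delta(\Delta - \delta_\infty) \le 0$. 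Since $\Delta < \gamma(c)$ and $\gamma(c)$ is the infimum of those $\Delta'>0$ for which the same expression is strictly positive, and since the map $\Delta' \mapsto \Delta' - \delta_\infty + \delta(\Delta' - \delta_\infty)$ is nondecreasing (as $\delta$ is nondecreasing), every $\Delta' \le \Delta$ — in particular $\Delta$ itself — fails the strict-positivity condition; hence $\Delta - \delta_\infty + \delta(\Delta - \delta_\infty) \le 0$, so the $1$-event is deleted and the output list returns to just $(-\infty,0)$. After these two iterations the algorithm halts, so $c(s)$ is the constant zero signal, as claimed.

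\smallskip

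The main subtlety — and the step I would be most careful about — is the boundary/monotonicity argument connecting $\Delta < \gamma(c)$ to the non-strict inequality $\Delta - \delta_\infty + \delta(\Delta - \delta_\infty) \le 0$: because $\delta$ is only assumed nondecreasing (not continuous), one must argue via monotonicity of the whole expression rather than by a limit, and one must note that $\gamma(c)$ being an infimum gives $\le 0$ rather than $<0$ at $\Delta$ itself, which is still enough since deletion happens under $\le$. A minor point to dispatch is that $T_2 = \Delta - \delta_\infty$ indeed lies in the admissible range $(-\delta_\infty, \infty]$ where $\delta$ is defined and the algorithm's invariants hold, which is immediate from $\Delta > 0$. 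Everything else is the routine bookkeeping sketched above.
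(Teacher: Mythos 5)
Your proof is correct and follows essentially the same route as the paper's: run the (at most two) nontrivial iterations of the channel algorithm on the pulse, observe that the deletion condition reduces to $\Delta-\delta_\infty+\delta(\Delta-\delta_\infty)\leq 0$, and conclude from $\Delta<\gamma(c)$. Your monotonicity detour is harmless but unnecessary — any $\Delta$ strictly below the infimum of the defining set simply cannot belong to it, which is all the paper uses.
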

\begin{proof}
The event list of~$s$ consists of two events~$(S,1)$ and~$(T,0)$, possibly preceded by an additional event~$(0,0)$, depending on whether~$S=0$ or~$S>0$.
Because the initial value of~$c$ is~$0$, we may assume without loss of
generality that the sequence consists of only these two events.

After iteration~$n=0$ of the channel-defining algorithm, the output list is
equal to 
$\big( (-\infty, 0),  (S + \delta_\infty, 1)  \big)$.
Hence, in iteration~$n=1$,
\[ T_1 = T - S - \delta_\infty < \gamma(c) - \delta_\infty \enspace, \]
i.e., $T_1+\delta_\infty < \gamma(c)$.
By definition of $\gamma(c)$, this implies
\[ (T_1+\delta_\infty)-\delta_\infty
+\delta((T_1+\delta_\infty)-\delta_\infty) \leq 0 \enspace, \]
and thus
$T_1 + \delta(T_1) \leq 0$. Thus, the event $(S+\delta_\infty,1)$ gets
removed from the output list and the output signal is the constant-zero signal.
\end{proof}

\begin{lem}\label{lem:constant:equiv}
Let~$c$ be a single-history channel with initial value~$0$.
The following statements are equivalent:
\begin{enumerate}
\item $c$ is not a constant-delay channel.
\item There exist a pulse~$s$ such that $c(s)$ is the zero
signal.
\item $\gamma(c) > 0$
\end{enumerate}
\end{lem}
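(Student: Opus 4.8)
The plan is to establish the three statements as a cycle of implications, $(2)\Rightarrow(1)$, $(1)\Rightarrow(3)$, $(3)\Rightarrow(2)$. Two of these arrows are essentially bookkeeping; the substantive one is $(1)\Rightarrow(3)$, which has to convert the purely qualitative fact ``$\delta$ is non-constant'' into the quantitative estimate $\gamma(c)>0$.

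For $(2)\Rightarrow(1)$ I would argue contrapositively. If $c$ is a constant-delay channel, then by Lemma~\ref{lem:char:const} its delay function $\delta$ is constant (hence equal to $\delta_\infty$) on $(-\delta_\infty,\infty)$, so $c$ acts exactly as in \eqref{eq:p:channel} with delay $\delta_\infty>0$. Then a pulse $s$ of length $\Delta>0$ at time $T\ge 0$ is mapped to the pulse of length $\Delta$ at time $T+\delta_\infty$ (the case distinction in \eqref{eq:p:channel} is immaterial since $T+\delta_\infty\ge\delta_\infty$), which is not the zero signal; hence $(2)$ fails.

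For $(3)\Rightarrow(2)$, I would first note that $(3)$ already forces $c$ to be non-constant-delay: as just observed, for a constant-delay channel $\Delta-\delta_\infty+\delta(\Delta-\delta_\infty)=\Delta>0$ for every $\Delta>0$, so the set defining $\gamma(c)$ is all of $(0,\infty)$ and $\gamma(c)=0$, contradicting $(3)$. Thus, under $(3)$, $c$ is a non-constant-delay channel with initial value $0$ and $\gamma(c)>0$, so one may pick a pulse $s$ of some length in $(0,\gamma(c))$; Lemma~\ref{lem:gamma:immediate} then yields that $c(s)$ is the zero signal, which is $(2)$.

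It remains to prove $(1)\Rightarrow(3)$. Assuming $c$ is not a constant-delay channel, Lemma~\ref{lem:char:const} tells us that the nondecreasing function $\delta$ is non-constant on $(-\delta_\infty,\infty)$, so there are $a<b$ in this interval with $\delta(a)<\delta(b)$; since $\delta(b)\le\lim_{T\to\infty}\delta(T)=\delta_\infty$, this gives $\delta(a)<\delta_\infty$. I would then set $\gamma'=\min\{\,a+\delta_\infty,\;\delta_\infty-\delta(a)\,\}$, which is strictly positive because $a>-\delta_\infty$ and $\delta(a)<\delta_\infty$. For any $\Delta\in(0,\gamma')$ we have $\Delta<a+\delta_\infty$, hence $\Delta-\delta_\infty<a$, so monotonicity of $\delta$ gives $\delta(\Delta-\delta_\infty)\le\delta(a)$ and therefore $\Delta-\delta_\infty+\delta(\Delta-\delta_\infty)\le\Delta-\delta_\infty+\delta(a)<\gamma'-\delta_\infty+\delta(a)\le 0$. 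Consequently the set $\{\Delta>0\mid\Delta-\delta_\infty+\delta(\Delta-\delta_\infty)>0\}$ is disjoint from $(0,\gamma')$, so $\gamma(c)\ge\gamma'>0$, which is $(3)$. The one genuine difficulty in the whole argument is precisely this last step: one has to locate the gap between $\delta$ and its supremum $\delta_\infty$ at a point just to the right of $-\delta_\infty$ and turn it into a uniform positive lower bound on admissible pulse lengths; monotonicity and boundedness of $\delta$ are exactly the hypotheses that have to be used with care, while everything else reduces to \eqref{eq:p:channel} and Lemma~\ref{lem:gamma:immediate}.
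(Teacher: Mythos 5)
Your proof is correct and takes essentially the same route as the paper: the paper likewise reduces $(2)\Leftrightarrow(3)$ to the observation that a pulse of length~$\Delta$ is filtered iff $\Delta-\delta_\infty+\delta(\Delta-\delta_\infty)\leq 0$, and its case split on $\beta\leq z$ versus $\beta>z$ in the necessity argument is exactly your choice of $\gamma'=\min\{a+\delta_\infty,\;\delta_\infty-\delta(a)\}$. The only differences are organizational\dash---a cycle of implications instead of two biconditionals, and delegating one direction to Lemma~\ref{lem:gamma:immediate} rather than restating the filtering criterion.
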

\begin{proof}
Let~$\delta$ be the delay function of~$c$.
If~$s$ is a pulse of length~$\Delta$, then~$c(s)$ is zero if and only if 
\begin{equation*}
\Delta - \delta_\infty + \delta\big( \Delta - \delta_\infty \big) \leq 0\enspace.
\end{equation*}
This implies $\gamma(c)\geq\Delta$ and hence
establishes the equivalence of (2) and (3).
If we can show that~$c$ is not a constant-delay channel if and only if
\begin{equation}
\exists\varepsilon>0\,:\ \delta( -\delta_\infty + \varepsilon ) \leq \delta_\infty - \varepsilon\enspace,\label{eq:equiv1}
\end{equation}
then we can choose~$\Delta= \varepsilon$, concluding the proof.

The sufficiency of Eq.~\eqref{eq:equiv1} for $c$ not being a constant-delay
channel is immediate.
To prove the necessity of Eq.~\eqref{eq:equiv1}, assume that~$c$ is not a
constant-delay channel.
Then there exist $\beta,\beta'>0$ such that    
$  \delta(\beta-\delta_\infty) < \delta(\beta'-\delta_\infty)$
and since $\delta$ is nondecreasing,
$ \delta(\beta-\delta_\infty) < \delta_\infty$.
Thus, there exists a $z > 0$, such that,
\begin{align}
  \delta(\beta-\delta_\infty) \leq \delta_\infty-z\enspace.\label{eq:z}
\end{align}

There are two cases for $z$:
If $\beta \leq z$, we obtain from Eq.~\eqref{eq:z} that
$\delta(\beta-\delta_\infty) \leq \delta_\infty-\beta$.
Choosing $\varepsilon=\beta$ shows that Eq.~\eqref{eq:equiv1} holds.
Otherwise, i.e., if $\beta > z$, we obtain from Eq.~\eqref{eq:z} and the
     fact that $\delta$ is nondecreasing
\begin{align}
  \delta(z-\delta_\infty) \leq \delta(\beta-\delta_\infty) \leq \delta_\infty-z\enspace.\notag
\end{align}
Choosing $\varepsilon=z$ shows that Eq.~\eqref{eq:equiv1} holds.
\end{proof}

Note that, while Lemmas~\ref{lem:gamma:immediate} and~\ref{lem:constant:equiv} hold for both forgetful and
non-forgetful single-history channels, the following lemma does fundamentally
not hold for arbitrary non-forgetful channels.

\begin{lem}\label{lem:train}
Let~$c$ be a non-constant-delay forgetful single-history channel with initial value~$0$.
Let~$s$ be a signal that does not contain pulses of length greater or equal
to~$\gamma(c)$ and that is not eventually equal to~$1$.
Then~$c(s)$ is the zero signal.
\end{lem}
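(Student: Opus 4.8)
The plan is to run the forgetful-channel algorithm symbolically on the input signal~$s$ and to prove, by induction on the iteration index, that the output list never accumulates more than one ``pending'' transition, which is moreover annihilated by the next input event; from this $c(s)=0$ follows. Two preliminary facts are needed. First, since~$c$ is not a constant-delay channel, Lemma~\ref{lem:constant:equiv} gives~$\gamma(c)>0$, and, directly from~$\gamma(c)$ being an infimum, every~$\Delta$ with~$0<\Delta<\gamma(c)$ satisfies~$\Delta-\delta_\infty+\delta(\Delta-\delta_\infty)\leq 0$. Second, the sequence of finite events processed by the algorithm alternates in value and, since~$s$ is not eventually~$1$, every up-transition is followed by a down-transition; and since~$s$ contains no pulse of length~$\geq\gamma(c)$, consecutive up- and down-transitions at times~$S<T$ satisfy~$T-S<\gamma(c)$. (We assume, as in the intended application where~$c$ is fed the OR-gate output of~$C_\mathrm{ff}$, that the initial value of~$s$ is~$0$, so that the algorithm's input list is exactly the event list of~$s$.)

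The heart of the proof is the following invariant, maintained after every iteration: the output list is either~$\bigl((-\infty,0)\bigr)$, or~$\bigl((-\infty,0),(S+\delta_\infty,1)\bigr)$ where~$S$ is the time of the most recently processed up-transition. It holds vacuously before the first iteration. For the inductive step observe that an up-transition at time~$S$ has an effect only when the last output value is~$0$, i.e.\ when the output list is~$\bigl((-\infty,0)\bigr)$; then the previous-output time is~$-\infty$, hence~$T_n=\infty$ and~$\delta(T_n)=\delta_\infty$, and the algorithm appends exactly~$(S+\delta_\infty,1)$. The next effective event is a down-transition~$(T,0)$, processed against~$(S+\delta_\infty,1)$, so~$T_n=(T-S)-\delta_\infty$; writing~$\Delta=T-S\in(0,\gamma(c))$ the deletion test~$t_n+\delta(T_n)\leq t_n'$ reads precisely~$\Delta-\delta_\infty+\delta(\Delta-\delta_\infty)\leq 0$, which holds by the first paragraph. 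Thus the pending transition is deleted and the output list returns to~$\bigl((-\infty,0)\bigr)$. This is the same annihilation computation as in the proof of Lemma~\ref{lem:gamma:immediate}; what makes it reusable here is \emph{forgetfulness}: after each annihilation the previous-output time is reset to~$-\infty$, so every subsequent up-transition is again delayed by the \emph{maximal} value~$\delta_\infty$ --- exactly the property that fails for non-forgetful channels.

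It remains to extract~$c(s)=0$ from the invariant. If~$s$ has finitely many events, its last event is a down-transition (as~$s$ is not eventually~$1$), so by the invariant the algorithm halts with output list~$\bigl((-\infty,0)\bigr)$, i.e.\ $c(s)$ is the zero signal. If~$s$ has infinitely many events, fix any time~$t$; by Lemma~\ref{lem:def:chan:forget} the restriction~$S_n|t$ of the output list stabilizes in~$n$, and by the invariant~$S_n$ is either~$\bigl((-\infty,0)\bigr)$ or~$\bigl((-\infty,0),(S+\delta_\infty,1)\bigr)$ with~$S\to\infty$, so for~$n$ large enough~$S+\delta_\infty>t$ and therefore~$S_n|t=\bigl((-\infty,0)\bigr)$. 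Since~$t$ was arbitrary, the limiting output list is~$\bigl((-\infty,0)\bigr)$, as required.

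The main obstacle is formulating and verifying the invariant rather than the arithmetic: one must check that ``no effect'' events do not disrupt the up/down alternation, that an up-transition is always processed with previous-output time~$-\infty$ (hence with delay~$\delta_\infty$), and that the inductive case for a down-transition exactly reproduces the inequality defining~$\gamma(c)$. Once the invariant is in place, the conclusion --- including the limiting argument for infinite event lists via Lemma~\ref{lem:def:chan:forget} --- is immediate.
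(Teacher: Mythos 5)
Your proof is correct and follows essentially the same route as the paper, whose proof is the one-line remark that one ``inductively repeats the proof of Lemma~\ref{lem:gamma:immediate} for all pulses contained in~$s$''; your output-list invariant is precisely the content of that induction, and your appeal to Lemma~\ref{lem:def:chan:forget} for infinite event lists, together with the observation that forgetfulness resets the previous-output time to~$-\infty$ after each annihilation, supplies the details the paper leaves implicit. Your explicit caveat about the initial value of~$s$ being~$0$ is a reasonable (and, in the intended application to~$C_\mathrm{ff}$, harmless) reading of a point the paper glosses over.
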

\begin{proof}
The lemma is proved by inductively repeating the proof of
Lemma~\ref{lem:gamma:immediate} for all pulses contained in $s$.
\end{proof}

\begin{lem}\label{lem:c0}
Circuit~$C_\mathrm{ff}$ solves bounded SPF.
\end{lem}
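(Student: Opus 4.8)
The plan is to verify the six conditions (F1)--(F6) of bounded SPF for $C_\mathrm{ff}$, where throughout $c^*$ has initial value~$0$ (the initial-value-$1$ case being reduced to this one separately, at the end of the subsection). First I would record that $C_\mathrm{ff}$ has a unique execution for every input signal: its only directed cycle runs through the OR gate and the constant-delay channel of delay~$\varepsilon>0$, since $c^*$ feeds only the output port and hence lies on no cycle. Writing $q$ for the signal at the first OR input --- which is $s_i$ delayed by~$1$ with initial value~$0$, hence determined --- the feedback unfolds to the recursion $s_{OR}(t)=q(t)\vee s_{OR}(t-\varepsilon)$, where the second term is read as~$0$ for $t<\varepsilon$. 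This recursion has a unique solution, constructible by induction on the intervals $[k\varepsilon,(k+1)\varepsilon)$, and it is a valid signal; then $s_o=c^*(s_{OR})$. This settles (F1) and (F2). For (F3), a constant-zero input forces $q\equiv 0$, hence $s_{OR}\equiv 0$ by the recursion, hence $s_o=c^*(0)=0$.

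Next I would analyze $s_{OR}$ when $s_i$ is a pulse of length~$\Delta$ with rising edge at time~$S$, so that $q$ is the pulse of length~$\Delta$ with rising edge at time~$S+1$. If $\Delta\ge\varepsilon$, then $s_{OR}=1$ on $[S+1,S+1+\Delta)$ (an OR with a~$1$ there), and since $\Delta\ge\varepsilon$ the feedback term keeps $s_{OR}=1$ on all of $[S+1,\infty)$ --- the first time $s_{OR}$ could drop to~$0$ would require its $\varepsilon$-predecessor to lie in $[S+1,\cdot)$ where $s_{OR}=1$, a contradiction. Thus $s_{OR}$ has a single transition, and pushing it through $c^*$ yields an $s_o$ that is~$0$ up to time $S+1+\delta_\infty$ and~$1$ afterwards. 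If instead $\Delta<\varepsilon$, the recursion unfolds to $s_{OR}(t)=1$ exactly for $t\in\bigcup_{k\ge 0}[\,S+1+k\varepsilon,\ S+1+k\varepsilon+\Delta\,)$, an infinite train of pulses of length $\Delta<\varepsilon<\gamma(c^*)$ (the bound $\varepsilon<\gamma(c^*)$ coming from Lemma~\ref{lem:constant:equiv} and our choice of~$\varepsilon$) that is not eventually~$1$, so Lemma~\ref{lem:train} gives $s_o=c^*(s_{OR})=0$. These two cases immediately yield (F4) --- take $\Delta=\varepsilon$: $s_o$ is eventually~$1$, hence not constant~$0$ --- and (F6): for a pulse whose last event is at time $T=S+\Delta$, the output $s_o$ is constant after $S+1+\delta_\infty\le T+1+\delta_\infty$ when $\Delta\ge\varepsilon$, and is outright constant when $\Delta<\varepsilon$, so $K=1+\delta_\infty$ works.

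The delicate part is (F5), which ranges over \emph{all} executions and hence over arbitrary input signals, not just pulses. The key observation is that the recursion $s_{OR}(t)=q(t)\vee s_{OR}(t-\varepsilon)$ forces every pulse contained in $s_{OR}$ to have length strictly less than~$\varepsilon$: if $s_{OR}=1$ on $[u,u+L)$ and drops at $u+L$, then $s_{OR}(u+L-\varepsilon)=0$, impossible for $L\ge\varepsilon$. So $s_{OR}$ contains no pulse of length $\ge\gamma(c^*)$. I would then split into two cases. If $s_{OR}$ is not eventually~$1$, Lemma~\ref{lem:train} applies directly and $s_o=0$. If $s_{OR}$ is eventually~$1$, its event list is finite and its last event is a rising one, say $(u^*,1)$; let $s'$ be $s_{OR}$ with this last event removed (equivalently, $s'$ agrees with $s_{OR}$ before $u^*$ and is~$0$ afterwards). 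Then $s'$ is eventually~$0$ and still contains no pulse of length $\ge\gamma(c^*)$, so $c^*(s')=0$ by Lemma~\ref{lem:train}, i.e.\ the forgetful-channel algorithm run on $s'$ terminates with output list $\big((-\infty,0)\big)$. Since that algorithm processes input events sequentially and the matching initial values mean no event is inserted at time~$0$, running it on $s_{OR}$ is the same computation followed by one extra step for $(u^*,1)$, which appends the event $(u^*+\delta_\infty,1)$. Hence $s_o$ is again~$0$-then-$1$, with a single transition, and in every case $s_o$ contains no pulse at all, so (F5) holds with an arbitrary $\varepsilon'>0$. I expect the only point requiring real care to be this last bookkeeping --- justifying that the channel algorithm's intermediate output list after consuming the events of $s'$ coincides with the final list defining $c^*(s')$; the rest is routine unfolding of the definitions.
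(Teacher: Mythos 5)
Your proposal is correct and follows the paper's proof of Lemma~\ref{lem:c0} in all essentials: uniqueness of the execution because $c^*$ lies on no cycle (Lemma~\ref{lem:executions}), the case split on input pulse length $\Delta\ge\varepsilon$ versus $\Delta<\varepsilon$, and the appeal to Lemma~\ref{lem:train} via the choice $\varepsilon<\gamma(c^*)$ in the short-pulse case. The one place you go beyond the paper is (F5): the paper's proof verifies the absence of short output pulses only for pulse inputs, whereas (F5) quantifies over all executions and hence arbitrary input signals. Your observation that the $\varepsilon$-delay feedback recursion forces every terminated $1$-interval of $s_{OR}$ to have length strictly less than $\varepsilon$, followed by the split on whether $s_{OR}$ is eventually~$1$, closes this cleanly; the bookkeeping step you flag is unproblematic because $s_{OR}$ eventually~$1$ implies a finite event list (events are discrete on a bounded interval), so the algorithm on $s'$ terminates with exactly the list $\big((-\infty,0)\big)$ and the final step appends $(u^*+\delta_\infty,1)$ as you say. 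Net effect: same circuit, same lemmas, with a slightly more complete verification of (F5).
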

\begin{proof}
We first note that, given an input signal, there is a unique execution for
circuit~$C_\mathrm{ff}$ according to Lemma~\ref{lem:executions}, 
because the sole non-constant channel~$c^*$ is not
part of a feedback loop.

The well-formedness properties (F1) and (F2) of SPF are hence fulfilled.
The non-generation property (F3) is also obvious.

If the input signal is a pulse of length at least~$\varepsilon$,
then~$s_\mathrm{OR}(t) = 1$ for all $t\geq S+1$, and hence~$s_o(t) = 1$
for all~$t \geq S+1+\delta^*(\infty)$.
In particular, this shows the nontriviality property (F4).

If the input signal is a pulse of length less than $\varepsilon$,
then~$s_\mathrm{OR}(t)$ 
only contains pulses of lengths less than~$\varepsilon$, hence less
than~$\gamma(c^*)$ by the choice of $\varepsilon$. 
By Lemma~\ref{lem:train}, the output signal is zero in this case.
This, together with the above, shows (F5) and (F6).
\end{proof}

It remains to show that assuming $c^*$ to have initial value~$0$ is
     is not restricting: If its initial value is~$1$ we
     modify circuit~$C_\mathrm{ff}$ by adding an inverter before and
     after channel~$c^*$.
A proof analogous to Lemma~\ref{lem:c0}'s yields:

\begin{thm}\label{thm:cpd2}
Let~$c^*$ be a non-constant-delay forgetful single-history channel.
Then there exists a circuit solving bounded SPF whose channels are either
constant-delay channels or~$c^*$.
\end{thm}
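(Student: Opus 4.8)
The plan is to reduce the initial-value-$1$ case to the initial-value-$0$ case already handled by Lemma~\ref{lem:c0}. Given a non-constant-delay forgetful single-history channel $c^*$ with initial value~$1$, I would build a new channel $\bar{c}$ by sandwiching $c^*$ between two inverters: $\bar{c}(s)$ is obtained by feeding $\neg s$ into $c^*$ and negating the result. Since inverters are zero-time Boolean gates available in the model, $\bar{c}$ is realized by a genuine sub-circuit using only $c^*$ and (zero-delay) gates. The first step is to verify that $\bar{c}$ behaves exactly like a forgetful single-history channel with initial value~$0$ and the same delay function $\delta$: negating the input swaps the roles of the two values in the event list but leaves all event \emph{times} untouched, so the forgetful algorithm (whose branching depends only on the times $t_n$, $t_n'$ and on $\delta(T_n)$, never on the actual Boolean values) produces the mirror-image output list, and negating that output gives precisely what the initial-value-$0$ algorithm would produce on the original input. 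In particular $\bar{c}$ is again non-constant, since its delay function is the same $\delta$, which by Lemma~\ref{lem:char:const} is non-constant on $(-\delta_\infty,\infty)$.

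Next I would take the circuit $C_\mathrm{ff}$ from Figure~\ref{fig:Cinv}, which solves bounded SPF when its distinguished channel has initial value~$0$ and delay function $\delta$, and replace that channel by the sub-circuit implementing $\bar{c}$; concretely this means inserting an inverter on the wire entering $c^*$ and an inverter on the wire leaving it, as stated. Call the resulting circuit $C'$. By the observation above, $C'$ computes exactly the same input/output relation as the abstract circuit analyzed in Lemma~\ref{lem:c0}, because wherever that lemma invoked the channel $c$ with initial value~$0$, the composite $\bar c$ now plays an identical role. Since $C'$ still has no non-constant channel inside a feedback loop, Lemma~\ref{lem:executions} gives a unique execution for every input, so properties (F1), (F2) carry over, and (F3)--(F6) follow verbatim from the argument of Lemma~\ref{lem:c0}, using the same threshold $0<\varepsilon<\gamma(c^*)=\gamma(\bar c)$ and the same appeals to Lemma~\ref{lem:train}.

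I do not expect a real obstacle here; the content is bookkeeping. The one point that deserves care is the claim that negating the input signal merely relabels values without disturbing the forgetful channel's run: I would make this precise by an induction on the iteration index $n$, showing $S_n^{\bar c}(s) = \neg\,S_n^{c^*_{0}}(\neg s)$ at every step, where $\neg$ acts pointwise on values and $c^*_0$ denotes the same channel with initial value~$0$; the inductive step just checks that the three cases of the algorithm ($x_n = x_n'$; add event; delete event) match up under the relabeling, which they do because the guards $t_n+\delta(T_n)>t_n'$ versus $\le t_n'$ involve only times. Once that lemma is in hand, the theorem is immediate, and the whole argument can be stated in a single short paragraph as "a proof analogous to Lemma~\ref{lem:c0}'s."
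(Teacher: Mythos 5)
Your proposal is correct and follows essentially the same route as the paper: the paper likewise disposes of the initial-value-$1$ case by inserting an inverter before and after $c^*$ in $C_\mathrm{ff}$ and invoking ``a proof analogous to Lemma~\ref{lem:c0}'s.'' Your explicit induction showing that negation commutes with the forgetful channel algorithm (since its guards depend only on event times, and value-equality tests are preserved under relabeling) is exactly the bookkeeping the paper leaves implicit.
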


\subsection{Non-Forgetful Channels}

Theorem~\ref{thm:nonforgetful} reveals 
that a single non-constant-delay {\em non-forgetful\/} 
single-history channel~$c^*$ (with initial value $0$) also 
allows to solve bounded SPF:

\begin{thm}\label{thm:nonforgetful}
Let~$c^*$ be a non-constant-delay non-forgetful single history channel with
     initial value~$0$.
Then there exists a circuit solving SPF whose channels are all either
     constant-delay channels or~$c^*$.
\end{thm}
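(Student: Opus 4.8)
The plan is to follow the spirit of the construction $C_\mathrm{ff}$ used for Theorem~\ref{thm:cpd2}, but to compensate for the failure of Lemma~\ref{lem:train}: a non-forgetful channel remembers, through the variable $r_n$, the times of annihilated events, so an incoming train of sub-threshold pulses may \emph{accumulate} into a genuine output transition instead of being filtered. Consequently the $\varepsilon$-feedback loop of $C_\mathrm{ff}$, which on a too-short input pulse presents $c^*$ with an infinite periodic pulse train, is no longer safe, and a different circuit is required. The facts that do survive and that I would rely on are: $\gamma(c^*)>0$ and that a \emph{single} pulse shorter than $\gamma(c^*)$ is annihilated by $c^*$ (Lemmas~\ref{lem:gamma:immediate} and~\ref{lem:constant:equiv}); that every event is delayed by an amount in $[\delta_{\inf},\delta_\infty]$ (Lemma~\ref{lem:char:delaybounds}), so that on a single input pulse the output of $c^*$ is again a single pulse or the zero signal; and the characterisation of non-constancy in Lemma~\ref{lem:char:const}.

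First I would prove a quantitative replacement for Lemma~\ref{lem:train}: there are $\rho>0$ and $D>0$, depending only on $c^*$, such that $c^*$ outputs the zero signal on any input signal all of whose pulses have length at most $\rho$ and whose consecutive rising edges are at least $D$ apart; the point is that once two consecutive pulses are far enough apart, $r_n$ has drifted back close enough to the reset regime (using $\delta(\infty)<\infty$ and $T_n>-\delta_\infty$) that $c^*$ treats the later pulse essentially in isolation and annihilates it by the computation in Lemma~\ref{lem:gamma:immediate}; this is an induction over the pulses with a uniform bound on the drift of $r_n$ in terms of $D$. Then I would design a circuit that presents every copy of $c^*$ only with \emph{harmless} signals --- those eventually constant~$1$, and those consisting solely of short, widely spaced pulses. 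The external input is first run through a constant-delay \emph{pulse spreader}, namely an $\mathrm{OR}$ of copies of the input delayed by $0,a,2a,\dots,ka$, which sends a single pulse either to a single pulse of length greater than $(k+1)a$, or (if it was at most $a$ long) to a train of pulses of that length spaced $a$ apart, thereby leaving a \emph{gap} $(a,(k+1)a]$ in the achievable pulse lengths; one composes this with enough copies of $c^*$ separated by constant delays so that the effective filtering threshold exceeds $a$ while $a$ exceeds the spacing $D$ from the lemma, feeds the result into an $\mathrm{OR}$ feedback latch, and post-filters by a final copy of $c^*$. Properties (F1)--(F3) are then immediate, (F4) follows by taking a sufficiently long input pulse, and (F5) follows from the above case analysis applied to an arbitrary execution; uniqueness of executions holds by Lemma~\ref{lem:executions} applied to the constant-delay feedback part once the (non-looped) copies of $c^*$ have been evaluated.

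The hard part will be reconciling the two conflicting demands on the parameters: the residual pulse trains must have \emph{large} spacing (at least $D$) so that the non-forgetful channel forgets between pulses, yet their pulses are only as long as the spreader period $a$, so they are guaranteed annihilated only when $a$ lies \emph{below} the filtering threshold $\gamma$. This forces one to first amplify $\gamma$ by composing copies of $c^*$, and the crux is to show finitely many suffice: the iterated thresholds $\gamma_n$ are nondecreasing, and if they converged to a finite limit $\gamma^*$ then $\gamma^*$ would be a fixed point of the length-transfer map $\Delta\mapsto\Delta-\delta_\infty+\delta(\Delta-\delta_\infty)$, which by Lemma~\ref{lem:char:const} forces $\delta$ to be eventually constant equal to $\delta_\infty$. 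That degenerate situation needs a separate argument, but there $c^*$ passes every sufficiently long pulse unchanged and has a \emph{finite} reset time, so the saturation step and the choice of the latch threshold can be arranged directly from $\delta$ rather than via threshold amplification. Finally, the case where $c^*$ has initial value~$1$ reduces to the initial-value-$0$ case by bracketing $c^*$ with inverters, exactly as for Theorem~\ref{thm:cpd2}.
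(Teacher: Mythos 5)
Your opening diagnosis is right --- Lemma~\ref{lem:train} is exactly what breaks for non-forgetful channels, so $C_\mathrm{ff}$ cannot be reused --- and your single-channel replacement lemma (pulses of length $\leq\rho<\gamma(c^*)$ whose rising edges are $\geq D$ apart are all annihilated, because after each cancellation $r\leq S_j+\delta_\infty$, so the next rising edge sees $T\geq D-\delta_\infty$ and hence $\delta(T)\geq\delta(D-\delta_\infty)$ close to $\delta_\infty$) is provable. But the circuit built on it has a gap that I do not believe is repairable along the proposed lines. Working out your own condition quantitatively: a train with rising-edge spacing $a$ and pulse lengths up to $a$ (which is what the spreader emits) is guaranteed to be annihilated by one copy of $c^*$ only if $a+\bigl(\delta_\infty-\delta(a-\delta_\infty)\bigr)\leq\gamma$, i.e.\ $2a-g(a)\leq\gamma$ where $g(\Delta)=\Delta-\delta_\infty+\delta(\Delta-\delta_\infty)$. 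For the PID channel this fails for \emph{every} $a\in(0,\gamma)$: there $2a-g(a)-\gamma=a+Ce^{-a/\tau}-Ce^{-\gamma/\tau}>0$. So amplification is not an optimization but a necessity, and it is precisely the unproved step. A chain of copies of $c^*$ has a well-defined iterated threshold $\gamma_m$ only for \emph{isolated} pulses; for the train, the first stages do not filter but re-emit a train whose rising-edge spacing is only bounded below by $g(a)\ll a$ (since output edges land anywhere in $[S_j+\delta(a-\delta_\infty),S_j+\delta_\infty]$), while the pulse lengths shrink more slowly. The spacing therefore degrades faster through the chain than the required spacing $D$ does, and the induction you envisage does not close. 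Separately, in the ``degenerate'' case $\delta(T)=\delta_\infty$ for $T>T^*$, one has $\gamma\leq T^*+\delta_\infty$ and the spacing needed for the channel to reset is at least $T^*+\delta_\infty$, so the two demands $a\geq D$ and $a<\gamma$ are \emph{provably} incompatible and amplification saturates at $\gamma^*\leq T^*+\delta_\infty$; your one-sentence remark that this case ``can be arranged directly from $\delta$'' does not supply a mechanism. Finally, even if the chain's output were shown to be either zero or a single surviving pulse, you would still need a uniform positive lower bound on the length of that surviving pulse over all input lengths $\Delta$ to satisfy (F5) at the OR latch; nothing in the proposal rules out arbitrarily short survivors at the transition between the two regimes.

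The paper avoids all of this by never asking $c^*$ to filter a pulse train. For the main cases it embeds the single channel $c^*$ in a \emph{clocked} circuit ($C_\mathrm{NF}$, \figref{fig:nf}): the input pulse is gated into a fixed phase~$C$ of a periodic signal, and the non-forgetful memory of the cancelled phase-$C$ pulse measurably \emph{shifts the timing} of the next rising output transition ($\delta_4\leq\delta_\infty-\Delta$ with a pulse versus $\delta_2>\delta_\infty-\Delta/4$ without); an AND with a sampling clock $CLK_F$ converts this timing gap into a clean $0$/$1$ of fixed duration $F$, which gives (F5) for free. The case split is on the behaviour of $\delta$ at $-\delta_{\inf}$ (your ``degenerate'' situation corresponds to the paper's Cases~2.1 and~2.2, the latter handled by the forbidden output-pulse-length interval of Lemma~\ref{lem:pulse_window}). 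I would encourage you to keep your replacement lemma as a correct observation about isolated trains, but the circuit construction needs to be rethought around a timing- or interval-gap-based discrimination rather than around filtering.
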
     

Let $\delta$ be the delay function of~$c^*$. Recall from Lemma~\ref{lem:char:const}
that $\delta_{\inf}<\delta_{\infty}$, since $\delta$ is
     non-decreasing and not constant.
We distinguish three cases for function~$\delta$ with respect to its
     behavior at~$-\delta_{\inf}$.

\begin{itemize}
\item[1.] There exists a $t > -\delta_{\inf}$ such
     that $\delta(t) < \delta_\infty$.

\item[2.] $\delta(t) = \delta_\infty$ for all $t
     > -\delta_{\inf}$, and
\begin{itemize}
\item[2.1] $\delta$ is continuous at~$-\delta_{\inf}$, i.e., at~$-\delta_{\inf}$
    its left limit $\lim_{t\to
     0^-}\delta(-\delta_{\inf}+t)$
     equals its right limit~$\delta_{\infty}$.

\item[2.2] $\delta$ is non-continuous at~$-\delta_{\inf}$, i.e.,
     $\delta^- = \lim_{t\to 0^-}\delta(-\delta_{\inf}+t) <
     \delta_\infty$.
\end{itemize}
\end{itemize}

For Cases~1 and~2.1, we show that circuit~$C_\mathrm{NF}$ depicted in
     \figref{fig:nf} solves bounded SPF.
All its clocks~$CLK_{A/C/F}$ produce a signal with period $A+B+C+D$,
     where parameters~$A$ to~$D$ are chosen later on in accordance
     with~$\delta$.
Let $\tau_k = k(A+B+C+D)$ denote the beginning of the $k$-th \emph{round}, for $k\ge 0$.
Clock~$CLK_C$ is designed such that its output signal is~$0$ during
     $[\tau_k,\tau_k+A+B) \cup [\tau_k+A+B+C,\tau_{k+1})$ and~$1$
     during $[\tau_k+A+B,\tau_k+A+B+C)$.
Such a clock can easily be built from constant-delay channels and inverters
     only.
Clock~$CLK_A$'s output signal is~$1$ during $[\tau_k,\tau_k+A)$
     and~$0$ during $[\tau_k+A,\tau_{k+1})$.
The output signal of~$CLK_F$ is~$0$ during $[\tau_k,\tau_k+E)\cup
     [\tau_k+E+F,\tau_{k+1})$ and~$1$ during $[\tau_k+E,\tau_k+E+F)$.
Again, $E$ and~$F$ are chosen later on in accordance with~$\delta$.

Abbreviating $t_k = \tau_k+2$, we 
observe that circuit~$C_\mathrm{NF}$ generates a signal~$s_{OR}$ at
     the input of channel~$c^*$, which is the OR of two subsignals
that consist of four {\em phases} within
     time~$[t_k,t_{k+1})$, $k\ge 0$ (i.e., per round):
     Phase~$A$ (of round~$k$) denotes the interval of
     times~$[t_k,t_k+A)$, phase~$B$ the interval~$[t_k+A,t_k+A+B)$,
     phase~$C$ the interval~$[t_k+A+B,t_k+A+B+C)$ and phase~$D$ the
     interval~$[t_k+A+B+C,t_k+A+B+C+D)$.
The value of~$s_{OR}$ is~$1$ during phase~$A$, and~$0$ during
     phases~$B$ and~$D$.
During phase~$C$ it is either~$0$ or contains a pulse, depending on
     signal~$i$.
Analogously, we define output phase~$F$ (of round~$k$) as the interval
     of times~$[t_{k}+E,t_{k}+E+F)$. Note that phase~E and~F of round
$k$ follow phase~D of round~$k$, and overlap with phase~A of round $k+1$.

Informally, for Cases~1 and~2.1, circuit~$C_\mathrm{NF}$ solves bounded SPF
according to the following reasoning: Properties~(F1) and~(F2) trivially
     hold for circuit~$C_\mathrm{NF}$.
Clearly, if the circuit's input signal is~$0$, then the channel's
     input signal~$s_{OR}$ is~$0$ during phase~$C$ of all rounds~$k\ge
     0$.
Subsequently, we will prove that if this is the case, then the
     channel's output signal~$c^*(s_{OR})$ during phase~$F$ is~$0$ for
     all rounds~$k\ge 0$.
Since phase~$F$ is the only phase where $o$ could possibly produce a
     non-$0$ output due to the AND gate, both~(F3) and (F5) follow.
Property~(F4) is implied by the fact that there exists an input
     signal~$i$ such that~$s_{OR}$ contains a pulse during phase~$C$
     of some round~$k\ge 0$.
We will prove below that if this is the case, then the  channel's
     output signal is~$1$ during phase~$F$ of round~$k+1$.
Essentially, this follows from a
reduced delay of the rising transition at the end of phase~D, caused by
not forgetting the (cancelled) pulse in phase~C.
From this and the fact that all delays are bounded, (F6)
     follows.

\medskip

\noindent {\bf Case~1.} In this case, we choose
\begin{enumerate}
\item[(i)] $C > 0$, $D>0$ and $0 < \Delta < \delta_\infty$ such that
     $\delta(C+D-\delta_{\inf}) \leq \delta_\infty - \Delta$.
Such values for $C$, $D$ and $\Delta$ exist, because of the assumption
     of Case~1.

\item[(ii)] $\varepsilon > 0$, $\varepsilon' > 0$ and $C > 0$ small enough such
     that $\delta_\infty - \varepsilon' \ge \delta_{\inf} +
     \varepsilon + C$ and $\varepsilon' < \Delta/4$.

\item[(iii)] $C > 0$ and $\varepsilon' > 0$ small enough such that
     $\delta(C+\varepsilon'-\delta_\infty) \leq \delta_{\inf} +
     \varepsilon$.

\item[(iv)] $A=B > \max(\varepsilon',\Delta,\delta_\infty-\delta_{\inf})$
     and large enough such that
     $\delta(A-\delta_\infty) \ge \delta_\infty - \varepsilon'$.

\item[(v)] $E = \delta_\infty-\Delta$ and $F = \Delta/2$.

\end{enumerate}
It is easy to check that Assumptions (i)--(v) are compatible with
each other.

Figures~\ref{fig:non-const:0} and~\ref{fig:non-const:1} depict
     signal~$s_{OR}$ in absence and presence of a pulse.
We will first show that the channel's output signal~$c^*(s_{OR})$
     has value~$0$ during output phase~$F$ of round~$0$.

\begin{proof}
The signal is depicted in \figref{fig:non-const:0}:
Signal~$s_{OR}$'s transition to value~$1$ at time~$t_0$ is delayed by
     $c^*$ by~$\delta_0 = \delta_\infty > 0$.
Its next transition back to value~$0$ at time~$t_0+A$ is delayed by,
     say, $\delta_1$.
Because of Lemma~\ref{lem:char:delaybounds}, $\delta_1 \ge
     \delta_{\inf}$.
From this and Assumption~(iv) on~$A$,  
\begin{align*}
A + \delta_1 > (\delta_\infty-\delta_{\inf}) +
\delta_{\inf} = \delta_0\enspace.
\end{align*}
It follows that output $c^*(s_{OR})$'s transition to~$0$ does not
     cancel~$c^*(s_{OR})$'s transition to~$1$ from before.
All of $s_{OR}$'s following transitions occur at times at least
     $t_0+A+B$, and by~(iv), at times greater than
     $t_0+\delta_\infty-\delta_{\inf}$.
Since all these transitions are delayed by at least
     $\delta_{\inf}$ time, 
none of them can cancel $c^*(s_{OR})$'s transition to~$1$ at
     time~$t_0+\delta_\infty$ either.
Since channel~$c^*$ has initial value~$0$, it follows that its output has
     value~$0$ during $[0,t_0+\delta_\infty)$.
Since
\begin{align*}
t_0+\delta_\infty > t_0+\delta_\infty-\Delta/2 = t_0 +E+F\enspace,
\end{align*}
the channel's output indeed has value~$0$ during output
     phase~$F$ of round~$0$.
\end{proof}

We next show, for $k\ge 0$, that if signal~$s_{OR}$ does
     not contain a pulse within phase~$C$ of round~$k$,
     signal~$c^*(s_{OR})$ has value~$0$ during output phase~$F$ of
     round~$k+1$.

%
%

\def\highV{0.5}
\def\offV{-1.8}
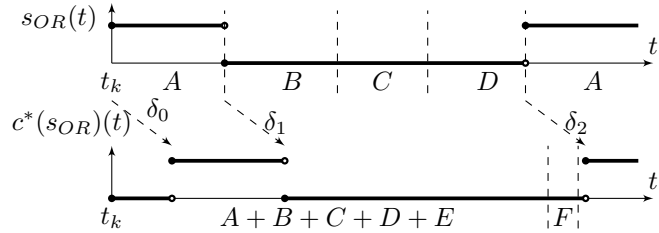
\begin{figure}[ht]
\centering
\begin{tikzpicture}[scale=1,>=latex']
  \path [draw,->] (0,0)--(0,{\highV+0.3}) node (xaxis) [above,xshift=-7mm,yshift=-5mm] {$s_{OR}(t)$};
  \path [draw,->] (0,0)--(7.2,0)   node (yaxis) [above] {$t$};

  \feline{0}{1.5}{\highV}
  \feline{1.5}{5.5}{0}
  \fnline{5.5}{7}{\highV}

  \draw (0,0) node[below] {$t_k$};

  \draw[dashed] (1.5,-0.4) -- ++(0,{\highV+0.6});
  \node[below] at (0.8,0) {$A$};

  \draw[dashed] (3.0,-0.4) -- ++(0,{\highV+0.6});
  \node[below] at (2.4,0) {$B$};

  \draw[dashed] (4.2,-0.4) -- ++(0,{\highV+0.6});
  \node[below] at (3.6,0) {$C$};

  \draw[dashed] (5.5,-0.4) -- ++(0,{\highV+0.6});
  \node[below] at (5,0) {$D$};

  \node[below] at (6.4,0) {$A$};

  \path [draw,->] (0,{\offV})--(0,{\offV+\highV+0.2}) node (xaxis) [above,xshift=-15pt] {$c^*(s_{OR})(t)$};
  \path [draw,->] (0,{\offV})--(7.2,{\offV}) node (yaxis) [above] {$t$};

  \feline{0}{0.8}{\offV}
  \feline{0.8}{2.3}{{\offV+\highV}} 
  \feline{2.3}{6.3}{\offV} 
  \fnline{6.3}{7.0}{{\offV+\highV}}

  \draw (0,\offV) node[below] {$t_k$};

  \draw[dashed] (5.8,{\offV-0.4}) -- ++(0,{\highV+0.6});
  \node[below] at (3.0,\offV) {$A+B+C+D+E$};

  \draw[dashed] (6.2,{\offV-0.4}) -- ++(0,{\highV+0.6});
  \node[below] at (6.0,\offV) {$F$};

  \draw[dashed,->] (0,-0.5) -- (0.8,-1.1)
    node[pos=0.2,right,xshift=5pt] {$\delta_0$};
  \draw[dashed,->] (1.5,-0.5) -- (2.3,-1.1)
    node[midway,right] {$\delta_1$};
  \draw[dashed,->] (5.5,-0.5) -- (6.3,-1.1)
    node[midway,right] {$\delta_2$};

\end{tikzpicture}
\caption{Case 1: Input and Output of channel~$c^*$ in circuit~$C_\mathrm{NF}$
     if phase~$C$ does not contain a pulse.}\label{fig:non-const:0} 
\end{figure}

\begin{proof}
Assume the input signal~$s_{OR}$ of channel~$c^*$ does not contain a
     pulse within phase~$C$ of round~$k$.
The signal is depicted in \figref{fig:non-const:0}.

Signal~$s_{OR}$'s transition to value~$1$ at time~$t_k$ is delayed by $c^*$ 
     by~$\delta_0 \leq \delta_\infty$.

There is no transition of~$s_{OR}$ before $s_{OR}$'s transition back
     to value~$0$ at time~$t_k+A$.
Let~$\delta_1$ be its delay.
Because of~(iv), and~$\delta$ being non-decreasing, 
$A + \delta_1 > (\delta_\infty-\delta_{\inf}) +
\delta_{\inf}$.
Thus, and because transitions are delayed by at least
     $\delta_{\inf}$, none of the transitions from
     time~$t_k+A$ on may cancel~$c^*(s_{OR})$'s transition to~$1$ at
     time~$t_k+\delta_0$.

The transition of~$s_{OR}$ to value~$1$ at time~$t_{k+1} =
     t_k+A+B+C+D$ is delayed by $\delta_2$, where  
\begin{align}
\delta_2 = \delta(B+C+D-\delta_1) \ge \delta(B-\delta_{\infty}) \ge \delta_\infty-\varepsilon'\enspace,\label{eq:delta_eps_pr}
\end{align}
because of Assumption~(iv). Together with~(ii) this yields
\begin{align}
\delta_2 > \delta_\infty - \Delta/4\enspace.\label{eq:delta2bound}
\end{align}
It will thus not occur at output~$c^*(s_{OR})$ before
     time~$t_{k+1}+\delta_\infty - \Delta/4$, and thus, by~(v), not before the end of output phase~$F$ of
     round~$k+1$ at time~$t_{k+1}+\delta_\infty - \Delta/2$.

Furthermore, from \eqref{eq:delta_eps_pr} and~(iv),
\begin{align*}
B+C+D + \delta_2 > \delta_\infty \ge \delta_1\enspace,
\end{align*}
because~(iv) in particular implies $B > \varepsilon'$.
It follows that output~$c^*(s_{OR})$'s transition to~$1$ does not
     cancel~$c^*(s_{OR})$'s transition to~$0$ at
     time~$t_k+A+\delta_1$.
All $s_{OR}$'s subsequent transitions occur at earliest at time
     $t_{k+1} + A > t_{k+1} + \delta_{\infty} - \delta_{\inf}$, by~(iv) and the fact that
     they are delayed by at least $\delta_{\inf}$, hence cannot
     cancel $c^*(s_{OR})$'s transition to~$1$ at time
     $t_{k+1}+\delta_2$.
Thus, $c^*(s_{OR})$ has value~$0$ during $[t_k+A+\delta_1,
     t_{k+1}+\delta_2)$. Together with (\ref{eq:delta2bound}), 
this implies that $c^*(s_{OR})$'s
     value is~$0$ during phase~$F$ of round~$k+1$.
\end{proof}



We now show, for $k\ge 0$, that if signal~$s_{OR}$ contains a pulse
     within phase~$C$ of round~$k$, signal~$c^*(s_{OR})$ has value~$1$
     during output phase~$F$ of round~$k+1$.

\begin{proof}
Assume the input signal~$s_{OR}$ of channel~$c^*$ contains a pulse within phase~$C$ of
     round~$k$.
The signal is depicted in \figref{fig:non-const:1}.

Signal~$s_{OR}$'s transition to value~$1$ at time~$t_k$ is delayed
     by~$\delta_0 \leq \delta_\infty$.
By the same arguments as in the proof before, it is not canceled by any
     following transition.

Signal~$s_{OR}$'s transition to~$0$ at time~$t_k+A$ is delayed
     by $\delta_1$.
Since no further transition of~$s_{OR}$ occurs before time~$t_k+A+B$, and since
     $B > \delta_\infty-\delta_{\inf}$, it follows
     that~$s_{OR}$'s transition to~$0$ is not canceled by any
     following transition.
The transition of~$s_{OR}$ to~$1$ at time~$t_k+A+u$ is delayed
     by~$\delta_2$, where
$\delta_2 = \delta(u-\delta_1) \ge \delta(B-\delta_\infty)$, 
since $u\ge B$, $\delta_1\le \delta_\infty$ and $\delta$ is
     non-decreasing. Thus, by~(iv),
\begin{align}
\delta_2 \ge \delta_\infty - \varepsilon'\enspace.\label{eq:B2}
\end{align}

The transition of~$s_{OR}$ back to value~$0$ at time~$t_k+A+u+x$ is delayed
     by~$\delta_3$, where
\begin{align}
\delta_3 &= \delta(x-\delta_2) \leq \delta(C+\varepsilon'-\delta_\infty)\enspace,
\end{align}
since $x \leq C$, $\delta$ is non-decreasing, and by~\eqref{eq:B2}.
By~(iii),
\begin{align}
\delta_3 \leq \delta_{\inf} + \varepsilon\enspace.\label{eq:C2}
\end{align}
The pulse occurring during phase~$C$ is filtered
     out at the output~$c^*(s_{OR})$ of channel~$c^*$, since
     $\delta_2 \ge x + \delta_3$:
The latter follows from~\eqref{eq:B2}, (ii) and~\eqref{eq:C2}, as
$\delta_2 \ge \delta_\infty - \varepsilon' \ge \delta_{\inf} +
\varepsilon + C \ge \delta_3$. 

The transition of~$s_{OR}$ to value~$1$ at time~$t_{k+1} = t_k+A+u+x+y$ is
     delayed by~$\delta_4$, where 
$\delta_4 = \delta(y-\delta_3) \leq \delta(C+D-\delta_{\inf})$, 
since  $\delta$ is non-decreasing 
and $y \leq C+D$, $\delta(t) \ge \delta_{\inf}$ for all $t >
     -\delta_\infty$ such that $\delta_3=\delta(x-\delta_2)\geq\delta_{\inf}$.
By Assumption~(i), we may thus deduce
$\delta_4 \leq \delta_\infty - \Delta$. 
Since no further transition of $s_{OR}$ occurs before time~$t_{k+1}+A$, and $A
     > \delta_\infty - \delta_{\inf}$ by Assumption~(iv),
     $c^*(s_{OR})$'s transition at time~$t_{k+1}+\delta_4$ is not
     canceled by any later transition.
Since $A > \delta_\infty - \delta_{\inf} >
     E+F-\delta_{\inf}$, by Assumptions~(iv) and~(v), and the
     fact that a transition is delayed by at least
     time~$\delta_{\inf}$, no other transition of~$c^*(s_{OR})$
     occurs during $(t_{k+1}+\delta_4,t_{k+1}+E+F]$.
It follows that $c^*(s_{OR})$'s value is~$1$ during phase~$F$ of
     round~$k+1$.
\end{proof}

\begin{figure}[ht]
\centering
\begin{tikzpicture}[scale=1,>=latex']
  \path [draw,->] (0,0)--(0,{\highV+0.2}) node (xaxis) [above,xshift=-7mm,yshift=-5mm] {$s_{OR}(t)$};
  \path [draw,->] (0,0)--(7.2,0)   node (yaxis) [above] {$t$};

  \feline{0}{1.5}{\highV}
  \feline{1.5}{3.3}{0}
  \feline{3.3}{4.0}{\highV}
  \feline{4.0}{5.5}{0}
  \fnline{5.5}{7}{\highV}

  \draw (0,0) node[below] {$t_k$};

  \draw[dashed] (1.5,-0.4) -- ++(0,{\highV+0.6});
  \node[below] at (0.8,0) {$A$};

  \draw[dashed] (3.0,-0.4) -- ++(0,{\highV+0.6});
  \node[below] at (2.4,0) {$B$};

  \draw[dashed] (4.2,-0.4) -- ++(0,{\highV+0.6});
  \node[below] at (3.6,0) {$C$};

  \draw[dashed] (5.5,-0.4) -- ++(0,{\highV+0.6});
  \node[below] at (5,0) {$D$};

  \node[below] at (6.4,0) {$A$};

  \draw[<->] (1.5,0.15) -- (3.3,0.15) node[midway,above] {$u$};
  \draw[<->] (3.3,0.15) -- (4.0,0.15) node[midway,above] {$x$};
  \draw[<->] (4.0,0.15) -- (5.5,0.15) node[midway,above] {$y$};

  \path [draw,->] (0,\offV)--(0,{\offV+\highV+0.2}) node (xaxis) [above,xshift=-15pt] {$c^*(s_{OR})(t)$};
  \path [draw,->] (0,\offV)--(7.2,\offV) node (yaxis) [above] {$t$};

  \feline{0}{0.8}{\offV}
  \feline{0.8}{2.3}{{\offV+\highV}} 
  \feline{2.3}{5.7}{\offV} 
  \fnline{5.7}{7.0}{{\offV+\highV}}

  \draw (0,\offV) node[below] {$t_k$};

  \draw[dashed] (5.8,{\offV-0.4}) -- ++(0,{\highV+0.6});
  \node[below] at (3.0,\offV) {$A+B+C+D+E$};

  \draw[dashed] (6.2,{\offV-0.4}) -- ++(0,{\highV+0.6});
  \node[below] at (6.0,\offV) {$F$};

  \draw[dashed,->] (0,-0.5) -- (0.8,-1.1)
    node[pos=0.2,right,xshift=5pt] {$\delta_0$};
  \draw[dashed,->] (1.5,-0.5) -- (2.3,-1.1)
    node[midway,right] {$\delta_1$};
  \draw[dashed,->] (3.3,-0.5) -- (4.1,-1.1)
    node[midway,left] {$\delta_2$};
  \draw[dashed,->] (4.0,-0.5) -- (4.1,-1.1)
    node[midway,right] {$\delta_3$};
  \draw[dashed,->] (5.5,-0.5) -- (5.7,-1.1)
    node[midway,right] {$\delta_4$};

\end{tikzpicture}
\caption{Case 1: Input and Output of channel~$c^*$ in circuit~$C_\mathrm{NF}$ if
     phase~$C$ contains a pulse.}\label{fig:non-const:1} 
\end{figure}
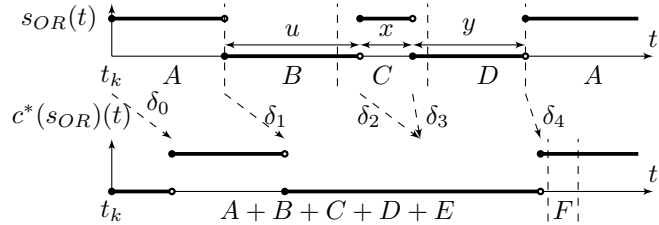

\begin{figure*}[t]
\centering
\hfill
\begin{minipage}[b]{0.57\textwidth}
\begin{tikzpicture}[circuit logic US, scale=0.8, large circuit symbols,every node/.style={transform shape}]

\useasboundingbox (-6.0,-1.4) rectangle (5.2,1.3);

\tikzstyle{osc} = [draw,rectangle,minimum width=1.1cm,minimum height=0.7cm]

\matrix[column sep=12mm,row sep=-1mm]
{
\node[osc,yshift=1.46mm] (o2) {$CLK_A$};    &                          & \node[or gate,logic gate inputs=nn] (or) {OR}; && \\
    &\node [and gate,logic gate inputs=nn] (and2) {AND};&                       & \node [and gate,logic gate inputs=nn] (and) {AND}; &  \\
\node[osc,yshift=1mm] (o3) {$CLK_C$}; && \node[osc,yshift=1mm] (o1)
     {$CLK_F$};                     && \\
};

\draw (or.output) -- ++(1.0,0) node [midway,above] {$c^*$} node [midway,below] {} |- (and.input 1);

\draw (and.output)
	-- ++(0.85,0) node [midway,above] {$\delta=1$} node[midway,below] {$x=0$} 
	node[right] {$o$};

\path[draw] (o2) -- (or.input 1) node [midway,above] {$\delta=2$} node[midway,below] {$x=0$}
                 ;

\path[draw] (o1) -- ++(1.85,0) node[midway,above] {$\delta=2$} node[midway,below] {$x=0$}
                 -- ++(0,0.1)
                 |- (and.input 2)
                 ;

\path[draw] (and2.input 1) -- ++(-1.8,0) node[midway,above] {$\delta=1$} node[midway,below] {$x=0$}
                 node[left] {$i$};
                 ;

\path[draw] (and2.input 2) -- ++(-0.2,0) 
                           |- (o3.east) node[pos=0.7,above] {$\delta=1$} node[pos=0.7,below] {$x=0$}
                           ;

\draw (and2.output) -- ++(1.2,0) node [midway,above] {$\delta=1$} node [midway,below] {$x=0$} |- (or.input 2);

\end{tikzpicture}
\caption{Circuit $C_\mathrm{NF}$ used in Cases~1 and~2.1.}\label{fig:nf}
\end{minipage}
\begin{minipage}[b]{0.42\textwidth}
\begin{tikzpicture}[circuit logic US, scale=0.8, large circuit symbols,every node/.style={transform shape}]

\useasboundingbox (-4.0,-1.6) rectangle (4.6,1.1);

\matrix[column sep=12mm, row sep=-1mm]
{
\node[buffer gate,small circuit symbols, logic gate inputs=n] (buff) {};
& \node[and gate,logic gate inputs=nn] (andx) {AND}; 
& \node [or gate,logic gate inputs=nn,yshift=3mm] (or) {OR};\\
};

\draw (andx.output) -- ++(1.15,0) node [midway,above] {$\delta=1$} node [midway,below] {$x=0$} |- (or.input 1);

\draw (or.output) 
	-- ++(right:0.1) 
	-- ++(down:1.0)
	-- ++(left:1.5) node[midway, above] {$\delta=\varepsilon$} node[midway,below] {$x=0$}
	-- ++(up:0.4)
	|- (or.input 2)
	;

\draw (or.output)
        -- ++(right:0.1) node[circle,inner sep=1pt,fill=black,draw] {}
        -- ++(right:1.0) node [midway,above] {$\delta=1$} node[midway,below] {$x=0$} 
        node[right] {$o$}
	;

\path[draw] (andx.input 1)
              -- ++(-0.2,0)
              -- ++(0,0.5)
              -- ++(-1.2,0) node[midway,above] {$\delta=1$} node[midway,below] {$x=0$}
              -- ++(0,-0.68)
              -- (buff.output) 
              ;

\path[draw] (andx.input 2)
              -- ++(-0.2,0)
              -- ++(0,-0.5)
              -- ++(-1.2,0) node[midway,above] {$\scriptstyle\delta=1+\varepsilon'$} node[midway,below] {$x=0$}
              -- ++(0,0.68) node[circle,inner sep=1pt,fill=black,draw] {}
              ;

\path[draw] (buff.input)-- ++(-0.7,0) node[midway,above] {$c^*$} node[left] {$i$};

\node at (-2.2,-1.5) {$\varepsilon'=\max(0,\delta^--\delta_{\inf})$};
\node at (1.8,-1.5) {$0 < \varepsilon < \delta_\infty-\delta_{\inf}-\varepsilon'$};

\end{tikzpicture}
\caption{Circuit $C_\mathrm{NC}$ used in Case~2.2.}\label{fig:nc}
\end{minipage}
\end{figure*}

\noindent {\bf Case~2.1.} In this case, we choose
\begin{enumerate}
\item[(i)] $A=D > \max(0,\delta_\infty-\delta_{\inf})$ and large
  enough such that $\delta(A-\delta_\infty) =
     \delta_\infty$.
Such an~$A$ must exist, because of the assumption of Case~2.1.

\item[(ii)] $B,C,\varepsilon > 0$ small enough such that
     $B+C+\varepsilon+\delta_{\inf} \leq \delta_\infty$.

\item[(iii)] $0 < \varepsilon' < B+C$

\item[(iv)] $\varepsilon > 0$ small enough such that
     $\delta(-\delta_{\inf}-\varepsilon) \ge \delta_\infty -
     \varepsilon'$.
Such a value exists, since $\delta$ is continuous at~$-\delta_{\inf}$
     by the assumption of Case~2.1.

\item[(v)] $B+C > 0$ small enough such that $\delta(B+C-\delta_\infty) \le
     \delta_{\inf}+\varepsilon$.

\item[(vi)] $E = A+\delta_\infty$ and $F = B+C-\varepsilon'$.

\end{enumerate}
Again, it is easy to verify that Assumptions~(i)-(vi) are
compatible with each other.

Figures~\ref{fig:non-const:0b} and~\ref{fig:non-const:1b} depict
     signal~$s_{OR}$ in absence and presence of a pulse.

\smallskip

We next show by induction on $k\ge 0$ that signal~$s_{OR}$'s
     transition at time~$t_k$ is delayed by~$\delta_\infty$, and that
     the channel's output~$c^*(s_{OR})$ has value~$0$ during
     phase~$F$ of round~$k$ in the absence of a pulse within phase~$C$
     of round~$k$, and value~$1$ in the presence of a pulse.

\smallskip


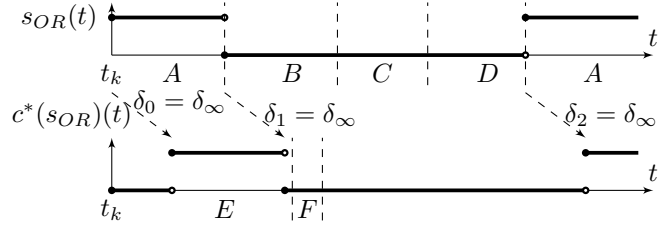
\begin{figure}[ht]
\centering
\begin{tikzpicture}[scale=1,>=latex']
  \path [draw,->] (0,0)--(0,{\highV+0.2}) node (xaxis) [above,xshift=-7mm,yshift=-5mm] {$s_{OR}(t)$};
  \path [draw,->] (0,0)--(7.2,0)   node (yaxis) [above] {$t$};

  \feline{0}{1.5}{\highV}
  \feline{1.5}{5.5}{0}
  \fnline{5.5}{7}{\highV}

  \draw (0,0) node[below] {$t_k$};

  \draw[dashed] (1.5,-0.4) -- ++(0,{\highV+0.6});
  \node[below] at (0.8,0) {$A$};

  \draw[dashed] (3.0,-0.4) -- ++(0,{\highV+0.6});
  \node[below] at (2.4,0) {$B$};

  \draw[dashed] (4.2,-0.4) -- ++(0,{\highV+0.6});
  \node[below] at (3.6,0) {$C$};

  \draw[dashed] (5.5,-0.4) -- ++(0,{\highV+0.6});
  \node[below] at (5,0) {$D$};

  \node[below] at (6.4,0) {$A$};

  \path [draw,->] (0,\offV)--(0,{\offV+\highV+0.2}) node (xaxis) [above,xshift=-15pt] {$c^*(s_{OR})(t)$};
  \path [draw,->] (0,\offV)--(7.2,\offV) node (yaxis) [above] {$t$};

  \feline{0}{0.8}{\offV}
  \feline{0.8}{2.3}{{\offV+\highV}} 
  \feline{2.3}{6.3}{\offV} 
  \fnline{6.3}{7.0}{{\offV+\highV}}

  \draw (0,\offV) node[below] {$t_k$};

  \draw[dashed] (2.4,{\offV-0.4}) -- ++(0,{\highV+0.6});
  \node[below] at (1.5,\offV) {$E$};

  \draw[dashed] (2.8,{\offV-0.4}) -- ++(0,{\highV+0.6});
  \node[below] at (2.6,\offV) {$F$};

  \draw[dashed,->] (0,-0.5) -- (0.8,-1.1)
    node[pos=0.2,right] {$\delta_0 = \delta_\infty$};
  \draw[dashed,->] (1.5,-0.5) -- (2.3,-1.1)
    node[midway,right] {$\delta_1 = \delta_\infty$};
  \draw[dashed,->] (5.5,-0.5) -- (6.3,-1.1)
    node[midway,right] {$\delta_2 = \delta_\infty$};

\end{tikzpicture}
\caption{Case 2.1: Input and Output of channel~$c^*$ in circuit~$C_\mathrm{NF}$
     if phase~$C$ does not contain a pulse.}\label{fig:non-const:0b} 
\end{figure}

\begin{proof}
Assume the input signal~$s_{OR}$ of channel~$c^*$ contains no pulse
     within phase~$C$ of round~$k$.
The signal is depicted in \figref{fig:non-const:0b}.

Signal~$s_{OR}$'s transition to value~$1$ at time~$t_k$ is delayed by
     some~$\delta_0$.
Clearly, if $k=0$ (i.e., in round~$0$), $\delta_0 = \delta_\infty$.
As induction hypothesis assume in the following that signal~$s_{OR}$'s
     transition at time~$t_{k}$ is delayed by $\delta_\infty$.
We will show that this implies that signal~$s_{OR}$'s transition at
     time~$t_{k+1}$ is delayed by~$\delta_\infty$.

Obviously, the next transition of~$s_{OR}$ back to value~$0$ at
     time~$t_k+A$ is delayed by $\delta_1$, where    
\begin{align}
\delta_1 = \delta(A-\delta_0) = \delta(A-\delta_\infty) = \delta_\infty\enspace,\label{eq:A1_0new}
\end{align}
by the choice of~$A$ according
     to Assumption~(i).
Further, by Assumption~(i), $A >
     \delta_\infty-\delta_{\inf}$, implying that no
     transition of~$s_{OR}$ after time~$t_k$ can cancel the transition
     of~$c^*(s_{OR})$ to~$1$ at time~$t_k+\delta_0$.

The transition of~$s_{OR}$ to value~$1$ at time~$t_{k+1} =
     t_k+A+B+C+D$ is delayed by $\delta_2$, where 
\begin{align*}
\delta_2 = \delta(B+C+D-\delta_1) = \delta(B+C+D-\delta_\infty) =
     \delta_\infty\enspace,
\end{align*}
because of Assumption~(i).
Thus, the initial transition of round~$k+1$ at time~$t_{k+1}$ will be
     delayed by $\delta_\infty$, which completes the inductive step.
Since $D > \delta_\infty-\delta_{\inf}>0$, by Assumption~(i),
     it follows that $c^*(s_{OR})$'s transition to~$0$ at
     time~$t_k+A+\delta_1$ is not canceled by any transition.
By analogous arguments, the transition to~$1$ at time
     $t_{k+1}+\delta_2$ is not canceled by any transition.
Our choice of~$E$ and~$F$ in (vi) thus implies that the channel
     output's value is~$0$ during phase~$F$ of round~$k$,
     see~\figref{fig:non-const:0b}.
\end{proof}



\begin{proof}
Now assume that there is a pulse within phase~$C$ of round~$k$.
The channel's input and output signals are depicted in
     \figref{fig:non-const:1b}.

Signal~$s_{OR}$'s initial transition to value~$1$ at time~$t_k$
     clearly is delayed by~$\delta_0 = \delta_\infty$ if~$k=0$.
As induction hypothesis assume in the following that $s_{OR}$'s
     transition at time~$t_k$ is delayed by $\delta_\infty$.
We will show that this implies that $s_{OR}$'s transition at
     time~$t_{k+1}$ is delayed by $\delta_\infty$.

By the same reasoning as in the proof before, $c^*(s_{OR})$'s
     transition to~$1$ at time~$t_k+\delta_0$ is not canceled by any
     following transition.
Further, $s_{OR}$'s transition back to value~$0$ at time~$t_k+A$ is
     delayed by $\delta_1 = \delta_\infty$.

The transition of~$s_{OR}$ to value~$1$ at time~$t_k+A+u$ is delayed
     by~$\delta_2$, where
$\delta_2 = \delta(u-\delta_1) \leq \delta(B+C-\delta_\infty) \leq
     \delta_{\inf}+\varepsilon$, 
by Assumption~(v). From~(ii), we further obtain
$u + \delta_2 \leq B+C+\delta_{\inf}+\varepsilon \le
 \delta_\infty$. 
It follows that this output transition cancels the last output
     transition to~$0$.

The transition of~$s_{OR}$ back to value~$0$ at time~$t_k+A+u+x$ is delayed
     by~$\delta_3$, where
$\delta_3 = \delta(x-\delta_2) \ge \delta(-\delta_{\inf}-\varepsilon) \ge
  \delta_\infty - \varepsilon'$, 
holds because of Assumption~(iv).

The transition of~$s_{OR}$ to value~$1$ at time~$t_{k+1}$ is delayed
     by~$\delta_4$, where 
$\delta_4 = \delta(y-\delta_3) \ge \delta(D-\delta_{\infty}) =
     \delta_{\infty}$, 
by Assumption~(i), which  completes the inductive step.

Moreover, since $D > \delta_\infty-\delta_{\inf}>0$, it follows
     that $c^*(s_{OR})$'s transition to~$0$ at
     time~$t_k+A+u+x+\delta_3$ is not canceled by any transition.
By similar arguments, $c^*(s_{OR})$'s transition to~$1$ at
     time~$t_{k+1}+\delta_4$ is not canceled by any following
     transition.

Assumption~(vi) hence implies  that $c^*(s_{OR})$'s value is~$1$
     during phase~$F$ of round~$k+1$, see~\figref{fig:non-const:1b}.
\end{proof}

\begin{figure}[ht]
\centering
\begin{tikzpicture}[scale=1,>=latex']
  \path [draw,->] (0,0)--(0,{\highV+0.2}) node (xaxis) [above,xshift=-7mm,yshift=-5mm] {$s_{OR}(t)$};
  \path [draw,->] (0,0)--(7.2,0)   node (yaxis) [above] {$t$};

  \feline{0}{1.5}{\highV}
  \feline{1.5}{3.3}{0}
  \feline{3.3}{4.0}{\highV}
  \feline{4.0}{5.5}{0}
  \fnline{5.5}{7}{\highV}

  \draw (0,0) node[below] {$t_k$};

  \draw[dashed] (1.5,-0.4) -- ++(0,{\highV+0.6});
  \node[below] at (0.8,0) {$A$};

  \draw[dashed] (3.0,-0.4) -- ++(0,{\highV+0.6});
  \node[below] at (2.4,0) {$B$};

  \draw[dashed] (4.2,-0.4) -- ++(0,{\highV+0.6});
  \node[below] at (3.6,0) {$C$};

  \draw[dashed] (5.5,-0.4) -- ++(0,{\highV+0.6});
  \node[below] at (5,0) {$D$};

  \node[below] at (6.4,0) {$A$};

  \draw[<->] (1.5,0.15) -- (3.3,0.15) node[midway,above] {$u$};
  \draw[<->] (3.3,0.15) -- (4.0,0.15) node[midway,above] {$x$};
  \draw[<->] (4.0,0.15) -- (5.5,0.15) node[midway,above] {$y$};

  \path [draw,->] (0,\offV)--(0,{\offV+\highV+0.2}) node (xaxis) [above,xshift=-15pt] {$c^*(s_{OR})(t)$};
  \path [draw,->] (0,\offV)--(7.2,\offV) node (yaxis) [above] {$t$};

  \feline{0}{0.8}{\offV}
  \feline{0.8}{4.5}{{\offV+\highV}} 
  \feline{4.5}{6.3}{\offV} 
  \fnline{6.3}{7.0}{{\offV+\highV}}

  \draw (0,\offV) node[below] {$t_k$};

  \draw[dashed] (2.4,{\offV-0.4}) -- ++(0,{\highV+0.6});
  \node[below] at (1.5,\offV) {$E$};

  \draw[dashed] (2.8,{\offV-0.4}) -- ++(0,{\highV+0.6});
  \node[below] at (2.6,\offV) {$F$};

  \draw[dashed,->] (0,-0.5) -- (0.8,-1.1)
    node[pos=0.2,right] {$\delta_0 = \delta_\infty = \delta_1$}; 
  \draw[dashed,->] (1.5,-0.5) -- (2.3,-1.1);
  \draw[dashed,->] (3.3,-0.5) -- (2.3,-1.1)
    node[midway,right] {$\delta_2$};
  \draw[dashed,->] (4.0,-0.5) -- (4.5,-1.1)
    node[midway,right] {$\delta_3$};
  \draw[dashed,->] (5.5,-0.5) -- (6.3,-1.1)
    node[midway,right] {$\delta_4=\delta_\infty$};

\end{tikzpicture}
\caption{Case 2.1: Input and Output of channel~$c^*$ in circuit~$C_\mathrm{NF}$ if
     phase~$C$ contains a pulse.}\label{fig:non-const:1b} 
\end{figure}
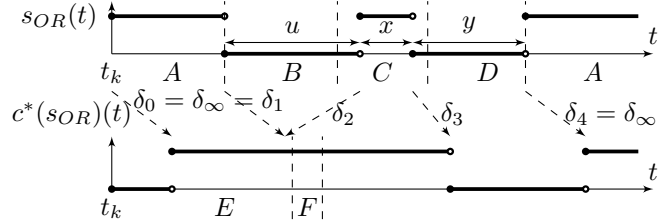

\noindent {\bf Case~2.2.} For this case, circuit~$C_\mathrm{NC}$ depicted in
     \figref{fig:nc} solves bounded SPF.
The algorithm and its proof rest on the following idea: We 
   first show in Lemma~\ref{lem:pulse_window}
that every channel~$c^*$ whose $\delta$ is in accordance with Case~2.2
does not produce pulses of length
within the non-zero interval~$[\max(0,\delta^--\delta_{\inf}),\delta_\infty-\delta_{\inf})$.
The remaining part of circuit~$C_\mathrm{NC}$ thus just has to
filter out all pulses with duration 
     less than~$\max(0,\delta^--\delta_{\inf})$ (ensured by the 
AND gate) and continuously hold
     all pulses of length~$\delta_\infty-\delta_{\inf}$ (done by
the OR gate).

\begin{lem}\label{lem:pulse_window}
Let $c^*$ be a non-constant-delay non-forgetful channel chosen in
     accordance to Case~2.2.
If the channel's input signal is a pulse, then its output signal is
     either~$0$ or a pulse whose length is not
     within the non-zero interval~$[\max(0,\delta^--\delta_{\inf}),\delta_\infty-\delta_{\inf}]$. 
\end{lem}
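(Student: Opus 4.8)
The plan is to run the non-forgetful channel algorithm on a pulse by hand and read off the only two possible output shapes. Fix a pulse of length $\Delta>0$ that rises at time $S$ and falls at time $S+\Delta$; we may assume $S>0$ (the case $S=0$ is analogous). Since the initial value of $c^*$ and the initial value of the pulse agree (both $0$), the channel's input list is $(-\infty,0),(S,1),(S+\Delta,0)$. Processing the rising event gives $T_1=\infty$, hence $\delta(T_1)=\delta_\infty$, so afterwards the output list is $(-\infty,0),(S+\delta_\infty,1)$ with $r=S+\delta_\infty$. Processing the falling event gives $T_2=\Delta-\delta_\infty>-\delta_\infty$ and $\delta(T_2)=\delta(\Delta-\delta_\infty)$, a value in $[\delta_{\inf},\delta_\infty]$ by Lemma~\ref{lem:char:delaybounds}; the algorithm appends $(S+\Delta+\delta(\Delta-\delta_\infty),0)$ exactly when $\Delta+\delta(\Delta-\delta_\infty)>\delta_\infty$, and otherwise deletes the rising event, producing the constant-$0$ signal. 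As the input list is finite and, in the first case, the two non-virtual surviving events have strictly positive times, the final cleanup of Definition~\ref{def:shchannel:nonforget} removes nothing. Hence $c^*(s)$ is either the zero signal, or a pulse of length $L=\Delta+\delta(\Delta-\delta_\infty)-\delta_\infty>0$.

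I would then substitute $u:=\Delta-\delta_\infty$, which ranges over $(-\delta_\infty,\infty)$, so that $L=g(u)$ with $g(u):=u+\delta(u)$; since $\delta$ is nondecreasing, $g$ is strictly increasing. From the Case~2.2 hypotheses I would extract two one-sided facts about $\delta$ at its discontinuity $-\delta_{\inf}$: for $u>-\delta_{\inf}$ one has $\delta(u)=\delta_\infty$, and for $u<-\delta_{\inf}$ monotonicity gives $\delta(u)\le\delta^-$, where $\delta^-$ is the left limit of $\delta$ at $-\delta_{\inf}$, i.e.\ the supremum of $\delta$ strictly below $-\delta_{\inf}$.

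The gap property then drops out. If $u>-\delta_{\inf}$, then $L=u+\delta_\infty>-\delta_{\inf}+\delta_\infty=\delta_\infty-\delta_{\inf}$, so the output pulse is longer than the whole interval. If $u<-\delta_{\inf}$, then $L=u+\delta(u)<-\delta_{\inf}+\delta^-=\delta^--\delta_{\inf}$; since an actual pulse has $L>0$, this in particular forces $\delta^--\delta_{\inf}>0$, whence $L<\max(0,\delta^--\delta_{\inf})$ and the output pulse is shorter than the interval. In both cases $L\notin[\max(0,\delta^--\delta_{\inf}),\delta_\infty-\delta_{\inf})$, which is exactly the gap used in the sequel (and matches the informal statement just before the lemma).

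The single delicate point, which I expect to be the only real obstacle, is the boundary length $\Delta=\delta_\infty-\delta_{\inf}$: there $T_2=-\delta_{\inf}$ sits exactly on the jump of $\delta$, so $L=\delta(-\delta_{\inf})-\delta_{\inf}$ depends on the value $\delta$ takes at its discontinuity. Taking $\delta$ right-continuous at $-\delta_{\inf}$ (so $\delta(-\delta_{\inf})=\delta_\infty$) gives $L=\delta_\infty-\delta_{\inf}$, the excluded right endpoint of the half-open gap, which is the form actually needed to build circuit~$C_\mathrm{NC}$; I would accordingly state the lemma with the half-open interval. Everything else is the routine bookkeeping of a single run of the channel algorithm.
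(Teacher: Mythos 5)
Your proof is correct and follows essentially the same route as the paper's: run the non-forgetful algorithm on the pulse, obtain the output length $\Delta+\delta(\Delta-\delta_\infty)-\delta_\infty$, and use the jump of $\delta$ at $-\delta_{\inf}$ to exclude the gap of lengths. Your observation that the interval must be half-open at $\delta_\infty-\delta_{\inf}$ is also right\dash---the paper's own proof only establishes length $\geq\delta_\infty-\delta_{\inf}$ in the upper case, and its informal discussion preceding the lemma already states the half-open version.
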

\begin{proof}
Assume that $\delta(-\delta_{\inf}) = \delta_\infty$; the
     proof for the case $\delta(-\delta_{\inf})=\delta^- < \delta_\infty$ is
     almost the same. 
Without loss of generality, assume that the input pulse
starts at
     time~$0$ and let~$x>0$ be its length.
Clearly, the transition of the output signal to~$1$ is scheduled at
     time~$\delta_\infty$, the transition back to~$0$ is scheduled at
     time~$x+\delta(x-\delta_\infty)$.
We distinguish two cases for the input pulse length~$x$: 

\smallskip

In case $x < \delta_\infty-\delta_{\inf}$, we 
have $\delta(x-\delta_\infty)\leq\delta^-$ and the following two sub-cases:
If additionally $x \leq \delta_\infty-\delta^-$, then
$x+\delta(x-\delta_\infty) \leq x+\delta^- \leq \delta_\infty$, 
so the output events cancel. If $\delta_\infty-\delta_{\inf} > x > \delta_\infty-\delta^-$,
the length of the output pulse is $x+\delta(x-\delta_\infty) - \delta_\infty < \delta^- - 
\delta_{\inf}$. This confirms the lower boundary of the ``forbidden pulse length interval'' 
given in our lemma.
In case of $x \ge \delta_\infty-\delta_{\inf}$, on the other hand,
$\delta(x-\delta_\infty) = \delta_\infty$ 
a pulse with length
$x+\delta(x-\delta_\infty)-\delta_\infty \ge
\delta_\infty-\delta_{\inf}$ 
is generated at the output of $c^*$, which also confirms the 
upper boundary of the interval.
\end{proof}

If we choose the circuit parameters in \figref{fig:nc} according to
$\varepsilon' =
     \max(0,\delta^--\delta_{\inf})$ and $0 < \varepsilon <
     \delta_\infty-\delta_{\inf}-\varepsilon'$, it is not difficult
to show that the resulting circuit $C_\mathrm{NC}$ solves bounded SPF 
in Case~2.2: Properties~(F1) to~(F3) trivially hold for
     circuit~$C_\mathrm{NC}$. To prove (F4), consider that
if the input signal~$i$ is a pulse of
     length~$2\delta_\infty$, the output signal~$s_{c^*(i)}$ of $c^*$ is a
     pulse of length at least~$\delta_\infty$.
Thus, the output of the AND gate $s_{AND}$ is a pulse of length 
at least~$\delta_\infty -
     \varepsilon' > \varepsilon$, resulting in the circuit's output
     $o$ making a transition to~$1$ and remaining~$1$ from there on.

Property~(F5) directly follows from Lemma~\ref{lem:pulse_window}: If
     $s_{c^*(i)}$ is a pulse of length smaller
     than~$\max(0,\delta^--\delta_{\inf}) = \varepsilon'$, then it is
     completely filtered out; $s_{AND}$ and hence $o$ are hence permanently~$0$.
Otherwise, by Lemma~\ref{lem:pulse_window}, $s_{c^*(i)}$ must be a pulse of
     length at least~$\delta_\infty-\delta_{\inf}$.
Thus, $s_{AND}$ is a pulse of length at
     least~$\delta_\infty-\delta_{\inf}-\varepsilon' > \varepsilon$,
which is sufficiently long to be permanently captured in the storage
looped formed by the OR gate. The circuit's output $o$ hence 
makes a transition to~$1$ and remains~$1$ from there~on.

Finally, (F6) is due to bounded channel delays.

\section{Eventual Short-Pulse Filtration with Constant Delays}\label{sec:eSPF}

We proved that SPF is not solvable with
     constant-delay channels.
In this section, we consider the weaker eventual SPF problem, which
     drops the ``no short pulses'' requirement (F5) and replaces it
     with its eventual analogon (F5e).
We show that eventual SPF is solvable using only constant-delay channels.
More specifically, we prove that
     circuit~$C_\mathrm{ev}$ in \figref{fig:Cev} solves eventual SPF.
The circuit contains a delay parameter~$\alpha$, which we will choose to be a
positive irrational like $\alpha=\sqrt{2}$.

We will show that the circuit's output is eventually stable at~$1$ whenever the input is
a pulse of positive length.
We derive a bound on this stabilization time
     in terms of the input pulse length~$\Delta$.
The bound is almost linear in~$1/\Delta$: It is in the order of~$O(\Delta^{-1 -
\varepsilon})$ for all~$\varepsilon>0$.

\begin{figure}[tb]
\centering
\begin{tikzpicture}[circuit logic US, scale=1.5, large circuit symbols]
\matrix[column sep=12mm]
{
\node (i) {$i$}; & & \node [or gate,inputs={nnn}] (or) {OR}; &  \\
};

\draw (i) -- ++(1.4,0) node [pos=0.2,above] {$\delta=1$} node [pos=0.2,below] {$x=0$} |- (or.input 2);
\draw (or.output) 
	-- ++(right:0.1)node[circle,inner sep=1pt,fill=black,draw] {}
	-- ++(down:0.4)
	-- ++(left:1.6) node[pos=0.7, above] {$\delta=1$} node[pos=0.7,below] {$x=0$}
	-- ++(up:0.1)
	|- (or.input 3)
	;
\draw (or.output) 
	-- ++(right:0.1)node[circle,inner sep=1pt,fill=black,draw] {}
	-- ++(up:0.4)
	-- ++(left:1.6) node[pos=0.7, above] {$\delta=\alpha$} node[pos=0.7,below] {$x=0$}
	-- ++(down:0.1)
	|- (or.input 1)
	;
\draw (or.output)
	-- ++(1.3,0) node [midway,above] {$\delta=1$} node[midway,below] {$x=0$} 
	node[right] {$o$}
	;
\end{tikzpicture}
\caption{Circuit $C_\mathrm{ev}$ solving eventual SPF.}
\label{fig:Cev}
\end{figure}

The measure
     points of circuit~$C_\mathrm{ev}$ for time~$t$ are of the form $t
     - (\alpha k + \ell) - 2$, where $k$ and $\ell$ are
     nonnegative integers. 
We can hence characterize the circuit's behavior with the following obvious lemma.

\begin{lem}\label{lem:ev:exec}
In every execution~$(s_v)$ of circuit~$C_\mathrm{ev}$, the following are equivalent:
(i) $s_o(t)=1$, and (ii) there exist nonnegative integers~$k$ and~$\ell$ such that
$ s_i\big( t - (\alpha k + \ell) -2 \big)=1$. 
\end{lem}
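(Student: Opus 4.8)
The plan is to exploit that $C_\mathrm{ev}$ contains only constant-delay channels and a single, monotone OR gate, so the output value at time $t$ is literally a disjunction of the input values at the measure points for $t$ (those measure points are the ones identified in the paragraph preceding the lemma). First I would write down the signal equations. Writing $s_{OR}$ for the signal at the OR gate, Eq.~\eqref{eq:p:channel} gives $s_o(t)=s_{OR}(t-1)$ for $t\ge 1$ (and $s_o(t)=0$ for $t<1$); moreover, for $t\ge\max(1,\alpha)$ all three incoming channels of the OR gate are past their delay, so
\[ s_{OR}(t) \;=\; s_i(t-1)\ \vee\ s_{OR}(t-1)\ \vee\ s_{OR}(t-\alpha), \]
while $s_{OR}(t)=0$ for every $t<\min(1,\alpha)$, since then each incoming channel still outputs its initial value $0$. (Existence and uniqueness of the execution is guaranteed by Lemma~\ref{lem:executions}, but the statement is claimed for every execution and only uses these identities.)

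For the direction (ii)$\Rightarrow$(i), assume $s_i\big(t-(\alpha k+\ell)-2\big)=1$ for some nonnegative integers $k,\ell$; the argument must be nonnegative for this to be meaningful, i.e.\ $t\ge\alpha k+\ell+2$. The input channel then forces $s_{OR}\big(t-\alpha k-\ell-1\big)=1$, because this channel feeds the OR gate the value $s_i(\cdot-1)$ and OR dominates each input. Since the two feedback channels likewise make OR dominate $s_{OR}(\cdot-1)$ and $s_{OR}(\cdot-\alpha)$, the implication $s_{OR}(\tau)=1\Rightarrow s_{OR}(\tau+1)=1$ and $s_{OR}(\tau+\alpha)=1$ holds for all $\tau\ge 0$. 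Applying the $+1$ step $\ell$ times and the $+\alpha$ step $k$ times yields $s_{OR}(t-1)=1$, hence $s_o(t)=1$.

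For the converse (i)$\Rightarrow$(ii), from $s_o(t)=1$ we get $s_{OR}(t-1)=1$, and I would trace backwards: whenever $s_{OR}(\tau)=1$, the OR-gate equation shows that at least one of $s_i(\tau-1)$, $s_{OR}(\tau-1)$, $s_{OR}(\tau-\alpha)$ equals $1$; in the first case stop, in the other two replace $\tau$ by $\tau-1$ resp.\ $\tau-\alpha$ and repeat. Each step decreases $\tau$ by at least $\min(1,\alpha)>0$, whereas $s_{OR}$ vanishes on $[0,\min(1,\alpha))$; hence the process terminates, and it can only terminate in the first case — say after $\ell$ steps of size $1$ and $k$ steps of size $\alpha$. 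The accumulated shift is then $1+\ell+\alpha k$, so $s_i\big((t-1)-(\ell+\alpha k)-1\big)=s_i\big(t-(\alpha k+\ell)-2\big)=1$, which is (ii).

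No step is deep; the only points requiring care are the boundary behaviour of the channels near $t=0$ (their initial value $0$ is precisely what makes the backward trace well-founded) and the bookkeeping that turns the accumulated shift $1+\ell+\alpha k$ into the stated measure point $t-(\alpha k+\ell)-2$. As an alternative to the backward induction one can instead invoke Lemma~\ref{lem:children}: the dependence graph $DG(t)$ of $C_\mathrm{ev}$ has as intermediate vertices exactly the copies $(\mathrm{OR},\,1+\alpha k+\ell)$ of the OR gate with $1+\alpha k+\ell\le t$ (obtained by unrolling the two self-loops of delays $1$ and $\alpha$), so its input leaves are the $(i,\,2+\alpha k+\ell)$ and its only other leaves are constant-$0$ channel initial values; evaluating this tree of ORs bottom-up gives $s_o(t)=\bigvee_{k,\ell} s_i\big(t-2-\alpha k-\ell\big)$ directly, which is the claim.
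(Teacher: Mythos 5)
Your proposal is correct and amounts to the argument the paper has in mind: the paper gives no explicit proof, declaring the lemma ``obvious'' from the computation of the measure points $t-(\alpha k+\ell)-2$ of $C_\mathrm{ev}$, which is exactly your dependence-graph alternative, while your primary forward/backward trace on the OR recurrence is just the hands-on version of evaluating that tree of ORs with all channel initial values equal to $0$. The two points you flag as needing care (well-foundedness of the backward trace because $s_{OR}$ vanishes on $[0,\min(1,\alpha))$, and the $+2$ bookkeeping from the two unit delays on either side of the OR gate) are indeed the only nontrivial ones, and you handle both correctly.
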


We may restrict our considerations to input pulses starting at time~$0$.
In the following, let the input signal~$s_i$ be a pulse of length~$\Delta>0$.
We are looking for the {\em stabilization time}, which is the minimal time $T=T(\Delta)$ such
     that, for all $t\geq T$, we have $s_o(t)=1$.

To prove finiteness and effective bounds on the stabilization time, we relate
it to the number-theoretic concept of {\em discrepancy\/} of the sequence
$(\alpha n)$ modulo~$1$ (see, e.g., \cite{DT97}).
The discrepancy compares the number of sequence elements in a given interval
with their expected number if the elements were uniformly distributed.

For a given nonempty subinterval $(x,y]$ of $(0,1]$ and a given positive integer~$N$, let
$A(x,y;N)$ denote the number of $\alpha n$'s with $n\leq N$ that lie in the
interval modulo~$1$: 
$\alpha n\in (x,y] + \mathds{Z}$.
The expected number of such $\alpha n$'s is $(y-x)N$.
The discrepancy $D_N(\alpha)$ is then defined as the maximum difference
between $A(x,y;N)$ and $(y-x)N$, formed over all nonempty subintervals $(x,y]$ of
$(0,1]$.

It is well-known that $D_N(\alpha) / N\to 0$ if and only if $\alpha$
     is irrational.
Also, if~$\alpha$ has a bounded continued fraction expansion, then
     $D_N(\alpha) = O(\log N)$ and the constant can be computed~\cite{schoi93}.
This is, in particular, true for~$\alpha=\sqrt{2}$.

\begin{lem}\label{lem:k:and:t}
Let~$K=K(\Delta)$ be the least integer~$K$ such that for all
     real~$t$ there exists an integer $k$, $0\leq k\leq K$,
     with $\alpha k\in(t-\Delta,t]+\mathds{Z}$.
Then, 
$T(\Delta)
     \leq \alpha\cdot K(\Delta) + \Delta + 2$. 
\end{lem}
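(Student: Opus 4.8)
The plan is to combine Lemma~\ref{lem:ev:exec} with the explicit form of the input signal and then to chase definitions. Since we restrict to an input pulse of length~$\Delta$ starting at time~$0$, we have $s_i(\tau)=1$ exactly when $\tau\in[0,\Delta)$. Hence, by Lemma~\ref{lem:ev:exec}, for any time~$t$ the equality $s_o(t)=1$ holds if and only if there exist nonnegative integers~$k,\ell$ with $t-(\alpha k+\ell)-2\in[0,\Delta)$, i.e., with $\alpha k+\ell\in\bigl(t-2-\Delta,\,t-2\bigr]$. So I would reduce the lemma to the following claim: for every $t\ge\alpha K(\Delta)+\Delta+2$ such a pair of nonnegative integers exists.

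To prove this claim, fix such a~$t$ and apply the defining property of $K=K(\Delta)$ to the real number $t'=t-2$: there is an integer~$k$ with $0\le k\le K$ and $\alpha k\in(t'-\Delta,\,t']+\mathds{Z}$, that is, an integer~$m$ with $\alpha k+m\in\bigl(t-2-\Delta,\,t-2\bigr]$. It remains to see that~$m$ can play the role of the nonnegative integer~$\ell$. From $\alpha k+m>t-2-\Delta$ we get $m>(t-2-\Delta)-\alpha k$; and since $k\le K$ and $t\ge\alpha K+\Delta+2$ we have $(t-2-\Delta)-\alpha k\ge\alpha K-\alpha k\ge0$, so $m>0$. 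Thus $\ell:=m$ is a nonnegative integer and the pair $(k,\ell)$ witnesses $s_o(t)=1$ by Lemma~\ref{lem:ev:exec}. Since this holds for all $t\ge\alpha K(\Delta)+\Delta+2$, we obtain $T(\Delta)\le\alpha K(\Delta)+\Delta+2$.

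There is no real difficulty here: the whole argument is a translation of the combinatorial statement defining $K(\Delta)$ into the signal language via Lemma~\ref{lem:ev:exec}. The only points that need a moment's attention are the additive shift by~$2$ --- which accounts for the unit-delay channel on the input wire and the unit-delay channel on the output wire of circuit~$C_\mathrm{ev}$ --- and checking that the integer~$m$ supplied by the definition of~$K$ is forced to be nonnegative; the latter is precisely what the slack term $\alpha K(\Delta)$ in the bound is there to guarantee. The substantive work, namely deriving an effective upper bound on $K(\Delta)$ itself from the discrepancy $D_N(\alpha)$, is deferred to the following lemmas.
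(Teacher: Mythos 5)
Your proof is correct and follows essentially the same route as the paper's: apply the defining property of $K(\Delta)$ at $t-2$ to obtain the pair $(k,m)$, then use $t\ge\alpha K+\Delta+2$ and $k\le K$ to force the integer shift to be nonnegative so that Lemma~\ref{lem:ev:exec} applies. The only (immaterial) difference is that the paper explicitly notes the statement is trivial when no finite $K$ exists.
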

\begin{proof}
The lemma is trivial if~$K=\infty$, so assume the contrary.

Let $t\geq\alpha K + \Delta +2$.
By the definition of~$K$, there exists a~$k$ with $0\leq k\leq K$ and an
     $\ell$ such that $ t - \Delta - \ell -2 < \alpha k
     \leq t - \ell -2 $, which is equivalent to $0 \leq t -
     (\alpha k + \ell) -2 < \Delta$.

By Lemma~\ref{lem:ev:exec}, it remains to prove that $\ell$ is nonnegative.
The inequality  $t - (\alpha k + \ell) -2 < \Delta$ is equivalent to
     $\ell>t-\Delta-\alpha k - 2$.
Noting $-\alpha k \geq - \alpha K$ and $t\geq\alpha K + \Delta+2$ shows
     $\ell>0$ and concludes the proof.
\end{proof}




\begin{lem}\label{lem:d:and:k}
Let~$0<\Delta \leq1$.
If $D_N(\alpha) / N < \Delta/2$, then $K(\Delta)\leq N$.
\end{lem}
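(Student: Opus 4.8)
The plan is to unwind the definition of $K(\Delta)$ and reduce the claim to a lower bound on the count $A(x,y;N)$ of points $\alpha n$, $n \le N$, landing in a window of length $\Delta$ modulo~$1$. Recall $K(\Delta)$ is the least $K$ such that every real $t$ admits an integer $k$ with $0 \le k \le K$ and $\alpha k \in (t-\Delta,t] + \mathds{Z}$. So to prove $K(\Delta) \le N$ it suffices to show that for \emph{every} real $t$ there is some $k$ with $0 \le k \le N$ and $\alpha k \in (t-\Delta,t]+\mathds{Z}$; equivalently, writing the window as the half-open interval $(x,y]$ with $y-x = \Delta$ (reducing $t$ modulo~$1$ so that $x,y$ can be taken in a fundamental domain, splitting into at most two subintervals of $(0,1]$ if the window wraps around), that $A(x,y;N) \ge 1$.

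First I would invoke the discrepancy bound: by definition of $D_N(\alpha)$, for any subinterval $(x,y]$ of $(0,1]$ we have $|A(x,y;N) - (y-x)N| \le D_N(\alpha)$, hence $A(x,y;N) \ge (y-x)N - D_N(\alpha)$. For a window of length $\Delta$ this gives $A \ge \Delta N - D_N(\alpha)$. The hypothesis $D_N(\alpha)/N < \Delta/2$ means $D_N(\alpha) < \Delta N/2$, so $A \ge \Delta N - \Delta N /2 = \Delta N / 2 > 0$, and since $A$ is an integer, $A \ge 1$. Thus the required $k$ exists. A small point to handle carefully: $k=0$ gives $\alpha k = 0$, which corresponds to a point at the boundary; the counting function $A(x,y;N)$ as defined in the paper ranges over $n \le N$ and I should check whether it includes $n=0$ or starts at $n=1$ — if it starts at $n=1$ then the argument above produces a $k$ with $1 \le k \le N \le N$, which is still in the allowed range $0 \le k \le N$, so this is harmless.

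The only genuinely delicate step is the wrap-around: a length-$\Delta$ window $(t-\Delta,t]$ reduced mod~$1$ may straddle the point $0 \equiv 1$, in which case it is a union of two intervals $(x,1]$ and $(0,y']$ with total length $\Delta$. Then $A$ over the union is the sum of the two counts, and applying the discrepancy bound to each piece gives $A \ge \Delta N - 2 D_N(\alpha)$, which would need $D_N(\alpha)/N < \Delta/4$ — too weak. To avoid this I would instead observe that one can always choose the representative of the window so that it does \emph{not} wrap: since we only need \emph{some} $k \in \{0,\dots,N\}$ with $\alpha k$ in the window mod~$1$, and the window has length $\Delta \le 1$, it is an honest subinterval of $(x, x+\Delta]$ for a suitable real $x$, and the condition ``$\alpha k \in (t-\Delta,t] + \mathds{Z}$'' is exactly ``$\alpha k \bmod 1 \in (x,y]$'' for the unique sub-arc; applying the single-interval discrepancy bound (extending $A$'s definition to arcs of the circle, which is standard and what $D_N$ controls) then closes the gap with the factor $2$ to spare, matching the stated hypothesis $D_N(\alpha)/N < \Delta/2$. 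This is the step I expect to require the most care in the write-up, though it is conceptually routine.
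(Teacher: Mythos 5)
Your overall strategy is the same as the paper's: reduce $K(\Delta)\leq N$ to showing that every length-$\Delta$ window contains some $\alpha k$ modulo~$1$ with $k\leq N$, and get this from the discrepancy bound. The non-wrapping case is handled correctly. The problem is your treatment of the wrap-around case, and it occurs in two places. First, your claim that splitting the window into two subintervals of $(0,1]$ ``would need $D_N(\alpha)/N<\Delta/4$'' is a miscalculation: you only need the count to be \emph{positive}, not to be at least $\Delta N/2$. The split gives $A\geq \Delta N-2D_N(\alpha)$, and the hypothesis $D_N(\alpha)<\Delta N/2$ yields $A>0$, hence $A\geq 1$. This is exactly the paper's proof in contrapositive form: if both pieces $(x,y]$ and $(z,u]$ had count zero, then $D_N(\alpha)\geq(y-x)N$ and $D_N(\alpha)\geq(u-z)N$, so $2D_N(\alpha)\geq\Delta N$, contradicting the hypothesis. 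The factor $2$ in the lemma's hypothesis is there precisely to absorb the two-interval split, so the route you abandoned as ``too weak'' is the intended one.

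Second, the workaround you substitute is not justified as written. You cannot ``choose the representative of the window so that it does not wrap'': the definition of $D_N(\alpha)$ fixes the fundamental domain $(0,1]$, and whether the reduced window $(t-\Delta,t]+\mathds{Z}$ straddles the point $0\equiv 1$ is determined by $t\bmod 1$ and $\Delta$, not by a choice of representative. The fact you are implicitly appealing to --- that $D_N(\alpha)$ also controls the deviation of counts over wrap-around arcs --- is true, but it requires an argument you do not give, namely complementation: the complement of a wrapping arc in $(0,1]$ is a single honest interval, and since the count and the expectation are both complementary to $N$, the arc inherits the bound $D_N(\alpha)$ from its complement. Either supply that one-line complement argument (which would in fact prove the lemma under the weaker hypothesis $D_N(\alpha)/N<\Delta$), or revert to the two-interval split, which works as-is. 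As written, the proof has a gap at its one genuinely delicate step.
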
 
\begin{proof} Suppose the contrary, i.e., that there exists a real~$t$
such that, for all $n\leq N$, we have $\alpha n \not\in
     (t-\Delta,t]+\mathds{Z}$.
Let $0<x<y\leq z< u\leq 1$ such that we can decompose the interval 
$(t - \Delta, t] +
     \mathds{Z} = \big( (x,y] + \mathds{Z} \big) \cup \big( (z,u] + \mathds{Z} \big)$
modulo~$1$.
None of the two intervals~$(x,y]$ and~$(z,u]$ contains an $\alpha n$ modulo~$1$
with~$n\leq N$.
Hence $A(x,y;N)=A(u,z;N)=0$, which implies $2 D_N(\alpha) \geq (y - x)N
     + (u - z)N = \Delta N$, a contradiction.
\end{proof}

\begin{thm}\label{thm:ev}
Circuit~$C_\mathrm{ev}$ solves eventual SPF if~$\alpha$ is irrational.
If~$\alpha=\sqrt{2}$, the stabilization time satisfies
$T(\Delta) = O(\Delta^{ -1-\varepsilon})$
as~$\Delta\to0$
for all~$\varepsilon >0$.
\end{thm}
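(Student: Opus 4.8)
The plan is to check the five defining conditions of eventual SPF for $C_\mathrm{ev}$, assembling the already-proved Lemmas~\ref{lem:ev:exec}, \ref{lem:k:and:t} and~\ref{lem:d:and:k} with the classical equidistribution and discrepancy facts recalled above, and then to read off the quantitative stabilization bound for $\alpha=\sqrt2$. The structural and easy conditions I would dispatch first: (F1) is immediate from \figref{fig:Cev}; since every channel of $C_\mathrm{ev}$ is a constant-delay channel, Lemma~\ref{lem:executions} provides a unique execution extending any input signal, which gives (F2); for (F3), if $s_i\equiv 0$ then Lemma~\ref{lem:ev:exec} shows that no time $t$ satisfies $s_o(t)=1$, so $s_o\equiv 0$; and for (F4), taking $s_i$ to be a pulse of some length $\Delta>0$ starting at time~$0$, Lemma~\ref{lem:ev:exec} applied with $k=\ell=0$ gives $s_o(2)=1$, so the output is not the zero signal.

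The core is the stabilization claim, from which (F5e) follows at once: for irrational $\alpha$ and every $\Delta>0$ the stabilization time $T(\Delta)$ is finite. If $\Delta\ge 1$, then every length-$\Delta$ window modulo~$1$ contains $\alpha\cdot 0$, so $K(\Delta)=0$; otherwise $0<\Delta\le 1$, and since $\alpha$ is irrational we have $D_N(\alpha)/N\to 0$, so some $N$ satisfies $D_N(\alpha)/N<\Delta/2$, whence $K(\Delta)\le N<\infty$ by Lemma~\ref{lem:d:and:k}. In either case Lemma~\ref{lem:k:and:t} gives $T(\Delta)\le\alpha K(\Delta)+\Delta+2<\infty$. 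Consequently, in every execution whose input is a pulse of length $\Delta$, the output signal equals $1$ for all $t\ge T(\Delta)$ and hence contains no pulse whatsoever after time $T(\Delta)$; this establishes (F5e), with $\varepsilon$ arbitrary and threshold $T(\Delta)$. This proves that $C_\mathrm{ev}$ solves eventual SPF for every irrational $\alpha$.

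For $\alpha=\sqrt2$ it remains, by Lemma~\ref{lem:k:and:t}, to bound $K(\Delta)$. Fix $\varepsilon>0$ and set $N=N(\Delta)=\lceil\Delta^{-1-\varepsilon}\rceil$. Since $\sqrt2$ has a bounded continued fraction expansion, $D_N(\sqrt2)=O(\log N)$, so $D_N(\sqrt2)/N=O\!\big(\Delta^{1+\varepsilon}\log(1/\Delta)\big)$; because $\Delta^{\varepsilon}\log(1/\Delta)\to 0$ as $\Delta\to 0$, this is smaller than $\Delta/2$ for all sufficiently small $\Delta$. Lemma~\ref{lem:d:and:k} then yields $K(\Delta)\le N\le\Delta^{-1-\varepsilon}+1$, and Lemma~\ref{lem:k:and:t} gives $T(\Delta)\le\alpha(\Delta^{-1-\varepsilon}+1)+\Delta+2=O(\Delta^{-1-\varepsilon})$ as $\Delta\to0$. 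The only step needing genuine care is this last one --- picking the exponent of $N$ just above $1$ so that the $O(\log N)$ discrepancy still undercuts $\Delta N$ --- and even that is routine once the cited bound $D_N(\sqrt2)=O(\log N)$ is invoked; everything else is bookkeeping on top of the lemmas already established.
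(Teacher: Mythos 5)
Your proposal is correct and follows essentially the same route as the paper: verify (F1)--(F4) directly, reduce (F5e) to finiteness of $T(\Delta)$ via Lemmas~\ref{lem:ev:exec}, \ref{lem:k:and:t} and~\ref{lem:d:and:k} together with $D_N(\alpha)/N\to 0$ for irrational $\alpha$, and obtain the quantitative bound from $D_N(\sqrt2)=O(\log N)$ by choosing $N\approx\Delta^{-1-\varepsilon}$ (the paper solves for the threshold $N$ with explicit constants $C_1,C_2$, which is the same computation). Your explicit treatment of the edge case $\Delta\ge 1$, which Lemma~\ref{lem:d:and:k} excludes, is a small tidiness bonus but not a different argument.
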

\begin{proof}
(F1) and (F2) are obviously fulfilled.
Because all initial values of channels are~$0$, also (F3) holds.
Because $D_N(\alpha)/N\to0$ whenever~$\alpha$ is irrational,
for all~$\Delta>0$, there exists some~$N$ such that $D_N(\alpha)/N < \Delta/2$.
Hence Lemma~\ref{lem:d:and:k} and Lemma~\ref{lem:k:and:t} show that~$T(\Delta)$
is finite, which shows (F4) and (F5e).

We now prove the bound on the stabilization time.
Let $\gamma = - 1 - \varepsilon < -1$.
There exists a~$C_1>0$ such that $D_N(\alpha) \leq C_1 \log N$.
Because $ 1+ 1/\gamma > 0$, there exists a~$C_2>0$ such that $\log N <
C_2  N^{1 + 1/\gamma}$.
Thus if 
\[N\geq \left (\frac{\Delta}{2C_1C_2}\right)^{  \gamma }\]
then
\[ \frac{D_N(\alpha)}{N} \leq \frac{C_1\log N}{N} < C_1C_2 N^{ 1/\gamma } \leq \frac{\Delta}{2} \enspace,  \]
which, by Lemma~\ref{lem:d:and:k}, implies
\begin{equation*}
K(\Delta) \leq  \left( \frac{\Delta}{2C_1C_2} \right)^{\gamma } + 1 
\end{equation*}
for all~$0<\Delta\leq1$.
That is, $K(\Delta) = O(\Delta^{\gamma})$ as~$\Delta\to0$.

It is easy to see that $K(\Delta)\to\infty$ as $\Delta\to0$.
Hence Lemma~\ref{lem:k:and:t} implies~$T(\Delta) = O( K(\Delta) )$ as~$\Delta\to 0$ as
asserted.
\end{proof}

\section{Conclusion}\label{sec:conc}

We showed that binary circuit models using bounded single-history
     channels, hence all binary models known to date, fail to 
faithfully model glitch propagation: In case of constant-delay
channels, SPF turned out to be unsolvable, which is
     in contradiction to physical reality.
In case of non-constant-delay channels, even bounded SPF is
     solvable, again in contradiction to physical reality.
Future binary models aiming at faithful glitch propagation modeling hence
     cannot have the bounded single-history property.

We hope that our results provide a signpost for future research on
     adequate binary circuit models:
As confirmed by the fact that the weaker eventual
SPF problem is already solvable with constant-delay channels, 
SPF is well suited for capturing the peculiarities of glitch 
propagation while not being overly restrictive. Moreover,
in the proofs of our core results,
we actually used weaker properties
than actually guaranteed by single-history channels. It may
hence be possible to re-use part of those for alternative weaker
channel models.

%
%

%
%

\end{document}